\begin{document}

\pagenumbering{gobble}
\title{An Improved Classical Singular Value Transformation \\
for Quantum Machine Learning}
\author{
Ainesh Bakshi \\
\texttt{ainesh@mit.edu} \\
MIT
\and
Ewin Tang \\
\texttt{ewint@cs.washington.edu} \\
University of Washington
}
\date{}

\maketitle

\begin{abstract} 
    The field of quantum machine learning (QML) produces many proposals for attaining quantum speedups for tasks in machine learning and data analysis.
    Such speedups can only manifest if classical algorithms for these tasks perform significantly slower than quantum ones.
    We study quantum-classical gaps in QML through the quantum singular value transformation (QSVT) framework.
    QSVT, introduced by Gily\'en, Su, Low and Wiebe~\cite{gslw18}, unifies all major types of quantum speedup~\cite{MartynRTC21}; in particular, a wide variety of QML proposals are applications of QSVT on low-rank classical data.
    We challenge these proposals by providing a classical algorithm that matches the performance of QSVT in this regime up to a small polynomial overhead.
    
    We show that, given a matrix $A \in \mathbb{C}^{m\times n}$, a vector $b \in \mathbb{C}^{n}$, a bounded degree-$d$ polynomial $p$, and linear-time pre-processing, we can output a description of a vector $v$ such that $\|v - p(A) b\| \leq \varepsilon\|b\|$ in $\widetilde{\mathcal{O}}(d^{11} \fnorm{A}^4 / (\varepsilon^2 \norm{A}^4 ))$ time.
    This improves upon the best known classical algorithm~\cite{cglltw19}, which requires $\widetilde{\mathcal{O}}(d^{22} \fnorm{A}^6 /(\varepsilon^6 \norm{A}^6 ) )$ time, and narrows the gap with QSVT, which, after linear-time pre-processing to load input into a quantum-accessible memory, can estimate the magnitude of an entry $p(A)b$ to $\eps\norm{b}$ error in $\widetilde{\mathcal{O}}(d\fnorm{A}/(\varepsilon \norm{A}))$ time.
    %\ewin{Double-check this claim}
    Instantiating our algorithm with different polynomials, we improve on prior classical algorithms for quantum-inspired regression~\cite{cglltw19,gst20}, recommendation systems~\cite{tang18a,cglltw19}, and Hamiltonian simulation~\cite{cglltw19}.

    Our key insight is to combine the \emph{Clenshaw recurrence}, an iterative method for computing matrix polynomials, with sketching techniques to simulate QSVT classically.
    We introduce several new classical techniques in this work, including (a) a \emph{non-oblivious} matrix sketch for approximately preserving bi-linear forms, (b) a new stability analysis for the Clenshaw recurrence, and (c) a new technique to bound arithmetic progressions of the coefficients appearing in the Chebyshev series expansion of bounded functions, each of which may be of independent interest.
    % \ewin{Should we mention sparsification beyond $n/\eps^2$?}
\end{abstract}

\thispagestyle{empty}

\clearpage

\microtypesetup{protrusion=false}
\tableofcontents{}
\thispagestyle{empty}
\microtypesetup{protrusion=true}

\clearpage

\pagestyle{plain}
\pagenumbering{arabic}

%!TEX root = main.tex

\section{Introduction}

Quantum machine learning (QML) has rapidly developed into an active field of study with numerous proposals for speeding up machine learning tasks with quantum computers~\cite{DunjkoWittek20,chiprsw18}.
These proposals include quantum algorithms for basic tasks in machine learning, including regression~\cite{hhl09}, perceptron learning~\cite{wks16}, support vector machines~\cite{rml14}, recommendation systems~\cite{kerenidis2016QRecSys}, and semi-definite programming~\cite{bkllsw19}.
A central goal of QML is to demonstrate a problem on which quantum computers obtain a substantial practical speedup over classical computers.
A successful resolution of this goal would provide compelling motivation to invest more resources into developing scalable quantum computers (i.e.\ be a \emph{killer app} for quantum computing).

The quantum singular value transformation (QSVT) framework~\cite{LowC17,cgj18,gslw18} uses ideas from signal processing to unify algorithm design for quantum linear algebra and, by extension, QML.
This framework has been called a ``grand unification'' of quantum algorithms as it captures all major classes of quantum advantage~\cite{MartynRTC21}.
With respect to QML, QSVT captures essentially all known techniques in quantum linear algebra, so it will be the focus of our investigation.
QSVT generalizes the simple observation that, for a quantum state $\ket{b} = \sum_i b_i\ket{i} \in \mathbb{C}^n$ encoding the vector $b$ into its amplitudes, a quantum gate corresponds to applying a unitary matrix $U \in \mathbb{C}^{n\times n}$ to the vector, $\ket{b} \mapsto \ket{Ub}$.
By including the non-unitary operation of measurement, we can consider quantum circuits that perform $\ket{b} \mapsto \ket{Ab}$ for general (bounded) matrices $A \in \mathbb{C}^{m\times n}$; such circuits are called \emph{block-encodings}.
The fundamental result of QSVT is that, given a block-encoding of $A$ and a bounded, even or odd, degree-$d$ polynomial $p$, we can form a block-encoding of $p(A)$ with an overhead of $d$.\footnote{When $A$ is asymmetric, we can interpret QSVT as applying the matrix function that applies $p$ to each singular value of $A$ (\cref{def:qsvt}).}
This new quantum circuit can then be applied to a quantum state $\ket{b}$ to get $\ket{p(A)b}$ with probability $\norm{p(A)b}^2$.
In this way, quantum linear algebra algorithms can apply generic functions to the singular values of a matrix efficiently, provided that we have an efficient block-encoding and the function is smooth enough to be approximated well by a low-degree polynomial.

Though QSVT is an immensely powerful tool for quantum linear algebra, it requires having an efficient block-encoding for the input matrix $A$.
These efficient block-encodings do not exist in general, but they do exist for two broad classes of matrices, assuming appropriately strong forms of coherent access: matrices with low sparsity~\cite[Lemma 48]{gslw18} (a typical setting for quantum simulation) and matrices with low stable rank~\cite[Lemma 50]{gslw18} (a typical setting for quantum machine learning).
We treat the latter case; specifically, assuming the existence of \emph{quantum random access memory} (QRAM), a piece of hardware that supports queries to its memory in superposition~\cite{glm08},\footnote{
    Having quantum random access memory is an assumption similar to the classical random access memory assumption in algorithm design, that querying any piece of memory has cost only logarithmic (or constant) in input size.
    It's unclear whether we can attain the effectively constant cost of modern RAM when we want to query memory in superposition, making this assumption somewhat speculative~\cite{jr23}.
    In this paper, we will be fine giving quantum algorithms QRAM assumptions, and these matters will not affect the classical algorithms we present.
} we can process a matrix $A$ given as a list of its entries $A_{i,j}$ into a data structure in linear time such that preparation of a block-encoding of $A/\fnorm{A}$ is efficient (e.g.\ using $\bigO{\log(mn)}$ queries to the QRAM).
Here, $\fnorm{A} = \sqrt{\sum_{i,j} \abs{A_{i,j}}^2}$ denotes the Frobenius norm of $A$.
This type of block-encoding is the one commonly used for quantum linear algebra algorithms on classical data~\cite{kllp19,CdW21}, since it works for arbitrary matrices and vectors, paying only a $\fnorm{A}/\norm{A}$ factor in sub-normalization.

\ewin{What is the speedup attained by QSVT?}
\ainesh{we need to soup up this part, the belief in faster quantum algorithms comes from this perception of problems being hard classically. However, we challenge this assumption by coming up with a non-trivial classical algorithm that shows the gap between classical and quantum is at most quadratic in $1/\eps$, and quartic in stable rank.  }
A natural question then arises: how well can classical computers simulate QSVT?
Sparsity-based QSVT supports universal quantum computation, and therefore cannot be simulated classically in polynomial time unless BPP=BQP.
On the other hand, prior work by Chia, Gily\'en, Li, Lin, Tang and Wang~\cite[Section 6.1]{cglltw19} gives a classical algorithm that outputs a vector $v$ such that $\norm{ p(A)b  -v} \leq \epsilon\norm{b}$ in $\bigO{d^{22}\norm{A}_F^6 /(\epsilon^6 \norm{A}^6)}$ time, after a linear-time pre-processing phase.
More generally, all of the ``quantum-inspired'' algorithms generated through this framework have large polynomial slowdowns, as they require computing a singular value decomposition of a matrix with $\Omega((\frac{\fnorm{A}}{\norm{A} \eps})^2)$ rows and columns, immediately incurring a power-six dependence in $\eps$ and the Frobenius norm of $A$.
In fact, without assuming that the input matrix is strictly low-rank, all prior quantum-inspired algorithms incur a $1/\eps^6$ dependence, except for one for linear regression which incurs an $1/\eps^4$~\cite{gst20}.
\ewin{To be pedantic, technically I think the prior framework could give a $\fnorm{A}^4/\eps^4$ dependence, so I still think this sell is not super great.}
This gives the perception that a large polynomial running time (significantly larger than quartic) to simulate QSVT may be inherent.
In fact, this classical hardness has been conjectured~\cite{kllp19,kerenidis2022quantum}.
Such a conclusion would be significant, as this suggests a potential regime for a practical speedup for many problems in QML.
Therefore, the central question we address in this work is as follows: 

\begin{center}
    \textit{How large is the quantum-classical gap for singular value transformation \\ and the machine learning problems captured by it?}
\end{center}

\subsection{Results}

Our main result addresses the aforementioned question by providing an improved classical algorithm for QSVT that avoids computing a full singular value decomposition.
As a consequence, we also obtain better bounds on the quantum-classical gap for regression~\cite{cglltw20,gst20}, recommendation systems~\cite{tang18a,cglltw19}, and Hamiltonian simulation~\cite{cglltw19} with improved dependence in each relevant parameter.
Additionally, we improve over prior quantum-inspired algorithms with atypical guarantees in various parameter regimes~\cite{cchw20,sm21}.
We begin by stating our result for simulating QSVT on inputs with low stable rank.\footnote{The stable rank of a matrix $A$ is $\fnorm{A}^2/\norm{A}^2$.}

\begin{theorem}[Classical Singular Value Transformation, informal version of \cref{main-svt-alg}]
\label{thm:main-thm-intro}
Given a Hermitian $A \in \mathbb{C}^{n\times n}$ and $b \in \mathbb{C}^n$, and an accuracy parameter $0< \eps<1$, after $\bigO{\nnz(A) + \nnz(b)}$ pre-processing time to create a data structure,\footnote{If we are already given $A$ and $b$ in the QRAM data structures needed to prepare a block-encoding of $A/\fnorm{A}$ and a quantum state of $b/\norm{b}$, this pre-processing can be done in $\bigOt{d^{12}\fnorm{A}^4/( \norm{A}^4 \eps^4)}$ time.} for a degree-$d$ polynomial $p$ such that $\abs{p(x)} \leq 1$ for $x \in [-\norm{A},\norm{A}]$, we can output a description of a vector $y\in \C^n$ such that with probability at least $0.9$, $\|y - p(A)b\| \leq \eps\norm{b}$.
The algorithm takes
\begin{align*}
    \bigOt[\Big]{ \frac{d^{11}\fnorm{A}^4}{\norm{A}^4\eps^2}}
\end{align*}
time to output a description of $y$ as $Ax$ for a sparse vector $x$.
This description allows us to compute entries of $y$ in $\bigOt[\Big]{ \frac{d^6\fnorm{A}^2}{\norm{A}^2\eps^2}}$ time and obtain an $\ell_2^2$ sample from $y$ in $\bigOt[\Big]{ \frac{d^8 \fnorm{A}^4\norm{b}^2}{\norm{A}^4\epsilon^2\norm{y}^2} }$ time.\footnote{Throughout the paper, we use $\bigOt{\cdot}$ to surpress poly-logarithmic factors in $d, 1/\eps$ and $\fnorm{A}^2/\norm{A}^2$.}
\end{theorem}
The assumptions that $A$ is square and Hermitian are not needed; for other $A$, the definition of $p(A)$ needs to be adjusted appropriately and restricted to $p$ either even or odd.
For the discussion that follows, we assume $\norm{A} \leq 1$ and that $p(x)$ is bounded in the interval $[-1, 1]$, which allows us to analyze $p(x)$ in terms of its Chebyshev coefficients, without rescaling.

\paragraph{Comparison to QSVT.}
We compare our main result to the one achievable by QSVT.
The quantum computer is able to produce a quantum state approximating $\ket{p(A)b}$, but to give a concrete comparison, we consider the task of estimating an overlap with a given vector $v$.
Classically, we can use \cref{thm:main-thm-intro} to compute a description of $y \approx p(A)b$ and compute entries $y_i$ or estimate overlaps $\angles{v|y}$ of that description, all in $\bigOt{d^{11} \fnorm{A}^4/\eps^2}$ time (\cref{rmk:overlaps-description}).

In the same setting, a quantum computer equipped with a QRAM can output an $\eps$-good estimate of $\abs*{\langle v | p(A/2)b\rangle}^2$ in $\bigO{d\fnorm{A}\log(mn)/\eps}$ gates and queries to the QRAM.\footnote{
    We will not concern ourselves with $\log(mn)$ and $\log\frac{1}{\eps}$ factors: quantum algorithms typically count bit complexity where classical algorithms count word RAM complexity, which muddles any such comparisons.
    We will also ignore issues of sub-normalization (i.e.\ the $A$ vs $A/2$ in the comparison), though this seeming constant factor incurred by the quantum algorithm can lead to quadratic losses depending on the application.
}
We get this from the following procedure: in the pre-processing phase, place $A$ and $b$ in data structures in QRAM such that we can prepare states $\ket{b}$ and perform a block-encoding for $A/\fnorm{A}$ in $\bigO{\log(mn)}$ QRAM queries~\cite{prakash14}.
Using QSVT, we can get a block-encoding of an $\eps$ approximation of $A/2$ with cost inflated by a factor of $\bigO{\fnorm{A}\log(\fnorm{A}/\eps)}$ through singular value amplification~\cite[Theorem~30]{gslw18}, from which we can get a block-encoding of (approximately) $p(A/2)$ with cost inflated by a further factor of $\bigO{d}$.
Applying this to a state $\ket{b}$, we get an output state approximating $\ket{p(A/2)b}$ (thought of as a sub-normalized state) with $\bigOt{d\fnorm{A}\log(mn)\log(1/\eps)}$ gates.
We can then produce a sample from $p(A/2)b$ by paying an additional $1/\norm{p(A/2)b}^2$ factor.
So far, the dependence on error $\eps$ is merely logarithmic.
However, this is not a realizable quantum speedup (except possibly for sampling tasks) since the output is a quantum state: estimating some statistic of the quantum state requires incurring a polynomial dependence on $\eps$.
For example, if we want to estimate the overlap $\abs{\langle v | p(A/2)b \rangle}^2$, then this can be done with $1/\eps^2$ invocations of a swap test (or $1/\eps$ if one uses amplitude amplification).
More generally, distinguishing a state from one $\eps$-far in trace distance requires $\Omega(1/\eps)$ additional overhead, even when given an oracle efficiently preparing that state,\footnote{
    This holds because the output state of a quantum circuit that applies a unitary $U$ $T$ times is perturbed by at most $T\eps$ when $U$ is perturbed by $\eps$.
    So, $1/\eps$ applications of $U$ are needed to distinguish between $U$ and some $\eps$-close $\tilde{U}$ with constant probability.
} so estimating quantities to this sensitivity requires polynomial dependence on $\eps$.

To summarize, the quantum-classical gap is is 1-to-11 for the degree $d$, 1-to-4 for $\fnorm{A}$ (which we think of as square root of stable rank $\fnorm{A}^2/\norm{A}^2$), and 1-to-2 for $\eps$.\footnote{
    Note that the quantum-classical comparison can be very different (in both directions, favoring quantum or classical) depending on what property of the output vector we wish to learn.
    See the related work section for more discussion of this; we choose the ``overlap'' task, as this is a natural linear algebraic quantity to estimate, and is the quantity QML algorithms typically try to compute.
    \ewin{Add citations eventually.}
}
The quadratic gap in $\eps$ seems inherent, and the quartic gap in stable rank bears resemblance to quartic quantum speedups noted for other spectral algorithms~\cite{Hastings2020classicalquantum}.
The $d^{11}$ degree dependence could potentially be improved, though our techniques have an inherent limit of $d^5$; see~\cref{sec:tech} for more details.

\paragraph{Comparison to \cite{cglltw19,jgs20,gl22}.}
There are three papers that get similar results about ``dequantizing the quantum singular value transformation''.
The work of Chia, Gily\'{e}n, Li, Lin, Tang, and Wang~\cite{cglltw19} gives a running time of $\bigOt{d^{22}\fnorm{A}^6/\eps^6}$ (after renormalizing so that $\norm{A} = 1$ instead of $\fnorm{A} = 1$).
We improve in all parameters over this work: degree of the polynomial $p$, Frobenius norm of $A$, and accuracy parameter $\epsilon$.

The work of Jethwani, Le Gall, and Singh~\cite{jgs20} provides two algorithms for applying $f(A)b$ for Lipschitz-continuous $f$ and $A$ with condition number $\kappa$.
(Standard results in function approximation state that such functions can be approximated by polynomials of degree $\bigO{L}$, where $L$ is the Lipschitz constant of $f$ and the tail is either polynomial or exponential depending on how smooth $f$ is~\cite{trefethen19}.)
In particular, they achieve a running time of $\bigO{\fnorm{A}^6\kappa^{20}(d^2 + \kappa)^6/\eps^6}$ to apply a degree-$d$ polynomial.
Again, we improve in all parameters over this work: degree of the polynomial $p$, Frobenius norm of $A$, and accuracy parameter $\epsilon$.
We also do not incur condition number dependence.

Finally, the work of Gharibian and Le Gall~\cite{gl22} considers QSVT when the input matrices are sparse, which is the relevant regime for quantum chemistry and other problems in many-body systems, where the matrix is a local Hamiltonian.
See also Van den Nest's work which uses similar techniques to simulate restricted classes of quantum circuits~\cite{vdNest11}.
The setting where $A$ is sparse (corresponding to QSVT with sparse input) is significantly different from our setting, where $A$ has low stable rank.
In our setting, algorithms run in time polynomial in degree and stable rank, whereas in the low sparsity case, algorithms run in time exponential in degree and polynomial in sparsity instead of stable rank.
These are incomparable: algorithms specialized to one setting perform worse than the naive algorithm in the other setting.

\subsection{Applications to Dequantizing QML } \label{subsec:intro-apps}

Now, we describe the implications of \cref{thm:main-thm-intro} to specific problems in QML.
We obtain faster \emph{quantum-inspired} algorithms for linear regression, recommendation systems, and Hamiltonian simulation.

We begin with quantum recommendation systems, where the goal is to output a sample from a row of a low-rank approximation of the input vector $A$.
The original quantum algorithm has running time of $\bigOt{\fnorm{A}/\sigma}$ to output $\ket{[\operatorname{thresh}_\sigma(A)]_{i,*}}$, where $\operatorname{thresh}_\sigma$ is a polynomial approximating a threshold function---close to identity for values $\geq \sigma$, close to zero for values $\leq 5\sigma/6$, and bounded in between \cite{kerenidis2016QRecSys,cgj18}.
We then obtain the following corollary:

\ewin{Even comparing with my previous work $O(\frac{\fnorm{A}^6}{\sigma^{16}\eps^6})$ we're not better in the implicit parameter $\eta$ (which is set to a constant for the purpose of the application). I'm not sure why we're so lossy in the degree, ideally we'd get the $d$ exponent to $6$ or $5$ or $4$ but for some reason that's not happening. Maybe Lanczos would be better? annoyin}

\begin{corollary}[Dequantized recommendation systems, informal version of \cref{cor:dq-rec-sys}]
\label{cor:dq-rec-sys-intro}
    Given a matrix $A \in \mathbb{C}^{m\times n}$ such that $\norm{A} = 1$, a sufficiently small accuracy parameter $\eps > 0$, and an $i \in [n]$, we can produce a data structure in $\bigO{\nnz(A)}$ time such that, we can compute a description of a vector $y$ such that with probability at least $0.9$, $\Norm{ y - [\operatorname{thresh}_\sigma(A)]_{i,*} } \leq \eps\norm{A_{i,*}}$ in 
    $\bigOt{ \fnorm{A}^4 / \Paren{ \sigma^{11} \eps^2}}$ time. 
    From this description we can output a sample from $y$ in $\bigOt[\Big]{\frac{\fnorm{A}^4 \norm{A_{i,*}}^2}{ \sigma^8 \eps^2 \norm{y}^2}}$ time.
\end{corollary}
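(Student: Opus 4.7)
The plan is to reduce the problem to a single invocation of~\cref{thm:main-thm-intro} with an appropriate low-degree polynomial. Writing $A = U\Sigma V^\dagger$ for the SVD of $A$, the target vector (viewed as a column in $\C^n$) can be rewritten as
\[
    [A_{\geq\sigma}]_{i,*}^\dagger \;=\; A_{\geq\sigma}^\dagger e_i \;=\; V\Sigma_{\geq\sigma}U^\dagger e_i \;=\; \Pi_{\geq\sigma}\,(A^\dagger e_i),
\]
where $\Pi_{\geq\sigma} = V_{\geq\sigma}V_{\geq\sigma}^\dagger$ is the orthogonal projector onto the right singular subspace with singular values at least $\sigma$. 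In the language of singular value transformation, $\Pi_{\geq\sigma} = h^{SV}(A)$ for the even step function $h(x)=\mathbf{1}[|x|\geq\sigma]$, so it suffices to approximate $h^{SV}(A)\cdot b$ where $b = A^\dagger e_i$ satisfies $\|b\|=\|A_{i,*}\|$.

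To replace $h$ by a polynomial, I would use a standard Chebyshev-based soft-threshold construction (e.g.\ the shifted approximate sign polynomial of~\cite{gilyen2018QSingValTransf}) to obtain an even polynomial $p$ of degree $d = \widetilde{O}(1/\sigma)$ with $|p(x)|\leq 1$ on $[-1,1]$, $|p(x)-1|\leq \eps/4$ on $[\sigma(1+\gamma),1]$, and $|p(x)|\leq \eps/4$ on $[0,\sigma(1-\gamma)]$, for a small absolute constant $\gamma$. Feeding $(A,b,p,\eps/2)$ into~\cref{thm:main-thm-intro} returns, since $p$ is even, a description $y = A^\dagger x$ with $x\in\C^m$ sparse, and
\[
    \|A^\dagger x - p^{SV}(A)b\| \;\leq\; (\eps/2)\|b\| \;=\; (\eps/2)\|A_{i,*}\|,
\]
in time $\widetilde{O}(d^9\|A\|_F^4/\eps^2) = \widetilde{O}(\|A\|_F^4/(\sigma^9\eps^2))$, matching the claimed runtime.

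To close the error argument I would bound $\|p^{SV}(A)b - h^{SV}(A)b\|$. Expanding in the right singular basis of $A$ and using $v_k^\dagger b = \sigma_k\overline{u_{k,i}}$, this quantity equals $\bigl(\sum_k (p(\sigma_k)-h(\sigma_k))^2 \sigma_k^2 |u_{k,i}|^2\bigr)^{1/2}$; singular values outside the transition band $[\sigma(1-\gamma),\sigma(1+\gamma)]$ contribute at most $(\eps/4)(\sum_k \sigma_k^2|u_{k,i}|^2)^{1/2} = (\eps/4)\|A_{i,*}\|$ by the uniform bound on $|p-h|$. For singular values inside the band, one either assumes a spectral gap at~$\sigma$ or, as is standard in dequantizations of soft low-rank approximations, reinterprets the output as a row of $A_{\geq\sigma'}$ for some effective threshold $\sigma'\in[\sigma(1-\gamma),\sigma(1+\gamma)]$; combining this with the algorithmic error from the previous step yields the desired $\eps\|A_{i,*}\|$ bound.

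Finally, the $\ell_2^2$-sampling claim follows directly from the sampling oracle of~\cref{thm:main-thm-intro}, which produces an $\ell_2^2$ sample from $y = A^\dagger x$ in $\widetilde{O}(d^6\|A\|_F^4/(\eps^2\|y\|^2)) = \widetilde{O}(\|A\|_F^4/(\sigma^6\eps^2\|A^\dagger x\|^2))$ time, as stated. Preprocessing consists of building the sample-and-query data structures on $A$ in $O(\nnz(A))$ time; the vector $b = A^\dagger e_i$ is then read off as the $i$-th row of $A$ in $O(\nnz(A_{i,*}))\leq O(\nnz(A))$ time. The main conceptual obstacle is the polynomial approximation step, namely constructing a uniformly bounded even soft-threshold polynomial of degree $\widetilde{O}(1/\sigma)$ and correctly handling the transition band; every other step is a black-box invocation of~\cref{thm:main-thm-intro}.
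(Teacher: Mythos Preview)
Your proposal is correct and follows essentially the same route as the paper: identify $[A_{\geq\sigma}]_{i,*}^\dagger = p^{SV}(A)\,(A^\dagger e_i)$ for the even soft-threshold polynomial $p$, then invoke the even singular value transformation algorithm with $b=A^\dagger e_i$ (the $i$-th row of $A$, for which $\sq$ access comes for free from $\sq(A)$), and read off the stated runtimes from $d=\widetilde O(1/\sigma)$. The paper's \cref{rec-polynomial} is exactly the soft-threshold polynomial you describe, built from the approximate sign polynomial of~\cite{gilyen2018QSingValTransf}, and the paper handles your ``transition band'' issue by working with the soft low-rank approximation $A_{\geq\sigma,\eta}$ rather than a hard threshold; your reinterpretation of the output as a row of $A_{\geq\sigma'}$ for some $\sigma'$ in the band is the same idea phrased slightly differently.
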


\begin{remark}[Comparison to \cite{cglltw19}]
    \cite[Corollary 6.7]{cglltw19} achieves a running time of $\bigOt{\fnorm{A}^6/(\sigma^{16}\eps^6)}$.
    We improve upon it in every parameter, including error $\eps$, the threshold $\sigma$, and the Frobenius norm $\fnorm{A}$. We defer comparison to the work of Chepurko, Clarkson, Horesh, Lin, and Woodruff~\cite{cchw20} to \cref{cchw-comparison-recsystems}, as it achieves different guarantees from QSVT.
    To summarize the discussion there, if we attempt to translate the guarantees of \cite{cchw20} to this setting, we find that our running times are better, under the mild assumption that $\eps$ is roughly smaller than $\sigma$.
\end{remark}

Next, we state the dequantization result we obtain for regression.
Quantum algorithms can produce a state $\eps$-close to $\ket{\operatorname{inv}_{\kappa}(A)b}$ in $\bigO{\kappa\fnorm{A}\log\frac{1}{\eps}}$ time, where $\operatorname{inv}_{\kappa}$ is a polynomial close to $1/x$ on $[1/\kappa, 1]$ and and smoothly thresholds away all singular values below $1/\kappa$~\cite{wossnig2018QLinSysAlgForDensMat,cgj18}.

\begin{corollary}[Dequantized regression, informal version of \cref{cor:dequant-regression}]
Given $A \in \mathbb{C}^{m \times n}$ and $b \in \mathbb{C}^{n}$ such that $\Norm{A} = 1, \norm{b}\leq 1$, a singular value threshold $0<1/\kappa<1$, and a sufficiently small accuracy parameter $\epsilon>0$.
Then after $\bigO{\nnz(A) + \nnz(b)}$ pre-processing time, we can compute a description of a vector $y \in \mathbb{C}^n$ such that, with probability $\geq 0.9$, $\norm{y - \operatorname{inv}_\kappa(A) b } \leq \eps\norm{b}/\kappa$.
The algorithm runs in $\bigOt{\kappa^{11}\norm{A}_F^4 /\epsilon^2 }$ time.
From this description we can output a sample from $y$ in $\bigOt[\Big]{\frac{\kappa^{10}\fnorm{A}^4\norm{b}^2}{\eps^2\norm{y}^2}}$ time.
\end{corollary}

\begin{remark}[Comparison with \cite{cglltw19,gst20,sm21}] \label{intro-regression-comparison}
Prior work \cite[Corollary 6.15]{cglltw19} achieves a running time of $\bigO{\norm{A}_F^6  \kappa^{22}/ \eps^6}$ to achieve an error bound of $\eps\norm{b}/\norm{A}$ ($\sigma$ in that setting is our $1/\kappa$).
Converting our result there increases our runtime by $\kappa^2$, but we still improve over this result in all parameters.

Other quantum-inspired algorithms give improved results under fairly strong additional assumptions.
When $A$ has no non-zero singular values larger than $\sigma$, then \cite{gst20} achieves a running time of $\bigO{\fnorm{A}^6/(\sigma^{12}\eps^4)}$, which Shao and Montanaro~\cite{sm21} improve further to $\bigO{\fnorm{A}^6/(\sigma^8\eps^2)}$ when $b$ is in the image of $A$.
We improve over the former in all parameters, and for the latter we obtain a better $\fnorm{A}$ dependence at the cost of a worse $\sigma$ dependence.
Both of these provide the stronger guarantee that the output vector is close up to relative error $\eps\norm{A^+b}$, though \cite{gst20} requires factors of $\norm{A^+}\norm{b}/\norm{A^+b}$ to compensate (which we elided in the above expression).
\end{remark}

Finally, we state the dequantization result we obtain for Hamiltonian simulation.

\begin{corollary}[Dequantized Hamiltonian simulation, informal version of \cref{cor:dequant-ham}]
\label{cor:dequant-ham-intro}
    Given a Hamiltonian $H \in \mathbb{C}^{n\times n}$ with $\norm{H} \leq 1$, a vector $b \in \mathbb{C}^n$, and a sufficiently small accuracy parameter $\eps > 0$, after $\bigO{\nnz(H) + \nnz(b)}$ pre-processing time, we can output a description of a vector $v$ such that, with probability $\geq 0.9$, $\norm{v - e^{iHt}b} \leq \eps\norm{b}$ with running time $\bigOt{t^{11}\fnorm{H}^4/\eps^2}$. 
    From this description we can output a sample from $v$ in $\bigOt[\Big]{\frac{t^{8}\fnorm{H}^4\norm{b}^2}{\eps^2\norm{v}^2}}$ time.
\end{corollary}

\begin{remark}[Comparison with \cite{cglltw19}]
The only prior work~\cite{cglltw19} we are aware of in the low-rank regime obtains a running time $\bigO{t^{16} \norm{H }_F^6/\eps^6}$, and we improve upon it in every parameter. 
\end{remark}

\section{Technical Overview} \label{sec:tech}

\begin{figure}[h]
    \begin{center}
    \begin{tabular}{r|l}
        Prior work~\cite{cglltw19} & $d^{22} \|A\|_F^6 /\varepsilon^6$ \\
        Using the polynomial structure & $d^{\diff{\mathbf{17}}}\|A\|_F^{\diff{\mathbf{4}}}/\varepsilon^{\diff{\mathbf{4}}}$ \\
        Tightening Clenshaw's stability analysis & $d^{\diff{\mathbf{13}}}\|A\|_F^4/\varepsilon^4$ \\
        Sparsifying matrices & $d^{\diff{\mathbf{11}}}\|A\|_F^4/\varepsilon^{\diff{\mathbf{2}}}$
    \end{tabular}
    \end{center}
    \caption{A list of our technical contributions, along with the improvements they each make to the running time of the final algorithm, ignoring log factors.}
\end{figure}

In this section, we describe our classical framework for simulating QSVT and provide an overview of our key new contributions.
For ease of exposition, we assume $\norm{A}=1$ and consider when $p$ is odd.
We have three conceptual contributions.
First, we use an iterative method (the Clenshaw recurrence) instead of a pure sketching algorithm, which improves the running time to $\bigO{d^{17}\norm{A}_F^4/\eps^4}$.
This corresponds to $d$ iterations of matrix-vector products of size $\fnorm{A}^2/(\eps/d^4)^2$ by $\fnorm{A}^2/(\eps/d^4)^2$.
Second, we develop new insights into the stability of the Clenshaw recurrence and arithmetic progressions of Chebyshev coefficients to improve the size from $\fnorm{A}^2/( \eps/d^4)^2$ to an $\fnorm{A}^2/( \eps/(d^3\log^2(d)))^2$.
Third, we give a subtle argument that we can sparsify these matrices, improving the $\eps^4$ to an $\eps^2$.
Together,\footnote{It is natural to wonder here why the complexity is not something like $d\norm{A}_F^4/(\eps/(d^3\log^2(d)))^2 = d^7\log^2(d)\norm{A}_F^4/\eps^2$. Such a running time is conceivable, but our analysis essentially replaces two factors of $1/\eps$ with factors of $1/d^2$, so the sparsification only saves a factor of $d^2$.} these give the final running time of $\bigO{d^{11}\log^4(d)\norm{A}_F^4/\eps^2}$.
We now walk through these steps in more detail.

\paragraph{Computing matrix polynomials through sketches and iterative algorithms.}  
% We want some function of a matrix, to match QSVT
Recall our goal of simulating QSVT: given a matrix $A \in \C^{m\times n}$, a vector $b \in \C^n$, and a polynomial $p: [-1,1] \to \mathbb{R}$, compute a description of a vector $y$ such that $\norm{y - p(A)b} \leq \eps\supnorm{p}\norm{b}$, where $\supnorm{p} \coloneqq \max_{x \in [-1,1]} \abs{p(x)}$.
Specifically, we aim for our algorithm to run in $\poly(\fnorm{A}, \frac{1}{\eps}, d)$ time after $\bigO{\nnz(A) + \nnz(b)}$ pre-processing, and our output description is some sparse vector $x$ such that $y = Ax$, since this allows us to simulate some tasks that the quantum algorithm can do with copies of $\ket{y}$, like estimating overlaps and performing measurements in the computational basis.

% We sketch, so we incur errors, therefore we have to care about stability analyses
We require the running time of the algorithm to be independent of input dimension (after the pre-processing) and therefore are compelled to create sketches of $A$ and $b$ and work with these sketches. We note that prior work~\cite{tang18a,tang18b,cglltw19} stops here, and directly computes a SVD of the sketches, and applies the relevant polynomial $p$ to the singular values of the sketch. As noted in the previous section, this approach loses large polynomial factors in the relevant parameters. 
\ewin{Maybe add discussion of the $g(x)=f(x)/x$ issue with chia}

Our main conceptual insight is to run iterative algorithms on the resulting sketches of $A$ and $b$ in order to approximate matrix polynomials.
In the canonical numerical linear algebra regime (working with matrices directly, instead of sketches), there are two standard methods to achieve this goal: (1) compute the polynomial explicitly through something like a Clenshaw recurrence~\cite{clenshaw55}, or (2) use the Lanczos method~\cite{lanczos50} to create a roughly-orthogonal basis for the Krylov subspace $\{b, Ab, A^2b, \ldots\}$ and then apply the function exactly to the matrix in this basis, in which $A$ is tridiagonal, implicitly using a polynomial approximation in the analysis.
We note that in a world where we are allowed to pay $\bigO{\nnz(A)}$ (or even $\bigO{m + n}$) time per-iteration, we can simply run either of these algorithms and call it a day.
The main challenge in the setting we consider is that each iterative step must run in time that is entirely dimension-independent.

\begin{mdframed}
  \begin{algorithm}[Singular value transformation for odd polynomials. Informal version of \cref{algo:odd-dquantizing}]
    \label{algo:intro-odd-dquantizing-informal}\mbox{}
    \begin{description}
    \item[Input (pre-processing):] A  matrix $A \in \mathbb{C}^{m \times n}$, vector $b \in \mathbb{C}^{n}$, and parameters $0< \eps <1$. An odd degree $2d+1$ polynomial given as its Chebyshev coefficients $a_{2i+1}$ (so that $p(x) = \sum_{i = 0}^{d} a_{2i+1} T_{2i+1}(x)$).
    \item[Pre-processing sketches:]
    Let $s, t = \bigOt[\big]{\frac{d^6 \fnorm{A}^2}{ \eps^2} }$.
    \begin{enumerate}[label=P\arabic*., ref=A\thealgorithm.P\arabic*]
        \item Let $S \in \C^{n \times s}$ be a sampling matrix that samples $s$ columns of $A$ such that the $j$-th column is sampled with probability $ \tfrac12 \Paren{ \norm*{A_{*,j}}^2 / \fnorm{A}^2   + \abs*{b_j}^2/ \norm{b}^2 }$. 
        \item Let $T \in \C^{t \times m}$ be a sampling matrix that samples $t$ columns of $AS$ such that the $i$-th column is sampled with probability $\norm{ [AS]_{i,*} }^2 / \fnorm{AS}^2$. 
        \item Compute $TAS$. 
    \end{enumerate}

    \item[Clenshaw iteration:]
    Let $r = \widetilde{\mathcal{O}}\Paren{ d^{10} \fnorm{A}^4 / \eps^2 }$. Let $v_{d+1} = v_{d+2} = \vec{0}^s$. For $k  \in [d, 0]$,
    \begin{enumerate}[label=I\arabic*., ref=A\thealgorithm.I\arabic*]
        \item Let $B^{(k)} \in \mathbb{C}^{t \times s}$ be the estimator formed by sampling $r$ entries of $TAS$ such that entry $(i,j)$ is sampled with probability $\abs*{[TAS]_{i,j}}^2/\fnorm{TAS}^2$.
        Construct $B_\dagger^{(k)}$ from $(TAS)^\dagger$ similarly.
        \item Compute $v_k = 2(2B_\dagger^{(k)} B^{(k)} - I)v_{k+1} - v_{k+2} + a_{2k+1} S^\dagger b$. 
    \end{enumerate}
    \item[Output:] 
    Output $x = \tfrac12 S(v_0 - v_1)$ that satisfies $\Norm{ A x - p(A) b } \leq \eps\supnorm{p}\norm{b}$.
    \end{description}
  \end{algorithm}
\end{mdframed}

\paragraph{Sketching down the Clenshaw recurrence.}
For reasons explained below, we use the Clenshaw recurrence instead of the Lanczos method.
Given a degree-$d$ polynomial $p$ given in terms of its Chebyshev coefficients $a_\ell$ (i.e.\ $p(x) = \sum_{\ell = 0}^d a_\ell T_\ell(x)$, where $T_\ell(x)$ is the degree $\ell$ Chebyshev polynomial) and a value $x \in [-1, 1]$, the Clenshaw recurrence computes $p(x)$.
We use a modified recurrence for technical reasons.
\ewin{Maybe explain this eventually.}
Concretely, our recurrence for odd $p$ (so that $a_\ell = 0$ for $\ell$ even), is the following:
\begin{align*}
    q_{(d-1)/2} &= q_{(d+1)/2} = 0; \\
    q_k &= 2 (2x^2 - 1) q_{k+1} - q_{k+2} + 2a_{2k+1} x; \\
    p(x) &= \tfrac12(q_0 - q_1).
\intertext{The scalar recurrences we discuss lift to computing matrix polynomials in a natural way:}
    u_{(d-1)/2} &= u_{(d+1)/2} = 0; \\
    u_k &= 2 (2A A^\dagger - \Id) u_{k+1} - u_{k+2} + 2a_{2k+1} A b; \\
    p(A)b &= \tfrac12(u_0 - u_1).
\end{align*}
Each iteration (to get $u_k$ from $u_{k+1}$ and $u_{k+2}$) can be performed in $\bigO{\nnz(A)}$ arithmetic operations, so this can be used to compute $p(A)b$ in $\bigO{d\nnz(A)}$ operations.
We would like to do this approximately in time independent of $\nnz(A)$ and $n$.
We begin by sketching down our matrix and vector: we show that it suffices to maintain a sparse description of $u_k$ of the form $u_{k} = A v_k$ where $v_k$ is sparse.
In particular, we produce sketches $S \in \mathbb{C}^{n\times s}$ and $T \in \mathbb{C}^{t \times m}$ such that 
\begin{enumerate}
	\item $\norm*{ AS \Paren{AS}^\dagger - A A^\dagger   } \leq \eps$;
	\item $\norm*{ AS S^{\dagger} b - Ab  } \leq \eps \norm{b}$;
	\item $\norm*{ TAS \Paren{TAS}^\dagger - AS \Paren{AS}^\dagger  } \leq \eps$.
\end{enumerate}
Sketches that satisfy the above type of guarantees are called \emph{approximate matrix product} ($\aamp$) sketches, and are standard in the quantum-inspired algorithms literature~\cite{cglltw19}.
In the linear-time pre-processing phase, we can produce these sketches of size $s, t = \bigOt{\frac{\fnorm{A}^2}{\eps^2}\log\frac{1}{\delta}}$, and then compute $TAS$.
If the input is given in the quantum-inspired access model of \emph{oversampling and query access}, this can even be done in time independent of dimension.
All of these guarantees follow from \cref{thm:stable_amm}, which shows $\ell_2^2$ sampling gives an asymmetric approximate matrix product property (in operator norm).
We do not need this generalization (prior ``symmetric'' results suffice), but we use it for convenience. 

Using these guarantees we can sketch the iterates as follows:
\begin{equation}
\begin{split}
    u_k &= 2(2AA^\dagger- \Id)u_{k+1} - u_{k+2} + 2a_{2k+1} Ab\\
    &= 4AA^\dagger Av_{k+1}- 2Av_{k+1} - Av_{k+2} + 2a_{2k+1} Ab\\
    &\approx AS[4(TAS)^\dagger(TAS)v_{k+1}- 2v_{k+1} - v_{k+2} + 2a_{2k+1} S^\dagger b].
\end{split}
\end{equation}
Therefore, we can interpret Clenshaw iteration as the recursion on the dimension-independent term $v_k \approx 4(TAS)^\dagger(TAS)v_{k+1}- 2v_{k+1} - v_{k+2} + 2a_{2k+1} S^\dagger b$, and then applying $AS$ on the left to lift it back to $m$ dimensional space.
We can recognize this as roughly the recursion performed in \cref{algo:intro-odd-dquantizing-informal}.
As desired, we can perform the iteration to produce $v_k$ in $\bigO{st} = \bigOt{\frac{\fnorm{A}^4}{\eps^4}\log^2\frac{1}{\delta}}$ time, which is independent of dimension, at the cost of incurring $\bigO{\eps (\norm{v_{k+1}} + \norm{v_{k+2}} + a_{2k+1}\norm{b})}$ error.
To bound the effect of these per-iteration errors on the final output, we need a stability analysis of the Clenshaw recurrence.

\paragraph{Connecting sketching error to finite precision.}
Given that we sketch each iterate down to a size that is dimension-independent, we introduce additive error at each step.
We can re-interpret this additive error as truncation error, and forge a conceptual connection between running \textit{iterative algorithms} on sketches and \textit{stability analyses} (also known as \emph{finite-precision analyses}) in the classical NLA literature. 
Stability analyses of Clenshaw and Lanczos iteration are well-understood~\cite{clenshaw55,Paige76,mms18}.
However, we cannot use them in a black-box manner, since they are concerned with optimizing the ``number of bits'' required to maintain an accurate solution, and so are content with error bounds that are only polynomially tight.
Translating these results to our setting results in a significantly worse polynomial running time.
Therefore, for our purposes, we must revisit classical stability analyses and refine our understanding of how error propagates in these iterative methods.

Folklore intuition suggests that Lanczos is a stabler algorithm for applying matrix functions, but the state-of-the-art analysis of it~\cite{mms18} relies on the stability of the Clenshaw recurrence as a subroutine, and therefore gives a strictly worse error-accumulation bounds than the Clenshaw recurrence.
In particular, if we wish to compute a generic $p(A)b$ to $\eps\supnorm{p}\norm{b}$ error in the regime where every matrix-vector product $Ax$ incurs an error of $\eps\norm{A}\norm{x}$ using Lanczos, the stability analysis of Musco, Musco, and Sidford suggests that the error of the output is $\bigO{d^{5.5}\eps}$, which would introduce a $d^{11}$ in our setting~\cite{mms18}.\footnote{
    This computation arises from taking Lemma 9 of \cite{mms18} to evaluate a degree-$d$ polynomial, say, bounded by 1 in $[-1,1]$. A generic example of such polynomial is only by a constant in $[-1 - \eta, 1 + \eta]$ when $\eta = \bigO{1/d^2}$ (\cref{slightly-beyond-sup}), and has Chebyshev coefficients bounded only by a constant, without decaying.
    Thus, the bound from Lemma 9 becomes $\bigO{d^5\norm{E}}$.
    Continuing the analysis into \cite{Paige76}, $E$ is the matrix whose $i$th column is the error incurred in the $i$th iteration; each column has norm $\eps\norm{A}\norm*{v_{j+1}} = \eps$ in our setting where we only incur error in matrix-vector products, since $\norm{A} = 1$ by normalizing and $\norm*{v_{j+1}} = 1$ because the algorithm normalizes it to unit norm, and we assume that scalar addition and multiplication can be performed exactly.
    We have no control over the direction of error, so we can at best bound $\norm{E}$ with the column-wise bound, giving $\norm{E} \leq \fnorm{E} \leq \sqrt{k}\eps$.
    So, our version of \cite[Equation 16]{mms18} gives $\norm{E} \leq \sqrt{d}\eps$, which gives us the asserted $\bigO{d^{5.5}\eps}$ bound.
}
To incur less error, we do not use Lanczos and analyze the Clenshaw recurrence directly.

This discussion will focus on standard Clenshaw instead of our modified version, since this is the focus of prior work; the same discussion holds for our odd/even recurrences, but with an additional factor of $d$ incurred.
The Clenshaw recurrence computes $p(x)$ through the iteration computing $q_d$ through to $q_0$:
\begin{align*}
    q_{d+1}, q_{d+2} &\coloneqq 0; \\
    q_k &\coloneqq 2x q_{k+1} - q_{k+2} + a_k; \\
    p(x) &= \tfrac{1}{2}(a_0 + q_0 - q_2).
\end{align*}
For example, in the randomized numerical linear algebra (RNLA) literature, this is often applied in the case where $a_d = 1$ and $a_k = 0$ otherwise, to evaluate a degree-$d$ Chebyshev polynomial $T_d(x)$.

Now, we consider the Clenshaw recurrence when each iteration incurs error $\eps(\abs{q_{k+1}} + \abs{q_{k+1}})$.
A naive argument gives a $\bigO{d^3}$ bound on the overhead, which implies that $\eps$ needs to be scaled down by that much to get the desired error bound.
For our modified recurrence, this is a $\bigO{d^4}$, giving a time of $\bigO{d^{16}\frac{\fnorm{A}^4}{\eps^4}\log^2\frac{1}{\delta}}$ to perform an iteration, and $d$ times that for the total runtime.

We can hope for better.
By the Markov brothers' inequality, a bounded polynomial has derivative bounded by $d^2$~\cite{schaeffer41}, and attained by the Chebyshev polynomial $T_d(x)$.
So, if we only incur error from an error in input, in that we perform the recurrence not with $x$ but some value in $(x-\eps, x+\eps)$, this error only cascades to a $\bigO{d^2\eps}$ worst-case error in the output.
This argument suggests that a $\bigO{d^2}$ overhead is the best we could hope for.

Our technical contribution is an analysis showing that the Clenshaw algorithm gives this optimal overhead, up to a logarithmic overhead.
We first show that the overhead can be upper bounded by the size of the largest Clenshaw iterate $\abs{q_k}$ (see~\cref{scalar-clenshaw-stability}), and then we bound the size of iterate $\abs{q_k}$ by $\bigO{d\log(d)\supnorm{p}}$ (see~\cref{thm:scalar-clenshaw-iterate-bound}).
The main lemma we prove states that for a bounded polynomial $p(x) = \sum_{\ell=0}^d a_\ell T_\ell(x)$, sums of the form $a_\ell + a_{\ell + 2} + \cdots$ are all bounded by $\bigO{\log(\ell)\supnorm{p}}$ (\cref{parity-two-tailsums}).
This statement follows from facts in Fourier analysis (in particular, this $\log(\ell)$ is the same $\log$ as the one occurs when bounding the $L^1$ norm of the Dirichlet kernel).

To our knowledge, our work is the first to give any bound of $o(d^3)$.
The standard literature either considers an additive error (where, unlike usual models like floating-point arithmetic, each multiplication incurs identical error regardless of magnitude)~\cite{clenshaw55,foxparker68,mh02} or eventually boils down to bounding $\abs{a_i}$ (since their main concern is dependence on $x$)~\cite{Oliver77,Oliver79}.
The modern work we are aware of shows a $\bigO{d^2}$ bound only for computing Chebyshev polynomials~\cite{bkm22}, sometimes used to give a $\bigO{d^2 \sum_{i} \abs{a_i}} = \bigO{d^3}$ bound for computing generic bounded polynomials~\cite{mms18}, since a degree-$d$ polynomial can be written as a linear combination of $T_k(x)$ with bounded coefficients.
None of these analysis are sufficient to get our $d^2\log(d)$ stability bound, since they ultimately depend on $\sum \abs{a_i}$, which can be as small as $\supnorm{p}$ and as large as $\sqrt{d}\supnorm{p}$; see the start of \cref{sec:basic-error-analysis-scalar} for more discussion.
Our improved stability analysis saves a $d^4$ factor in our main algorithm, from $d^{17}$ to $d^{13}$.

\paragraph{Improving the $\eps$-dependence.}
So far, our improvements give a $\bigOt{d^{13}\frac{\fnorm{A}^4}{\eps^4}\log^2\frac{1}{\delta}}$ running time, whereas we wish to achieve a $\bigO{1/\eps^2}$ dependence. 
Though so far we have only used a very limited subset of the sketching toolkit---namely, $\ell_2^2$ importance sampling---it's not clear how, for example, oblivious sketches~\cite{nn13} or the connection between importance sampling and leverage score sampling~\cite{cchw20} help us, since our choices of sketches are optimal up to log factors for the guarantees we desire.
To get the additional improvement, we need a new sketching technique.

A natural next step to improve per-iteration running time is to sparsify the matrix $TAS$, in order to make matrix-vector products more efficient.
If we can sparsify $TAS$ to $\bigO{1/\eps^2}$ non-zero entries, then we get the desired quadratic savings in per-iteration cost.
There is significant literature on sparsifying the entries of a matrix~\cite{am07,kundu2014note,braverman2021near}.
However, it does not suffice for us to use these as a black box.
For example, consider the sketch given by Drineas and Zouzias~\cite{drineas2011note}: for a matrix $M \in \mathbb{R}^{n\times n}$, zero out every entry smaller than $\frac{\eps}{2n}$, then sample entries proportional to their $\ell_2^2$ magnitude, and consider the corresponding unbiased estimator of $M$, denoted $\tilde{M}$.
The guarantee is that the operator norms are close, $\norm{M - \tilde{M}} \leq \eps$, and the sparsity is $\bigO{n\log(n)\frac{\fnorm{A}^2}{\eps^2}}$.
\ewin{Maybe cite LazySVD since it's similar: they use these sparsification sketches, right?}
This is precisely the guarantee we need and the sketch can be performed efficiently; however, this does not sparsify the matrix for us, since in our setting, our matrices $TAS$ have dimension $\fnorm{A}^2/\eps^2$, so the sparsity guarantee is only $\bigO{\frac{\fnorm{A}^4\log(n)}{\eps^4}} = \bigO{st\log(n)}$.
In other words, this sketch gives us no improvement on sparsity!

\paragraph{Bi-linear entry-wise sampling transform.}
Our second main technical contribution is to bypass this barrier by noticing that we don't need a guarantee as strong as a spectral norm bound.
Instead, we only need to achieve approximations of the form 
\begin{equation}
\label{eqn:bi-linear-tech-ov}
\norm*{ASMv - AS\tilde{M}x_k} \leq \eps,
\end{equation}
 for various different choices of $x_k$.
So, we only need our sketch $\tilde{M}$ to approximate $M$ in some directions with good probability, instead of approximating it in all directions.
We define a simple sketch, which we call the Bilinear Entry Sampling Transform (\ssketch\footnote{That is, we pronounce $\ssketch(A)$ as ``best of $A$''. We make no normative claim about our sketch vis-a-vis other sketches.}), that is an unbiased estimator for $A$ linearly (rather than the product $A^\dagger A$) and achieves the aforementioned guarantee (Equation \eqref{eqn:bi-linear-tech-ov}). 
This sketch is sampled from the same distribution as the one of Drineas and Zouzias from above, but without the zero-ing out small entries. In particular, we define
\begin{equation*}
    M^{(k)} = \frac{1}{p_{i,j} } A_{i,j} e_i e_j^\dagger \hspace{0.2in } \textrm{ with probability }  p_{i,j} = \frac{ \abs{A_{i,j} } }{\Norm{A}_F^2}.
\end{equation*} 
Then, to get the sketch with sparsity $r$, we take the average of $r$ independent copies of the above random matrix:
\begin{equation*}
    \ssketch(A) = \frac{1}{r} \sum_{k \in [r]} M^{(k)}.
\end{equation*}
This definition is not new: for example, one of the earliest papers on matrix sparsification for linear algebra algorithms briefly considers this sketch~\cite{am07}, and this sketch has been used implicitly in prior work on dequantizing QML algorithms~\cite{tang18b}.
However, to the best of our knowledge, our analysis of this sketch for preserving bi-linear forms and saving a factor of $1/\epsilon^2$ is novel.

We show that $\ssketch(A)$ satisfies the following guarantees:  taking $r = \Omega(\fnorm{M}^2/\eps^2)$, this sketch preserves the bilinear form $u^\dagger M v$ to $\eps\norm{u}\norm{v}$ error with probability $\geq 0.9$ (\cref{lem:basic-facts}).
Second, taking $r = \Omega(\fnorm{M}^2n)$, this sketch preserves the norm $\norm{Mv}$ to $0.1\norm{v}$ error with probability $\geq 0.9$.
So, by taking $r = \Omega(\fnorm{M}^2(n + \frac{1}{\eps^2}))$, we can get both properties.
However, with this choice of $r$, $\ssketch(M)$ does not preserve the spectral norm $\norm{M}$, even to constant error.\footnote{
    At this point, one might wonder whether one can simply zero out small entries to get these guarantees with the additional spectral norm bound.
    This is not the case; if we only zero out entries smaller than $\eps/(2n)$, this threshold is too small to improve the matrix Bernstein tail bound, whereas if we zero out entries smaller than, say, $1/(100n)$, the matrix Bernstein tail bound goes through, but the bilinear form is not $\eps$-preserved.
}
This sketch can be interpreted as a relaxation of the approximate matrix product and the sparsification sketches above, since we use it in regimes where $\norm{\ssketch(A)}$ is unbounded and $\norm{\ssketch(A)b}$ is not $\eps$-close to $\norm{Ab}$.
However, if we use it in the ``interior'' of a matrix product expression, as a proxy for approximate matrix product, it can be used successfully to improve the $n/\eps^2$ dependence of spectral norm entry sparsification to something like $n + 1/\eps^2$.
In our setting, this corresponds to an improvement of $1/\eps^4$ to $1/\eps^2$.

\paragraph{Applying \ssketch~to Clenshaw Iteration.} Returning to our odd Clenshaw iteration, we had our approximate iterate
\begin{align*}
    u_k &\approx AS[4(TAS)^\dagger(TAS)v_{k+1}- 2v_{k+1} - v_{k+2} + 2a_{2k+1} S^\dagger b].
\intertext{We can approximate this by taking $B = \ssketch(TAS)$ and $B_\dagger = \ssketch((TAS)^\dagger)$ with sparsity $r = \Theta(\fnorm{A}^4/\eps^2)$ to get that the bilinear forms like $[AS]_{i,*}(TAS)^\dagger((TAS)v_{k+1})$ are preserved. This allows us to successfully approximate with sparsified matrices,}
    u_k &\approx AS[4B_\dagger Bv_{k+1}- 2v_{k+1} - v_{k+2} + 2a_{2k+1} S^\dagger b], \\
    \text{so } v_k &\approx 4B_\dagger B v_{k+1}- 2v_{k+1} - v_{k+2} + 2a_{2k+1} S^\dagger b.
\end{align*}
This recurrence in $v_k$ will be our algorithm (\cref{algo:intro-odd-dquantizing-informal}): compute $S$ and $T$ in the pre-processing phase, then compute the iteration of $v_k$'s, pulling fresh copies of $B$ and $B_\dagger$ each time, and output it as the description of the output $u \approx p(A)b$.
What remains is the error analysis, which similar to the scalar case, requires bounding the size of the iterates $\norm{v_k}$.
We used the $\eps$-approximation of the bilinear form to show that the sparsifications $B$ and $B_\dagger$ successfully approximate $u_k$; we use the constant-error approximation of the norm to show that the $\norm{v_k}$'s are bounded.

\paragraph{Challenges in extending finite-precision Clenshaw iteration.}
We note here that the above error does not directly follow from the error of the sketches along with the finite-precision scalar Clenshaw iteration.
The error we incur in iteration $k$ is not $\eps\norm{u_{k+1}}$, but $\eps\norm{v_{k+1}}$, so our error analysis requires understanding both the original Clenshaw iteration along with the ``dual'' Clenshaw iteration $v_k$, which requires a separate error analysis.
This dual iteration is essentially the same Clenshaw iteration, but for computing $p(x)/x$, which bears resemblance to how prior algorithms for SVT depend on the Lipschitz constant of $p(x)/x$~\cite{cglltw19}.
That it is evaluating $p(x)/x$ makes sense, since the dual iteration is the same as the Clenshaw iteration but with a factor of $A$ removed.

\paragraph{Sums of Chebyshev coefficients.}
One final technical wrinkle remains, which is to bound the error accumulation of the matrix Clenshaw recurrences.
In the same way that the original scalar Clenshaw algorithm required bounding arithmetic progressions of Chebyshev coefficients with step size two, $a_\ell + a_{\ell + 2} + \cdots$ by the norm of the corresponding polynomial $p = \sum_{\ell = 0}^d a_\ell T_\ell(x)$, to prove stability for the even and odd versions, we need to bound arithmetic progressions with step size four.
Surprisingly, this is significantly more challenging.

We give a thorough explanation for this in \cref{why-sum-hard}, but in brief, some arithmetic progressions of Chebyshev coefficients arise naturally as linear combinations of polynomial evaluations.
For example, $\sum_{k \geq 0} a_{2k} = \tfrac12(p(-1) + p(1))$, so we can conclude that $\abs{\sum_{k \geq 0} a_{2k}} \leq \supnorm{p}$.
Through Fourier analysis, this can be generalized to progressions with different step sizes, but breaks for progressions with certain offsets.
The sum $\sum_{k \geq 0} a_{4k+1}$ is one such example, and this is a quantity we need to bound to give a $\bigOt{d^2}$ bound on the iterates of odd matrix Clenshaw.  A na\"ive bound of $\bigO{d}$ follows from bounding each coefficient separately, but this results in a significantly worse running time down the line. 

In \cref{lem:odd-flipped-coefs}, we show that we can bound the above sum $\sum_{k\geq 0}a_{4k+1}$ by $\bigO{\log^2(d)\supnorm{p}}$.
This shows that this sum has near-total cancellation: despite our guarantee on coefficient being merely that it is bounded by a constant,\footnote{
    In fact, for any degree $d$, there are polynomials of that degree which are bounded and yet $\sum_{\ell=0}^d \abs{a_\ell} = \Omega(d)$ \cite[Theorems 8.2 and 8.3]{trefethen19}.
} the sum of $\bigO{d}$ of these coefficients is only poly-logarithmic in magnitude.
The proof proceeds by considering many different polynomial evaluations $p(x_0), p(x_1),\ldots, p(x_d)$, and trying to write the arithmetic progression as a linear combination of these evaluations $\sum c_k p(x_k)$.
This can be thought of as a linear system in the $c_k$'s, which we can then prove has a solution, and then bound the $\ell_1$ norm of this solution.
To do this, we argue that we can bound its solution $A^{-1}b$ by the solution of a different system, $C^{-1}b$, where $C$ is an matrix that is entrywise larger than $A$.
Since this is not true generally, we use strong properties of the particular matrix $A$ in the linear system at hand to prove this.

\section{Discussion}
\label{sec:discussion}

\paragraph{Comparison to \cite{cchw20}.}

The work of Chepurko, Clarkson, Horesh, Lin, and Woodruff \cite{cchw20} gives algorithms for low-rank approximation sampling and ridge regression with significantly different guarantees from that of prior quantum-inspired algorithms, since these atypical guarantees are more amenable to existing sketching techniques.
These do not match QSVT in all regimes, but one could argue that for the purposes of the particular problems targeted, they match them ``in spirit'' for practical regimes.

The guarantees obtained by \cite{cchw20} differ from those obtained by QSVT since they do not try to obtain an approximation of $f(A)b$, for some function $f$. They instead create data-dependent sketches and solve the corresponding optimization problem in the sketched space. Since the sketches preserve the cost, the solution in the sketched space has comparable cost to the optimal solution ($f(A)b$) in the original space.
However, such a guarantee does not immediately translate to a bound on the distance between the sketched solution and the optimal solution.
While leverage score and ridge leverage score sampling are the right primitives when sketching a problem of optimizing an objective value, quantum algorithms extend more broadly to general functions $f$ and achieve different guarantees, about closeness to the output vector.

\paragraph{Related techniques in randomized numerical linear algebra.}
Our work draws upon several ideas from the randomized numerical linear algebra literature.
We refer the reader to the surveys of Mahoney~\cite{Mahoney11} and Woodruff~\cite{w14} for the relevant background.
The asymmetric approximate matrix product sketch we introduce is closely related to the AMP sketches considered in~\cite{magdon2011using,magen2011low,cohen2015dimensionality} (non-oblivious) and~\cite{cohen2016optimal} (oblivious).
Our result about asymmetric AMP has essentially been observed by Magen and Zouzias~\cite{magen2011low}, which samples from the distribution $\Set{ \norm*{A_{i,*}}\norm*{B_{i,*}}/ \sum_{j \in [n]}\norm*{A_{j,*}}\norm*{B_{j,*}} }$, which our distribution oversamples.
We could have used other AMP sketches as well, but we use importance sampling because (1) we can prepare these sketches in linear-time, (2) we can prepare them in time independent of dimension in the quantum-inspired setting, and (3) the sampling probabilities are optimal when $B = A^\dagger$, as shown by Drineas, Kannan, and Mahoney~\cite{dkm06b}.

Regarding the bi-linear entry-wise sampling transform, creating sketches by sampling entries proportional to their squared-magnitude are well-known in the literature but are typically used to bound the operator norm of the matrix rather than any particular bi-linear form.
Bounding the operator norm directly happens to be too restrictive, as we discussed in the technical overview.
Regardless, entry-wise sampling sketches were introduced by Achlioptas and McSherry~\cite{am07} to speed up low-rank approximation.
Arora, Hazan and Kale used them in the context of faster algorithms for SDP solving~\cite{arora2006fast}.
Since then, a series of works have provided sharper guarantees for such sketches~\cite{gittens2009error,drineas2011note,kundu2014note,kundu2017recovering,braverman2021near}.

In addition to the usual sketching and sampling toolkit, we also use ideas from iterative algorithms in the numerical linear algebra literature.
Iterative methods, such as power iteration, Golub-Kahan bidiagonalization, and Arnoldi iteration are ubiquitous in scientific computing and are used for linear programming, low-rank approximation, and numerous other fundamental linear algebra primitives~\cite{saad1981krylov,trefethen1997numerical}.
Our work is closely related to iterative algorithms that use the sparsification sketches and exploit singular values gaps~\cite{allen2016lazysvd,gs18,musco2015randomized,bcw22}, but this prior work incurs dimension-dependent factors, which are crucial to avoid in our setting. 

\paragraph{Speedups in quantum machine learning.}
In view of the broader goal of finding quantum advantage for machine learning tasks, we spend some time here to point out when our assumptions do not hold, and therefore, dequantization results do not apply.
First, quantum-inspired linear algebra crucially relies on the input data being \emph{classical}, meaning that, for example, we are given input data as a list of entries, rather than as a quantum state, which does not have its amplitudes easily accessible.
Specifically, the weakest setting in which our algorithms work is when we have \emph{oversampling and query access} to input (\cref{defn:phi-sq-access}).
This type of access has extensibility properties similar to that of the block-encoding and we often get this access whenever quantum states and generic block-encodings can be prepared efficiently~\cite[Section 3]{cglltw19}.
However, as observed by Cotler, Huang, and McClean~\cite{chm21}, there are simple problems for which having access to entries makes trivial a problem that is exponentially hard when only given vectors via their corresponding states.
A dequantized algorithm simply proves that, if its quantum counterpart does achieve a, say, exponential speedup, then oversampling and query access queries, which we need to run a dequantized algorithm, must be exponentially hard to perform on a classical computer.
This explains why quantum principal component analysis has a dequantization~\cite{tang18b} when one has classical access to input as well as a proof of exponential quantum advantage~\cite{hbcclmnbkpm22} when one does not.
More generally, an algorithm with a dequantization is still useful when run on ``quantum data''.

Even given classical data, algorithms can resist dequantization by using high-degree sparsity-based QSVT rather than low-rank-based QSVT, as techniques do not extend to this regime.
Further, sparsity-based QSVT is BQP-complete (indeed, quantum computation can be described as applying block-encodings of low-sparsity unitary matrices), so we would not expect this to be dequantized in full.
In particular, we note that the current proposals that resist dequantization and potentially obtain a super-polynomial quantum speedup include Zhao, Fitzsimons, and Fitzsimons on Gaussian process regression~\cite{zff15} and Lloyd, Garnerone, and Zanardi on topological data analysis~\cite{lgz16}.
Though these works avoid the dequantization barrier to large quantum speedups, their potential for practical quantum speedups remains to be seen, since the decision on whether the speedups are ``practical'' or ``useful'' will ultimately come down to the particular choice of dataset and hardware.

Finally, even if a quantum algorithm can be simulated by a classical algorithm, there are certain problems for which having $\ket{v}$ is better than having the succinct representation of $v$.
For example, estimating the Forrelation~\cite{aa18,ac17} of $\ket{v}$ with another vector $\ket{u}$ requires exponentially many queries to $v$ when it is given as a classical list of entries, even with the ability to produce importance samples.
If the desired task is to output the Forrelation of an output of low-rank QSVT, this could potentially give a large speedup despite our results on low-rank QSVT.
However, this is a case of artificially adding hardness: we are unaware of problems in machine learning where Forrelation-type quantities are desired.
Such a problem would be a good candidate for QML speedup.

\paragraph{Open problems.}

The main question in this area remains: what kinds of quantum linear algebra can we make convincing arguments of quantum speedup for?
As our main result highlights, there is currently a quartic gap in the $\fnorm{A}$ dependence, which researchers believe is large enough for near-term speedups~\cite{babbush2021focus}.
Are there ways to attain this speedup with more practical quantum algorithms?

Another question arising in our work is that of polynomial evaluation.
We showed that computing a polynomial $p(x)$ with $\eps$-noisy arithmetic operations can be done to $\bigO{d^2\eps}$ error.
Can we show a more fine-grained result: that, for $p(x)$ bounded and $L$-Lipschitz, can we compute $p(x)$ to $\bigO{L\eps}$ error?

With respect to the classical SVT algorithm, a natural question is to ask for an improved algorithm for estimating $\tr(p(A))$ to relative error, for $p(x)$ a low-degree bounded polynomial and $A$ symmetric.
Using our result along with a Hutch-style estimate requires paying dimension dependence, but prior work shows that it is possible to achieve a running time of $\bigO{\nnz(A) + \poly(\fnorm{A}, d, 1/\eps)}$ with a high polynomial cost, at least when $p(x)$ is an approximation to $e^x$~\cite{cglltw19}.
Can we get an improved algorithm for matrix polynomial traces?

Finally, with respect to the techniques we introduce, we could ask for settings in which these can be applied and improvements to our techniques.
Can our matrix sparsification techniques be applied to other settings?

%!TEX root = main.tex

\section{Preliminaries}

\ewin{Checked this section until 4.3!}

We use the notation $f \lesssim g$ to denote the ordering $f = \bigO{g}$ (and respectively for $\gtrsim$ and $\eqsim$), and $\bigOt{f}$ is shorthand for $\bigO{f\poly(\log f)}$.
We use $\log$ to refer to the natural logarithm.
We use the Iverson bracket, where $\iver{P}$ is one if the predicate $P$ is true and zero otherwise.
For example, $\sum_{i=1}^d \sum_{j = i}^d a_{ij} = \sum_{i = 1}^d \sum_{j=1}^d a_{ij} \iver{j \geq i}$.
Finally, we assume that arithmetic operations (e.g., addition and multiplication of real numbers) and function evaluation oracles (computing $f(x)$ from $x$) take unit time.

\subsection{Linear Algebra}

For a vector $v \in \mathbb{C}^n$, the standard Euclidean norm of $v$ is denoted $\norm{v} \coloneqq (\sum_{i=1}^n \abs{v_i}^2)^{1/2}$.
The number of non-zero entries of $v$ is denoted $\znorm{z}$.
For a matrix $A \in \mathbb{C}^{m\times n}$, the \emph{Frobenius norm} of $A$ is $\fnorm{A} \coloneqq (\sum_{i=1}^m\sum_{j=1}^{n} \abs{A_{i,j}}^2)^{1/2}$ and the \emph{spectral norm} of $A$ is $\norm{A} \coloneqq \sup_{x\in \C^{n}, \norm{x}=1} \norm{Ax}$.
The $i$-th row and $j$-th column of $A$ are denoted $A_{i,*}$ and $A_{*, j}$, respectively.
$A^\dagger$ denotes the \emph{conjugate transpose} of $A$, and $A^+$ denotes the \emph{Moore-Penrose pseudoinverse} of $A$.

A \emph{singular value decomposition} (SVD) of $A$ is a representation $A = UDV^{\dag}$, where for $N \coloneqq \min(m,n)$, $U\in \C^{m\times N}$ and $V\in \C^{n\times N}$ are isometries and $D\in \R^{N\times N}$ is diagonal with $ \sigma_i\coloneqq D_{i,i}$ and $\sigma_1 \geq \sigma_2 \geq \cdots \geq \sigma_N \geq 0$.
We can also write this decomposition as $A = \sum_{i=1}^{N} \sigma_i U_{*,i} V_{*,i}^\dagger$.

We denote the set of singular values of $A$ by $\spec(A) \coloneqq \{\sigma_k\}_{k \in [N]}$.
The \emph{rank} of $A$ is the number of nonzero singular values and the \emph{stable rank} of $A$ is $\fnorm{A}^2/\norm{A}^2$.

For a Hermitian matrix $A \in \C^{n\times n}$ and a function $f: \R \to \C$, $f(A) \in \C^{n\times n}$ denotes the matrix where $f$ is applied to the eigenvalues of $A$.
Below, in \cref{def:qsvt}, we define notions of matrix function that extend to non-Hermitian matrices.

\subsection{Polynomials and the Chebyshev Basis}

We consider polynomials with real coefficients, $p \in \R[x]$.
For a Hermitian matrix $A$, $p(A)$ refers to evaluating the polynomial with $x$ replacing $A$; this is equivalent to applying $p$ to the eigenvalues of $A$.
The right definition for applying $p$ to a general non-square matrix is subtle; as done in QSVT, we restrict to settings where the matrix formed by evaluating $p$ on the singular values of $A$ coincides with the evaluation of a corresponding polynomial in $A$.

\begin{definition}[Definition 6.1 of \cite{cglltw19}] \label{def:qsvt}
For a matrix $A \in \C^{m\times n}$ and degree-$d$ polynomial $p(x) \in \R[x]$ of parity-$d$ (i.e., even if $d$ is even and odd if $d$ is odd), we define the notation $p(A)$ in the following way:
\begin{enumerate}
    \item If $p$ is \emph{even}, meaning that we can express $p(x) = q(x^2)$ for some polynomial $q(x)$, then $$p(A) \coloneqq q(A^\dagger A)= p(\sqrt{A^\dagger A}).$$
    \item If $p$ is \emph{odd}, meaning that we can express $p(x) = x\cdot q(x^2)$ for some polynomial $q(x)$, then $$p(A) \coloneqq A\cdot q(\sqrt{A^\dagger A}).$$
\end{enumerate}
\end{definition}

For example, if $p(x) = x^2 + 1$, then $p(A) = A^\dagger A + I$, and if $p(x) = x^3 + x$, then $p(A) = AA^\dagger A + A$.
Looking at a singular value decomposition $A = \sum \sigma_iU_{*,i}V_{*,i}^\dagger$, $p(A) = \sum p(\sigma_i) U_{*,i}V_{*,i}^\dagger$ when $p$ is odd and $p(A) = \sum p(\sigma_i) V_{*,i}V_{*,i}^\dagger$ when $p$ is even, thus making this definition coincide with the singular value transformation as given in \cite[Definition 16]{gslw18}.

We work in the Chebyshev basis of polynomials throughout.
Let $T_\ell(x)$ and $U_\ell(x)$ denote the degree-$\ell$ Chebyshev polynomials of the first and second kind, respectively.
They can be defined on $[-1,1]$ via
\begin{align}
    T_\ell(\cos(\theta)) &= \cos(\ell\theta) \text{ and } \\
    U_\ell(\cos(\theta)) &= \sin((\ell+1)x)/\sin(x),
\end{align}
but we will give attention to their recursive definitions, since we use them for computation.
\begin{align}
    T_0(x) &= 1 & U_0(x) &= 1 \nonumber\\
    T_1(x) &= x & U_1(x) &= 2x \label{cheb-recursive-definition}\\
    T_k(x) &= 2x\cdot T_{k-1}(x) - T_{k-2}(x)
    & U_k(x) &= 2x\cdot U_{k-1}(x) - U_{k-2}(x) \nonumber
\end{align}
For a function $f: [-1,1] \to \mathbb{R}$, we denote $\supnorm{f} \coloneqq \sup_{x \in [-1,1]}\abs{f(x)}$.
In this norm, the Chebyshev polynomials have $\supnorm{T_k(x)} = 1$ and $\supnorm{U_k(x)} = n+1$.
More generally, for a function $f: S \to \mathbb{R}$ for $S \subset \mathbb{R}$, we denote $\norm{f}_S \coloneqq \sup_{x \in S} \abs{f(x)}$, so that $\supnorm{f} = \norm{f}_{[-1,1]}$.

We use the following well-known properties of Chebyshev polynomials from Mason and Handscomb~\cite{mh02}.
\begin{align}
    T_i(x) &= \tfrac12(U_i(x) - U_{i-2}(x)) \text{ for } i \geq 1 \label{eq:t-to-u} \\
    U_i(x) &= \sum_{j \geq 0} T_{i - 2j}(x)(1 + \iver{i - 2j \neq 0}) \label{eq:u-to-t} \\
    T_{jk}(x) &= T_j(T_k(x)) \label{eq:t-composition} \\
    U_{2k+1}(x) &= U_k(T_2(x))U_1(x) = U_k(T_2(x))2x \label{eq:u-composition} \\
    \tfrac{d}{dx} T_k(x) &= k U_{k-1}(x) \label{eq:t-differentiation}
\end{align}
Any Lipschitz continuous function\footnote{
    We call a function $f: [-1,1] \to \mathbb{R}$ Lipschitz continuous if there exists a constant $C$ such that $\abs{f(x) - f(y)} \leq C\abs{x - y}$ for $x, y \in [-1,1]$.
} $f: [-1,1] \to \mathbb{R}$ can be written as a (unique) linear combination of Chebyshev polynomials, $f(x) = \sum_\ell a_\ell T_\ell(x)$ (where we interpret $T_{\ell}(x) \equiv 0$ for negative $\ell$).
When $f$ is a degree-$d$ polynomial, then $a_\ell = 0$ for all $\ell > d$.
A common way to approximate a function is by truncating its Chebyshev series expansion; we denote this operation by $f_k(x) \coloneqq \sum_{\ell=0}^k a_\ell T_\ell(x)$, and we denote the remainder to be $\bar{f}_k(x) \coloneqq f(x) - f_k(x) = \sum_{\ell=k+1}^{\infty} a_\ell T_\ell(x)$.
Standard results in approximation give bounds on $\supnorm{f - f_k}$ for various smoothness assumptions on $f$.
We recommend the book by Trefethen on this topic~\cite{trefethen19}, and use results from it throughout.
We list here some basic lemmas for future use.

\begin{lemma}[{Coefficient bound, consequence of \cite[Eq. (3.12)]{trefethen19}}]
\label{lem:coefficient-bound}
    Let $f: [-1,1] \to \mathbb{R}$ be a Lipschitz continuous function.
    Then all its Chebyshev coefficients $a_k$ are bounded: $\abs{a_k} \leq 2\supnorm{f}$.
\end{lemma}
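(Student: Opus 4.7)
My plan is to prove this as a direct consequence of the integral representation of the Chebyshev coefficients. Recall that for a Lipschitz continuous $f: [-1,1] \to \mathbb{R}$, the coefficients in the Chebyshev expansion $f(x) = \sum_\ell a_\ell T_\ell(x)$ are given (see \cite[Eq.~(3.12)]{trefethen19}) by
\begin{equation*}
    a_k = \frac{2}{\pi}\int_{-1}^1 \frac{f(x)\,T_k(x)}{\sqrt{1-x^2}}\,dx \quad \text{for } k \geq 1,
    \qquad a_0 = \frac{1}{\pi}\int_{-1}^1 \frac{f(x)}{\sqrt{1-x^2}}\,dx.
\end{equation*}
Lipschitz continuity ensures in particular that $f$ is bounded on $[-1,1]$, so these integrals are well-defined.

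The key step is to perform the standard trigonometric substitution $x = \cos\theta$, under which $dx = -\sin\theta\,d\theta$ and $\sqrt{1-x^2} = \sin\theta$ for $\theta \in [0,\pi]$, and moreover $T_k(\cos\theta) = \cos(k\theta)$. This gives the clean expression
\begin{equation*}
    a_k = \frac{2}{\pi}\int_0^\pi f(\cos\theta)\cos(k\theta)\,d\theta \quad (k \geq 1).
\end{equation*}
At this point, the bound falls out immediately: using $\abs{\cos(k\theta)} \leq 1$ together with $\abs{f(\cos\theta)} \leq \supnorm{f}$, we obtain
\begin{equation*}
    \abs{a_k} \leq \frac{2}{\pi}\int_0^\pi \supnorm{f}\,d\theta = 2\supnorm{f},
\end{equation*}
and analogously $\abs{a_0} \leq \supnorm{f} \leq 2\supnorm{f}$, which completes the argument.

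There is essentially no obstacle here---this is a textbook computation and Trefethen's book is explicitly cited for the underlying integral formula. The only minor subtlety worth flagging is that $a_0$ gets a factor of $1/\pi$ rather than $2/\pi$, which is why the stated bound $2\supnorm{f}$ is not tight for $k=0$ but is uniform across all $k$.
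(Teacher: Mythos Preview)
Your proposal is correct and is exactly the argument the paper intends: the lemma is stated in the paper without proof, merely as a ``consequence of \cite[Eq.~(3.12)]{trefethen19}'', and your computation---take the integral formula for $a_k$, substitute $x = \cos\theta$, and bound pointwise---is precisely how that consequence is obtained.
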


\begin{lemma}\label{slightly-beyond-sup}
    For a degree-$d$ polynomial $p$, and $\delta = \frac{1}{4d^2}$,
    \begin{align*}
        \norm{p}_{[-1-\delta, 1+\delta]} \leq e\norm{p}_{[-1,1]}.
    \end{align*}
\end{lemma}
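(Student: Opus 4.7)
The plan is to reduce bounding an arbitrary degree-$d$ polynomial $p$ just outside $[-1,1]$ to bounding the Chebyshev polynomial $T_d$ itself, and then to compute $T_d(1+\delta)$ in closed form.

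First, after normalizing so that $\supnorm{p} = 1$, I would invoke the Chebyshev growth theorem: among all polynomials $q$ of degree at most $d$ with $\supnorm{q} \leq 1$, the polynomial $T_d$ grows the fastest outside $[-1,1]$, i.e.\ $\abs{q(x)} \leq \abs{T_d(x)}$ for every $x \notin [-1,1]$. Combined with the observations that $T_d$ has definite parity and is monotone on $[1,\infty)$ (so its supremum over $[-1-\delta,1+\delta]$ is attained at the endpoints), this reduces the lemma to showing $T_d(1+\delta) \leq e$.

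Second, I would use the closed-form identity $T_d(\cosh y) = \cosh(dy)$ --- the analytic continuation of $T_d(\cos\theta) = \cos(d\theta)$ past $x=1$ --- to write $T_d(1+\delta) = \cosh\bigl(d\operatorname{arccosh}(1+\delta)\bigr)$. The elementary inequality $\cosh(y) \geq 1 + y^2/2$ (read off from the Taylor series) gives $\operatorname{arccosh}(1+\delta) \leq \sqrt{2\delta}$, so plugging in $\delta = \tfrac{1}{4d^2}$ yields $d\operatorname{arccosh}(1+\delta) \leq \tfrac{1}{\sqrt{2}}$. Hence
\[
T_d(1+\delta) \;\leq\; \cosh(1/\sqrt{2}) \;=\; \tfrac{1}{2}\bigl(e^{1/\sqrt{2}} + e^{-1/\sqrt{2}}\bigr) \;<\; 1.27 \;<\; e,
\]
and combining with the trivial bound $\abs{p(x)} \leq \supnorm{p}$ on $[-1,1]$ completes the proof.

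The computation is tight with considerable numerical slack (the true constant is roughly $1.26$ rather than $e$), so the only real question is which black box to cite. Chebyshev growth is classical but may merit a brief reproof. A clean self-contained alternative is the Bernstein growth theorem: pick $\rho > 1$ with $(\rho+\rho^{-1})/2 = 1+\delta$; then $\rho = 1 + \sqrt{2\delta} + O(\delta)$, so $\log\rho \leq \sqrt{2\delta}$, giving $\abs{p(1+\delta)} \leq \rho^d \supnorm{p} \leq e^{d\sqrt{2\delta}}\supnorm{p} = e^{1/\sqrt{2}}\supnorm{p} < e\supnorm{p}$. Either path works, and the main obstacle is merely editorial --- choosing between them.
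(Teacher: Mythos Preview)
Your proof is correct and follows essentially the same approach as the paper: reduce to $T_d$ via the Chebyshev extremal property, then bound $T_d(1+\delta)$ explicitly. The only cosmetic difference is that the paper uses the algebraic form $T_d(x)=\tfrac12\bigl((x+\sqrt{x^2-1})^d+(x-\sqrt{x^2-1})^d\bigr)$ and bounds it by $\exp(d(\delta+\sqrt{2\delta+\delta^2}))\leq e$, whereas your $\cosh$ formulation is the same identity and actually yields the sharper constant $\cosh(1/\sqrt{2})\approx 1.26$.
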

\begin{proof}
    Without loss of generality, take $\supnorm{p} = 1$.
    By \cite[Proposition~2.4]{sv14} and basic properties of Chebyshev polynomials,
    \begin{align*}
        \norm{p(x)}_{[-1-\delta, 1+\delta]}
        \leq \norm{T_d(x)}_{[-1-\delta, 1+\delta]}
        = T_d(1+\delta).
    \end{align*}
    Further, by Proposition 2.5 in~\cite{sv14}, we can evaluate $T_d(1+\delta)$ via the formula
    \begin{align*}
        T_d(x) &= \frac12\Big(x + \sqrt{x^2-1}\Big)^d + \frac12\Big(x - \sqrt{x^2-1}\Big)^d \\
        T_d(1+\delta) &= \frac12\Big(1 + \delta + \sqrt{2\delta + \delta^2}\Big)^d + \frac12\Big(1 + \delta - \sqrt{2\delta + \delta^2}\Big)^d \\
        &\leq \exp\Big(d(\delta + \sqrt{2\delta + \delta^2})\Big) \\
        &\leq \exp\Big(\tfrac{1}{4d} + \sqrt{\tfrac{1}{2} + \tfrac{1}{16d^2}}\Big)
        \leq e. \qedhere
    \end{align*}
\end{proof}

\subsection{Sampling and Query Access}
\label{subsec:sq}

We now introduce the ``quantum-inspired'' access model, following the exposition from prior work~\cite{cglltw19}.
We refer the reader there for a more thorough investigation of this access model.
From a sketching perspective, this model encompasses ``the set of algorithms that can be performed in time independent of input dimension, using only $\ell_2^2$ sampling'', and is a decent classical analogue for the input given to a quantum machine learning algorithms operating on classical data.

\begin{definition}[Sampling and query access to a vector, {\cite[Definition 3.2]{cglltw19}}]
  \label{defn:sq-access}
  For a vector $v \in \C^n$, we have $\sq(v)$, \emph{sampling and query access} to $v$, if we can:
  \begin{enumerate}
    \item query for entries of $v$;
    \item obtain independent samples $s \in [n]$ where the probability that $s$ is some $i$ is $\abs*{v_i}^2/\norm{v}^2$;
    \item query for $\|v\|$.
  \end{enumerate}
\end{definition}

We will use our pre-processing time to construct data structures that give us sampling and query access to our input.
The samples from $\sq(v)$ are called ``$\ell_2^2$ importance samples'' in the randomized numerical linear algebra literature; we will call them samples from $v$.
Such samples are equivalent to measurements of the quantum state $|v\rangle \coloneqq \frac{1}{\|v\|}\sum v_i|i\rangle$ in the computational basis.
Sampling and query access is closed under taking linear combinations, once we introduce slack in the form of oversampling.
\begin{definition}[Oversampling and query access, {\cite[Definition 3.4]{cglltw19}}]
  \label{defn:phi-sq-access}
  For $v \in \C^n$ and $\phi \geq 1$, we have $\sq_\phi(v)$, \emph{$\phi$-oversampling and query access} to $v$, if we can query for entries of $v$ and we have $\sq(\tilde{v})$, where $\tilde{v} \in \C^n$ is a vector satisfying $\|\tilde{v}\|^2 = \phi\|v\|^2$ and $\abs{\tilde{v}_i}^2 \geq \abs{v_i}^2$ for all $i \in [n]$.
\end{definition}

The $\ell_2^2$ distribution over $\tilde{v}$ $\phi$-oversamples the distribution over $v$:
\begin{align*}
  \frac{\abs{\tilde{v}_i}^2}{\|\tilde{v}\|^2} = \frac{\abs{\tilde{v}_i}^2}{\phi\|v\|^2} \geq \frac{1}{\phi}\frac{\abs{v_i}^2}{\|v\|^2}.
\end{align*}
Intuitively speaking, as a consequence, estimators that use samples from $v$ can also use samples from $\tilde{v}$ at the expense of a factor $\phi$ increase in the number of samples used.
Formally, we can prove that oversampling access implies an approximate version of the usual sampling access:

\begin{lemma}[Oversampling to sampling, {\cite[Lemma 3.5]{cglltw19}}] \label{lem:b-sq-approx}
    For $u \in \mathbb{C}^n$, suppose we are given $\sq_\phi(u)$ and some $\delta \in (0,1]$.
    We can sample from $u$ with probability $\geq 1-\delta$ in $\bigO{\phi \log\frac{1}{\delta}}$ queries to $\sq_\phi(u)$.
    We can also estimate $\|u\|$ to $\nu$ multiplicative error for $\nu \in (0,1]$ with probability $\geq 1-\delta$ in $\bigO{\frac{\phi}{\nu^2}\log\frac{1}{\delta}}$ queries.
    Both of these algorithms take linear time in the number of queries.
\end{lemma}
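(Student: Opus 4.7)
The plan is to prove each of the two claims via the standard rejection-sampling and importance-sampling arguments available from oversampling access. The key observation throughout is that, writing $p_i \coloneqq \abs{\tilde v_i}^2/\norm{\tilde v}^2$ and $q_i \coloneqq \abs{v_i}^2/\norm{v}^2$, the oversampling hypothesis gives $q_i/p_i = \phi \abs{v_i}^2/\abs{\tilde v_i}^2 \in [0,\phi]$, so the two distributions are within a multiplicative factor $\phi$ of each other entry-wise, which is exactly what one needs to simulate $\mathcal{D}_v$ from $\mathcal{D}_{\tilde v}$.

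For the first claim, I would use rejection sampling. The subroutine draws $i \sim \mathcal{D}_{\tilde v}$ using $\sq(\tilde v)$, queries $v_i$ and $\tilde v_i$ using one query each, and then accepts the sample with probability $\abs{v_i}^2/\abs{\tilde v_i}^2 \in [0,1]$ (well-defined since $\abs{\tilde v_i} \geq \abs{v_i}$). A direct computation shows the marginal probability of accepting and returning $i$ is
\begin{align*}
\frac{\abs{\tilde v_i}^2}{\norm{\tilde v}^2}\cdot\frac{\abs{v_i}^2}{\abs{\tilde v_i}^2} = \frac{\abs{v_i}^2}{\phi\norm{v}^2},
\end{align*}
so the overall acceptance probability is exactly $1/\phi$ and, conditioned on acceptance, the returned index is distributed as $\mathcal{D}_v$. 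Hence $\bigO{\phi\log(1/\delta)}$ independent trials succeed with probability at least $1-\delta$, each trial using $\bigO{1}$ queries.

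For the second claim, I would use the usual importance-sampling estimator for the norm. Draw $i \sim \mathcal{D}_{\tilde v}$, query $v_i$ and $\tilde v_i$, and form
\begin{align*}
X \coloneqq \frac{\abs{v_i}^2}{\abs{\tilde v_i}^2}\norm{\tilde v}^2.
\end{align*}
Then $\mathbb{E}[X] = \sum_i \abs{v_i}^2 = \norm{v}^2$, and since $\abs{v_i}^2/\abs{\tilde v_i}^2 \leq 1$ we have the almost-sure bound $0 \leq X \leq \norm{\tilde v}^2 = \phi\norm{v}^2$. Averaging $t$ i.i.d. copies gives, by Chebyshev's inequality (using $\mathrm{Var}(X) \leq \mathbb{E}[X^2] \leq \phi\norm{v}^2\cdot \mathbb{E}[X] = \phi\norm{v}^4$), a $(1\pm \nu)$-multiplicative estimate of $\norm{v}^2$ with constant probability as soon as $t = \Theta(\phi/\nu^2)$. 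Taking the median of $\bigO{\log(1/\delta)}$ such estimates boosts the success probability to $1-\delta$, and a final square root converts the $\nu$-multiplicative estimate of $\norm{v}^2$ into an $\bigO{\nu}$-multiplicative estimate of $\norm{v}$, which we rescale.

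The only mild obstacle is getting the right boundedness of $X$ to apply a concentration inequality with the stated dependence; this is where the pointwise oversampling bound $\abs{\tilde v_i}^2 \geq \abs{v_i}^2$ is essential, since a weaker control only on the total mass $\norm{\tilde v}^2 = \phi\norm{v}^2$ would not ensure $X \leq \phi\norm{v}^2$ almost surely and would force a larger sample complexity. Both algorithms are dominated by the number of independent trials, and each trial only performs a constant number of queries to $\sq_\phi(v)$ and $\bigO{1}$ arithmetic operations, giving the claimed linear time in the number of queries.
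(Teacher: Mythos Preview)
Your proof is correct and follows the standard rejection-sampling and importance-sampling arguments. Note that the paper does not actually prove this lemma; it is stated with a citation to \cite[Lemma~3.5]{cglltw19} and used as a black box, so there is no in-paper proof to compare against. Your argument is precisely the one given in the cited reference.
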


Generally, compared to a quantum algorithm that can output (and measure) a desired vector $|v\rangle$, our algorithms will output $\sq_\phi(u)$ such that $\|u - v\|$ is small.
Our analysis will give a bound on $\phi$ to show that we can output samples from $|v\rangle$.
As for error, a bound $\norm{u - v} \leq \eps\norm{v}$ implies that measurements from $u$ and measurements from $v$ are $2\eps$-close in total variation distance {\cite[Lemma~4.1]{tang18a}}.
Now, we show that oversampling and query access of vectors is closed under taking small linear combinations.

\begin{lemma}[Linear combinations, {\cite[Lemma 3.6]{cglltw19}}] \label{lemma:sample-Mv}
  Given $\sq_{\varphi_t}(v^{(t)}) \in \C^n$ and $\lambda_t \in \C$ for all $t \in [\tau]$, we have $\sq_\phi(\sum_{t=1}^{\tau} \lambda_tv^{(t)})$ for $\phi = \tau\frac{\sum \varphi_t\|\lambda_tv^{(t)}\|^2}{\|\sum \lambda_tv^{(t)}\|^2}$.
  After paying the pre-processing cost of querying for each of the norms of the $\tilde{v}^{(t)}$'s, the cost of any query is equal to the cost of sampling from any of the $v^{(t)}$'s plus the cost of querying an entry from all of the $v^{(t)}$'s.
\end{lemma}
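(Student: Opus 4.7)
The plan is to construct an explicit oversampling witness $\tilde{u}$ for $u \coloneqq \sum_{t=1}^{\tau} \lambda_t v_t$, assembled from the witnesses $\tilde{v}_t$ provided by each $\sq_{\varphi_t}(v_t)$. Motivated by Cauchy--Schwarz, I would set
\[
    \tilde{u}(i)^2 \coloneqq \tau \sum_{t=1}^{\tau} |\lambda_t|^2\,|\tilde{v}_t(i)|^2,
\]
and then verify the two defining properties of an oversampling witness and describe how the three primitives of $\sq(\tilde{u})$ are implemented from those of the $\sq(\tilde{v}_t)$.

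For the witness properties, entrywise domination is immediate: applying Cauchy--Schwarz to $(\lambda_t)_t$ and $(v_t(i))_t$ and then using the pointwise bound $|\tilde{v}_t(i)|\geq |v_t(i)|$ gives $|u(i)|^2 \leq \tau\sum_t|\lambda_t|^2|\tilde{v}_t(i)|^2 = \tilde{u}(i)^2$. Summing over $i$ and using $\|\tilde{v}_t\|^2 = \varphi_t\|v_t\|^2$ yields $\|\tilde{u}\|^2 = \tau\sum_t \varphi_t\|\lambda_t v_t\|^2$, which is exactly $\phi\|u\|^2$ for the $\phi$ in the statement.

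For the algorithmic part, I would precompute the $\tau$ norms $\|\tilde{v}_t\|$ in the preprocessing phase; these also determine $\|\tilde{u}\|$. Thereafter a norm query is $O(1)$; an entry query to $\tilde{u}(i)$ uses one entry query from each $\tilde{v}_t$ combined arithmetically; and an $\ell_2^2$ sample from $\tilde{u}$ is drawn in two stages, first sampling an index $t$ with probability proportional to $|\lambda_t|^2\|\tilde{v}_t\|^2$ using the cached norms, then invoking $\sq(\tilde{v}_t)$ once. Unfolding the joint law shows that $i$ is produced with probability $\tilde{u}(i)^2/\|\tilde{u}\|^2$, as required. Entry queries to $u$ itself are handled analogously by querying each $v_t(i)$ and taking the linear combination, which matches the per-query cost bookkeeping claimed in the statement.

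There is no real obstacle; the only non-obvious step is the choice of witness. One cannot take $\tilde{u} = \sum_t \lambda_t \tilde{v}_t$, because cancellations in the signed sum can violate entrywise domination even when each $\tilde{v}_t$ dominates $v_t$ coordinatewise. Building the witness from squared magnitudes sidesteps this issue, and the multiplicative factor of $\tau$ appearing in $\phi$ is precisely the Cauchy--Schwarz cost of passing from the signed combination to the magnitude combination.
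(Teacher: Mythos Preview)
Your proof is correct and is precisely the standard argument: construct the witness via $\tilde{u}(i)^2 = \tau\sum_t |\lambda_t|^2|\tilde{v}_t(i)|^2$, verify domination via Cauchy--Schwarz, sum to get the claimed $\phi$, and implement sampling by the two-stage procedure (sample $t\propto |\lambda_t|^2\|\tilde{v}_t\|^2$, then sample $i$ from $\tilde{v}_t$). The paper itself does not prove this lemma---it is simply cited from \cite{cglltw19}, where the proof is exactly the one you wrote.
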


We can also define (over)sampling and query access for a matrix $A$ as having sampling and query access to all of the rows of $A$, along with the vector of row norms of $A$.

\begin{definition}[Oversampling and query access to a matrix, {\cite[Definition~3.7]{cglltw19}}]
  \label{defn:sampling-A}
For a matrix $A \in \C^{m\times n}$, we have $\sq(A)$ if we have $\sq(A_{i,*})$ for all $i \in [m]$ and $\sq(a)$ for $a \in \R^m$ the vector of row norms ($a_i \coloneqq \norm{A_{i,*}}$).

We have $\sq_\phi(A)$ if we can query entries of $A$ and we have $\sq(\tilde{A})$ for $\tilde{A} \in \C^{m\times n}$ satisfying $\fnorm{\tilde{A}}^2 = \phi\fnorm{A}^2$ and $\abs{\tilde{A}_{i,j}}^2 \geq \abs{A_{i,j}}^2$ for all $(i,j) \in [m]\times[n]$.
\end{definition}

Sampling and query access is relevant to QML because many settings where we can apply QML to classical data (that is, data given as a list of entries, with some allowed pre-processing or in some data structure) also admits efficient $\sq$ access to input.
So, we concern ourselves with the setting that we can perform $\sq$ queries in, say, $\bigO{1}$ time.

\begin{remark} \label{rmk:when-sq-access}
We can get $\sq$ access to input matrices and vectors in input-sparsity time.
Given $v \in \C^n$ in the standard RAM model, the alias method \cite{Vose1991} takes $\Theta(\nnz(v))$ pre-processing time to output a data structure that uses $\Theta(\nnz(v))$ space and can sample from $v$ in $\Theta(1)$ time.
In other words, we can get $\sq(v)$ with constant-time queries in $\bigO{\nnz(v)}$ time, and by extension, for a matrix $A \in \C^{m\times n}$, $\sq(A)$ with constant-time queries in $\bigO{\nnz(v)}$ time.\footnote{This holds in the word-RAM model, with an additional $\log$ overhead when considering bit complexity.}
\end{remark}

The algorithms presented here will take linear-time pre-processing to construct the above data structure, among other things.
However, they will still run in time independent of dimension without this pre-processing, supposing that we have efficient $\sq$ access to input.
Finally, we synthesize the prior results on linear combinations into the following corollary, which shows that a certain type of succinct description of a vector $u$ is sufficient to get approximate sampling and query access to it.

\begin{corollary} \label{Ax-sampling}
    Suppose we are given sampling and query access to a matrix $A \in \C^{m\times n}$ and a vector $b \in \C^n$, where we can respond to queries in $\bigO{1}$ time.
    Further suppose we have a vector $u \in C^n$ implicitly represented by $v \in \C^m$ and $\eta$, with $u = A^\dagger v + \eta b$.
    Then we can:
    \begin{enumerate}[label=(\roman*)]
        \item Compute entries of $u$ in $\bigO{\znorm{v}}$ time;
        \item Sample $i \in [n]$ with probability $\frac{\abs{u_i}^2}{\norm{u}^2}$ in $\bigO[\Big]{\znorm{v}(\znorm{v} \sum_k \norm{v_kA_{k, *}}^2 + \eta^2\norm{b}^2)\frac{1}{\norm{u}^2}\log\frac{1}{\delta}}$ time with probability $\geq 1-\delta$;
        \item Estimate $\norm{u}^2$ to $\nu$ relative error in $\bigO[\Big]{\znorm{v}(\znorm{v} \sum_k \norm{v_kA_{k, *}}^2 + \eta^2\norm{b}^2)\frac{1}{\nu^2\norm{u}^2}\log\frac{1}{\delta}}$ time with probability $\geq 1-\delta$.
    \end{enumerate}
\end{corollary}
\begin{proof}
By \cref{lemma:sample-Mv}, we have $\sq_\phi(A^\dagger v)$ for 
\begin{align*}
    \phi = \znorm{v} \frac{\sum_k \norm{v_kA_{k, *}}^2}{\norm{A^\dagger v}^2}
\end{align*}
and a query cost of $\bigO{\znorm{v}}$.
Applying \cref{lemma:sample-Mv} again, we have $\sq_\varphi(A^\dagger v + \eta b)$ for
\begin{align*}
    \varphi = 2\frac{\znorm{v} \sum_k \norm{v_kA_{k, *}}^2 + \eta^2\norm{b}^2}{\norm{A^\dagger v + \eta b}^2}.
\end{align*}
By \cref{lem:b-sq-approx}, we can draw one sample from $u$ with probability $\geq 1-\delta$ with $\bigO{\varphi\log\frac{1}{\delta}}$ queries to $\sq_\phi(A^\dagger v)$, each of which takes $\bigO{\znorm{v}}$ time.
Similarly, we can estimate $\norm{u}^2$ to $\nu$ multiplicative error with $\bigO{\frac{\varphi}{\nu^2}\log\frac{1}{\delta}}$ queries to $\sq_\phi(A^\dagger v)$.
\end{proof}

For intuition, the running times above are only large when $u = A^\dagger v + \eta b$ has significantly smaller norm than the magnitude of the summands $v_i (A_{i,*})^\dagger$ would suggest.
Usually, in our applications, we can intuitively think about this overhead being small when the desired output vector mostly lies in a subspace spanned by singular vectors with large singular values in our low-rank input.
Quantum algorithms also have the same kind of overhead.
Namely, the QSVT framework encodes this in the subnormalization constant $\alpha$ of block-encodings, and the overhead from the subnormalization appears during post-selection~\cite{gslw18}.
When this cancellation is not too large, the resulting overhead typically does not affect too badly the runtime of our applications.

%!TEX root = main.tex

\section{Extending the Sketching Toolkit}

In this section, we show how to extend the modern sketching toolkit (see e.g.~\cite{w14}) in two ways: (a) we provide a sub-sampling sketch that preserves bi-linear forms with only a inverse quadratic dependence on $\eps$ and (b) a non-oblivious, $\ell_2^2$ sampling based asymmetric approximate matrix product sketch. 

\subsection{The Bi-Linear Entry-wise Sampling Transform}
\label{subsec:best-description}

\begin{definition}[Bi-linear Entry-wise Sparsifying Transform] \label{def:ssketch}
    For a matrix $A \in \C^{m \times n}$, the \ssketch{} of $A$ with parameter $T$ is a matrix sampled as follows: for all $k \in [T]$,
    \begin{equation*}
        M^{(k)} = \frac{1}{p_{i,j} } A_{i,j} e_i e_j^\dagger \hspace{0.2in } \textrm{ with probability }  p_{i,j} = \frac{ \abs*{A_{i,j} }^2 }{\Norm{A}_F^2}.
    \end{equation*} 
    Then, 
    \begin{equation*}
    \ssketch_T(A) = \frac{1}{T} \sum_{k \in [T]} M^{(k)}.
    \end{equation*}
\end{definition}

\begin{lemma}[Basic properties of the Bi-Linear Entry-wise Sparsifying Transform]
\label{lem:basic-facts}
    For a matrix $A \in \C^{m \times n}$, let $M = \ssketch{} (A)$ with parameter $T$. Then, for $X \in \C^{m\times m}$,  $u \in \C^{m}$ and $v \in \C^{n}$, we have
    \begin{align}
        \nnz(M) &\leq T; \label{ssketch:nnz}\\
        \expecf{}{M } &= A; \label{ssketch:expect}\\
        \expecf{}{  M^\dagger X M  - A^\dagger XA } &= \frac{1}{T} \Paren{ \Tr(X)\|A\|_F^2I - A^\dagger XA }. \label{ssketch:moment}
    \end{align}
\end{lemma}
\begin{proof}
Observe, since $M$ is an average of $T$ sub-samples, each of which are $1$-sparse, $M$ has at most $T$ non-zero entries. Next,
\begin{equation*}
\expecf{}{M} = \frac{1}{T} \sum_{k \in T} \expec{}{M^{(k)}} = \sum_{ i \in [m]} \sum_{ j \in [n] } p_{i,j} \frac{ A_{i,j}}{p_{i,j}} e_i e_j^\dagger = A .
\end{equation*}
Similarly,
\begin{align*}
\expecf{}{M^\dagger X M } & =  \frac{1}{T^2} \expecf{}{  \Paren[\Big]{ \sum_{k \in [T]} M^{(k)} }^\dagger  X\Paren[\Big]{ \sum_{k \in [T]} M^{(k)} }  } \\
& = \frac{1}{T^2}  \expecf{}{  \Paren[\Big]{ \sum_{k , k' \in [T]} \Paren*{ M^{(k)} }^\dagger  X M^{(k')}}  }  \\
& = \frac{1}{T^2} \Paren[\Big]{   \Paren[\Big]{ \sum_{k \neq  k' \in [T]} \expec{}{  M^{(k)} } ^\dagger \cdot  X  \cdot \expec{}{ M^{(k')}}  }   +    \Paren[\Big]{ \sum_{k   \in [T]} \expec{}{ \Paren*{ M^{(k)} }^\dagger  X M^{(k)}}  } }  \\
& = \Paren[\Big]{ 1- \frac{1}{T} } A^\dagger X A + \frac{1}{T} \sum_{i \in [m], j \in [n] } p_{i,j }  \frac{ A_{i,j}^2 }{p_{i,j}^2 } e_j e_i^\dagger X e_i e_j^\dagger \\
& = \Paren[\Big]{ 1- \frac{1}{T} } A^\dagger X A + \frac{ \Norm{A}_F^2 }{T} \sum_{i \in [m], j \in [n] } X_{i,i} e_j  e_j^\dagger \\
& = \Paren[\Big]{ 1- \frac{1}{T} } A^\dagger X A + \frac{ \Norm{A}_F^2 \Tr(X) }{T}  I . \qedhere
\end{align*}
\end{proof}

We list a simple consequence of these bounds that we use later.
\begin{corollary} \label{cor:ssketch-simple}
For a matrix $A \in \C^{m\times n}$, let $M = \ssketch(A)$ with parameter $T$.
Then, for matrices $X \in \C^{\ell \times m}$ and $Y \in \C^{n\times d}$,
\begin{align*}
    \Pr\Big[\fnorm{XMY - XAY} \geq \frac{\fnorm{X}\fnorm{A}\fnorm{Y}}{\sqrt{\delta T}}\Big] \leq \delta.
\end{align*}
\end{corollary}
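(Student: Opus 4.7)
The strategy is to apply the triangle inequality to reduce to bounding $\fnorm{X(M-A)Y}$, compute the expected squared Frobenius norm of this centered quantity by summing the entry-wise variances given by \eqref{ssketch:biline}, and then conclude via Markov's inequality. The point is that the variance formula for bilinear forms applies entry-by-entry to $X(M-A)Y$, and the factorized bound $\fnorm{X}^2\fnorm{A}^2\fnorm{Y}^2/T$ comes out after summing.

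\textbf{Step 1: Triangle inequality.} By \eqref{ssketch:expect}, $\expecf{}{M} = A$, so it suffices to bound $\fnorm{X(M-A)Y}$ since $\fnorm{XMY} \leq \fnorm{XAY} + \fnorm{X(M-A)Y}$.

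\textbf{Step 2: Variance computation.} Writing
\begin{align*}
    \fnorm{X(M-A)Y}^2 = \sum_{i \in [\ell], j \in [d]} \bigl| X_{i,*}(M-A) Y_{*,j} \bigr|^2,
\end{align*}
we apply \eqref{ssketch:biline} with $u = X_{i,*}^\dagger$ and $v = Y_{*,j}$ to each summand. Since $\expecf{}{u^\dagger M v} = u^\dagger A v$, this gives
\begin{align*}
    \expecf{}{\bigl| X_{i,*}(M-A) Y_{*,j} \bigr|^2} \leq \tfrac{1}{T}\fnorm{A}^2 \|X_{i,*}\|^2 \|Y_{*,j}\|^2.
\end{align*}
Summing over $i$ and $j$ then yields
\begin{align*}
    \expecf{}{\fnorm{X(M-A)Y}^2} \leq \tfrac{1}{T}\fnorm{A}^2 \fnorm{X}^2 \fnorm{Y}^2.
\end{align*}

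\textbf{Step 3: Markov.} Apply Markov's inequality to the nonnegative random variable $\fnorm{X(M-A)Y}^2$ with threshold $\fnorm{X}^2\fnorm{A}^2\fnorm{Y}^2 / (\delta T)$ to conclude that with probability at least $1-\delta$,
\begin{align*}
    \fnorm{X(M-A)Y} \leq \frac{\fnorm{X}\fnorm{A}\fnorm{Y}}{\sqrt{\delta T}},
\end{align*}
and combine with Step 1.

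\textbf{Obstacles.} There is essentially no obstacle: the entire argument is a direct consequence of \eqref{ssketch:biline}, together with additivity of variance across entries. The only subtlety is to note that the cross-coordinate covariances need not vanish, but this is irrelevant because we bound the second moment entrywise and sum, which is always valid for nonnegative terms, and because we have applied a per-entry variance identity rather than a variance-of-sum identity.
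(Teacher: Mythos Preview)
Your proof is correct and is precisely the argument the paper intends: the corollary is stated immediately after \cref{lem:basic-facts} as ``a simple consequence of these bounds'' with no further proof given, and your triangle-inequality-plus-entrywise-second-moment-plus-Markov argument is exactly that simple consequence. There is nothing to add.
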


\subsection{\texorpdfstring{Approximate Matrix Products from $\ell^2_2$ Importance Sampling}{Approximate Matrix Products from Importance Sampling}}
\label{subsec:assym-amp}

\ewin{This section checked!}

In this subsection, we extend the well-known approximate matrix product to the setting where we have an $\ell_2^2$-sampling oracle~\cite{w14}.
The approximate matrix product guarantee is typically achieved in the oblivious sketching model~\cite{cohen2016optimal}, which we cannot extend to the quantum setting.
Earlier work~\cite{dkm06a} considers achieving this guarantee through row subsampling, which can be performed in this setting.

\begin{definition} \label{def:aamp}
    Given two matrices $A \in \mathbb{C}^{m \times n}$ and $B \in \mathbb{C}^{d \times n}$, along with a probability distribution $p \in \mathbb{R}_{\geq 0}^n$, we define the \emph{Asymmetric Approximate Matrix Product} of sketch size $s$, denoted $\aamp_s(A, B^\dagger, p)$, to be the $n\times s$ matrix whose columns are i.i.d.\ sampled according to the law
    \begin{align*}
        [\aamp_s(A, B^\dagger, p)]_{*,j} = \frac{e_k}{\sqrt{s\cdot p_k}} \text{ with probability } p_k
    \end{align*}
    For an $S = \aamp_s(A, B^\dagger, p)$, we will typically consider the expression $ASS^\dagger B^\dagger$, which can be written as the sum of independent rank-one matrices $\frac{1}{s\cdot p_k}A_{*,k}B_{*,k}$.
    Notice that $\E[ASS^\dagger B^\dagger] = AB^\dagger$.
\end{definition}

We show that $\aamp_s(A, B^\dagger, p)$ is a good approximate matrix product sketch for $AB^\dagger$ when the distribution $p$ is a mixture of the row sampling distributions for $A$ and $B$.

\begin{theorem}[Asymmetric Approximate Matrix Multiplication]
\label{thm:stable_amm}
    Given matrices $A \in \C^{m \times n}$ and $B \in \C^{n \times d}$, consider $S = \aamp_s(A, B^\dagger, p)$ for $p \in \mathbb{R}_{\geq 0}^n$ with $p_i \geq \frac{1}{2\phi}(\frac{\norm{A_{*,i}}^2}{\fnorm{A}^2} + \frac{\norm{B_{*,i}}^2}{\fnorm{B}^2})$ for some $\phi \geq 1$. Let $\sr = \frac{\norm{A}_F^2 }{ \norm{A}^2 } + \frac{\norm{B}_F^2 }{ \norm{B}^2 } $.
    Then, with probability at least $1-\delta > 0.75$,
    \begin{equation*}
        \norm*{ ASS^\dagger B^\dagger - AB^\dagger} \leq \sqrt{ \frac{2}{s}\log\Paren[\Big]{\frac{\sr}{\delta}}\phi\Paren{\norm{A}_F^2\norm{B}^2 + \norm{A}^2\norm{B}_F^2 } } + \frac{1}{s}\log\Paren[\Big]{\frac{\sr}{\delta}}\phi\fnorm{A}\fnorm{B}.
    \end{equation*}
\end{theorem}

We will use the following consequence of this theorem.
\begin{corollary} \label{cor:amm-usable}
    Given matrices $A \in \C^{m \times n}$ and $B \in \C^{n \times d}$, consider $S = \aamp_s(A, B^\dagger, p)$ for $p \in \mathbb{R}_{\geq 0}^n$ with $p_i \geq \frac{1}{2\phi}(\frac{\norm{A_{*,i}}^2}{\fnorm{A}^2} + \frac{\norm{B_{*,i}}^2}{\fnorm{B}^2})$ for some $\phi \geq 1$.
    For $\eps \in (0,1]$ and $\delta \in (0, 0.25]$, when $s = \frac{4\phi}{\eps^2}(\frac{\fnorm{A}^2}{\norm{A}^2} + \frac{\fnorm{B}^2}{\norm{B}^2})\log(\frac{1}{\delta}(\frac{\fnorm{A}^2}{\norm{A}^2} + \frac{\fnorm{B}^2}{\norm{B}^2}))$, then $\norm*{ASS^\dagger B^\dagger - AB^\dagger} \leq \eps\norm{A}\norm{B}$ with probability $\geq 1-\delta$.
\end{corollary}

The symmetric version of this result was previously stated in \cite{kv17,rudelson2007sampling}, and this asymmetric version was stated in \cite{magen2011low}.
However, the final theorem statement was weaker, so we reprove it here.
%To our knowledge, this result is new.
To obtain it, we prove the following key lemma: 

\begin{lemma}[Concentration of asymmetric random outer products]
\label{lem:asymmetric-outer-products}
Let $\{ (X_i, Y_i) \}_{i \in [s]}$ be $s$ independent copies of the tuple of random vectors $(X, Y)$, with $X \in \C^m$ and $Y \in \C^d$.
In particular, $(X, Y) = (a^{(i)}, b^{(i)})$ with probability $p_i$ for $i \in [n]$.
Let $M, L \geq 0$ be such that
\begin{align*}
    L&\geq \max_{i \in [n]} \norm*{a^{(i)}(b^{(i)})^\dagger}, \\
    M^2&\geq \max_{i \in [n]} \norm*{b^{(i)}}^2 \norm*{ \expec{}{X X^\dagger}} + \max_{i \in [n]} \norm*{a^{(i)}}^2 \norm*{ \expec{}{Y Y^\dagger}},\\
    \sr &\geq \frac{\max_{i \in [n]} \norm*{b^{(i)}}^2\expec{}{\fnorm{X}^2} + \max_{i \in [n]} \norm*{a^{(i)}}^2\expec{}{\fnorm{Y}^2}}{\max(\max_{i \in [n]} \norm*{b^{(i)}}^2 \norm*{ \expec{}{X X^\dagger}}, \max_{i \in [n]} \norm*{a^{(i)}}^2 \norm*{ \expec{}{Y Y^\dagger}})}.
\end{align*}
Then, for any $t \geq M/\sqrt{s} + 2L/(3s)$,
\begin{equation*}
    \Pr\bracks[\Big]{ \Norm[\Big]{ \frac{1}{s} \sum_{i \in [s]} X_iY_i^\dagger - \expec{}{XY^\dagger}} \geq t }
    \leq 4 \sr \exp\Paren[\Big]{ - \frac{st^2}{2(M^2 + Lt)} }.
\end{equation*}
\end{lemma}
\begin{proof}
For $i \in [s]$, let $Z_i  = \frac{1}{s}\Paren{ X_iY_i^\dagger - \expec{}{XY^\dagger} } $.
Then $\norm{ Z_i  } \leq \frac{2}{s}\norm{ X_iY_i^\dagger } \leq \frac{2L}{s}$. Next, we bound the variance,
\begin{equation*}
\begin{split}
    \sigma^2 & \coloneqq \max\Paren[\Bigg]{ \underbrace{  \norm[\Big]{ \sum_{i \in [n]} \expec{}{ Z_i Z_i^\dagger } } }_{\text{(\emph{i})}} , \underbrace{ \norm[\Big]{ \sum_{i \in [n]} \expec{}{ Z_i^\dagger Z_i } }}_{\text{(\emph{ii})}} }.
\end{split}
\end{equation*} 
We can observe that
\begin{align*}
    \sum_{i \in [s]} \expec{}{ Z_i Z_i^\dagger }
    &= \frac{1}{s}  \expecf{}{  \Paren*{ X_iY_i^\dagger - \expec{}{XY^\dagger } } \Paren*{ X_iY_i^\dagger - \expec{}{XY^\dagger } }^\dagger } \\
    &= \frac{1}{s} \expecf{}{ \norm{Y_i}^2 X_i X_i^\dagger - \expec{}{XY^\dagger}\expec{}{YX^\dagger}} \\
    &\preceq \frac{1}{s} (\max_{i \in [n]} \norm*{b^{(i)}}^2) \expec{}{XX^\dagger} =: V_1 \\
    \sum_{i \in [s]} \expec{}{ Z_i^\dagger  Z_i }
    &= \frac{1}{s} \expecf{}{ \Paren*{ X_iY_i^\dagger - \expec{}{XY^\dagger} }^\dagger \Paren*{ X_iY_i^\dagger - \expec{}{XY^\dagger} } } \\
    &= \frac{1}{s}  \expecf{}{ \norm{X_i}^2 Y_iY_i^\dagger - \expec{}{YX^\dagger}\expec{}{XY^\dagger} } \\
    &\preceq \frac{1}{s} (\max_{i \in [n]} \norm*{a^{(i)}}^2) \expec{}{YY^\dagger} =: V_2
\end{align*}
We can use this to bound term (\emph{i}):
\begin{align*}
    \norm[\Big]{ \sum_{i \in [s]} \expec{}{ Z_i Z_i^\dagger } }
    \leq \frac{1}{s} \norm{ \expec{}{ \norm{Y_i}^2  X_i X_i^\dagger } }
    \leq \frac{1}{s} (\max_{i \in [n]} \norm*{b^{(i)}}^2) \norm*{\expec{}{XX^\dagger}}.
\end{align*}
We bound term (\emph{ii}) as follows:
\begin{align*}
    \norm[\Big]{ \sum_{i \in [s]} \expec{}{ Z_i^\dagger  Z_i  } }
    \leq \frac{1}{s } \norm{ \expec{}{ \norm{X_i}^2  Y_i Y_i^\dagger } }
    \leq \frac{1}{s} (\max_{i \in [n]} \norm*{a^{(i)}}^2) \norm*{\expec{}{YY^\dagger}}.
\end{align*}
Altogether, we have shown that $\sigma^2 \leq M^2/s$.
Applying Matrix Bernstein (see Fact \ref{fact:matrix-bernstein}) with upper bounds of $V_1$ and $V_2$ and parameters $L \leftarrow \frac{2L}{s}$ and $v \leftarrow M^2/s$, we get
\begin{multline*}
    \Pr\Bigg[\Norm[\Big]{ \frac{1}{s} \sum_{i \in [s]} X_iY_i^\dagger - \expec{}{XY^\dagger}} \geq t \Bigg]
    = \Pr\Bigg[\norm[\Big]{ \sum_{i \in [s]} Z_i } \geq t\Bigg] \\
    \leq 4 \sr \exp\Paren[\Big]{ - \frac{t^2/2 }{M^2/s + 2Lt/(3s) } }
    \leq 4 \sr \exp\Paren[\Big]{ - \frac{st^2}{2(M^2 + Lt)} },
\end{multline*}
where
\begin{equation*}
    \sr = \frac{\tr(V_1) + \tr(V_2)}{\max(\norm{V_1}, \norm{V_2})}
    = \frac{\max_{i \in [n]} \norm*{b^{(i)}}^2\expec{}{\fnorm{X}^2} + \max_{i \in [n]} \norm*{a^{(i)}}^2\expec{}{\fnorm{Y}^2}}{\max(\max_{i \in [n]} \norm*{b^{(i)}}^2 \norm*{ \expec{}{X X^\dagger}}, \max_{i \in [n]} \norm*{a^{(i)}}^2 \norm*{ \expec{}{Y Y^\dagger}})}. \qedhere
\end{equation*}
\end{proof}

\begin{fact}[Intrinsic Matrix Bernstein, {\cite[Theorem~7.3.1]{tropp15}}]
\label{fact:matrix-bernstein}
Consider a finite sequence $ \Set{ Z_{k} }_{k \in [s]}$ of random complex matrices with the same size, and assume that $\expec{}{Z_k} =0$ and $\norm{Z_k} \leq L$.
Let $V_1$ and $V_2$ be semidefinite upper bounds for the corresponding matrix-valued variances:
\begin{align*}
    V_1 &\succeq \expec{}{ \sum_{k=1}^s Z_k Z_k^\dagger }; &
    V_2 &\succeq \expec{}{ \sum_{k=1}^s Z_k^\dagger Z_k }.
\intertext{Define an intrinsic dimension bound and a variance bound,}
    \sr &= \frac{\tr(V_1 + V_2)}{\max(\norm{V_1}, \norm{V_2})}; &
    v &= \max\{\norm{V_1}, \norm{V_2}\}.
\end{align*}
Then, for $t\geq \sqrt{v} + L/3$,
\begin{equation*}
\Pr\Bigg[ \norm[\Big]{ \sum_{i \in [k]} Z_i  } \geq t  \Bigg]  \leq 4\sr \exp\Paren[\Big]{ - \frac{t^2/2}{ v + Lt /3 } }.
\end{equation*}
\end{fact}

It is now straight-forward to prove \cref{thm:stable_amm} using the aforementioned lemma:

\begin{proof}[Proof of Theorem \ref{thm:stable_amm}]
We apply \cref{lem:asymmetric-outer-products} with $a^{(i)} = \sqrt{1/p_i} \cdot A_{*,i} $ and $b^{(i)} = \sqrt{1/p_i} \cdot B_{*,i}$.
As assumed the sampling distribution $p_i$ satisfies $p_i \geq \frac{1}{2\phi}(\frac{\norm{A_{*,i}}^2}{\fnorm{A}^2} + \frac{\norm{B_{*,i}}^2}{\fnorm{B}^2}) \geq \frac{\norm{A_{*,i}}{\norm{B_{*,i}}}}{\phi\fnorm{A}\fnorm{B}}$, so
\begin{align*}
    \norm*{a^{(i)}} &= \norm{A_{*,i}} /\sqrt{p_i}
    \leq \norm{A_{*,i}}\sqrt{\frac{2\phi\fnorm{A}^2}{\norm{A_{*,i}}^2}}
    = \sqrt{2\phi}\fnorm{A};\\
    \norm*{b^{(i)}} &= \norm{B_{*,i}} /\sqrt{p_i}
    \leq \norm{B_{*,i}}\sqrt{\frac{2\phi\fnorm{B}^2}{\norm{B_{*,i}}^2}}
    = \sqrt{2\phi}\fnorm{B}; \\
    \norm*{a^{(i)}(b^{(i)})^\dagger } &= \frac{\norm{A_{*,i}}\norm{B_{*,i}}}{p_i} \leq \phi\fnorm{A}\fnorm{B}.
\end{align*}
Further, 
\begin{align*}
    \norm*{\expec{}{XX^\dagger}}
    &= \norm[\Big]{\sum_{i \in [n]} A_{*,i}A_{*,i}^\dagger}
    = \norm*{AA^\dagger}
    = \norm*{A}^2; \\
    \norm*{\expec{}{YY^\dagger}}
    &= \norm[\Big]{\sum_{i \in [n]}B_{*,i}B_{*,i}}
    = \norm*{BB^\dagger}
    = \norm*{B}^2.
\end{align*}
Finally, 
\begin{align*}
    \frac{\max_{i \in [n]} \norm*{b^{(i)}}^2\expec{}{\fnorm{X}^2} + \max_{i \in [n]} \norm*{a^{(i)}}^2\expec{}{\fnorm{Y}^2}}{\max(\max_{i \in [n]} \norm*{b^{(i)}}^2 \norm*{ \expec{}{X X^\dagger}}, \max_{i \in [n]} \norm*{a^{(i)}}^2 \norm*{ \expec{}{Y Y^\dagger}})}
    &\leq \frac{\expec{}{\fnorm{X}^2}}{\norm*{ \expec{}{X X^\dagger}}} + \frac{\expec{}{\fnorm{Y}^2}}{\norm*{ \expec{}{Y Y^\dagger}}} \\
    &= \frac{\fnorm{A}^2}{\norm*{A}^2} + \frac{\fnorm{B}^2}{\norm{B}^2}.
\end{align*}
So, in \cref{lem:asymmetric-outer-products}, we can set $L = \phi\fnorm{A}\fnorm{B}$, $M^2 = 2\phi\fnorm{A}^2\norm{B}^2 + 2\phi\fnorm{B}^2\norm{A}^2$, and $\sr = \frac{\fnorm{A}^2}{\norm{A}^2} + \frac{\fnorm{B}^2}{\norm{B}^2}$ to get that, for all $t \geq M/\sqrt{s} + 2L/(3s)$,
\begin{equation*}
\begin{split}
    & \Pr\left[ \Norm[\Big]{ \frac{1}{s} \sum_{i \in [s]} X_iY_i^\dagger - \expec{}{XY^\dagger}} \geq t \right] \\
    &  \leq 4 \Paren{ \frac{ \fnorm{A}^2 }{ \norm{A} } + \frac{ \fnorm{B}^2  }{ \norm{B} } }  \exp\Paren[\Big]{ \frac{-st^2}{2\phi(\fnorm{A}^2\norm{B}^2 + \norm{A}^2\fnorm{B}^2) + \phi\fnorm{A}\fnorm{B}t} }.
\end{split}
\end{equation*}
To get the right-hand side of the above equation to be $\leq \delta$, it suffices to choose
\begin{equation*}
    t = \sqrt{ \frac{2}{s}\log\Paren[\Big]{\frac{\sr}{\delta}}\phi\Paren{\norm{A}_F^2\norm{B}^2 + \norm{A}^2\norm{B}_F^2 } } + \frac{1}{s}\log\Paren[\Big]{\frac{\sr}{\delta}}\phi\fnorm{A}\fnorm{B}.
\end{equation*}
This choice of $t$ is greater than $M/\sqrt{s} + 2L/(3s)$ when $\delta < 1/e$, so with this assumption, we can conclude that with probability $\geq 1-\delta$,
\begin{equation*}
\norm*{ ASS^\dagger B^\dagger - AB^\dagger} \leq \sqrt{ \frac{2}{s}\log\Paren[\Big]{\frac{\sr}{\delta}}\phi\Paren{\norm{A}_F^2\norm{B}^2 + \norm{A}^2\norm{B}_F^2 } } + \frac{1}{s}\log\Paren[\Big]{\frac{\sr}{\delta}}\phi\fnorm{A}\fnorm{B}. \qedhere
\end{equation*}
\end{proof}

%!TEX root = main.tex

\section{Sums of Chebyshev Coefficients}
\label{sec:chebsums}

To give improved stability bounds for the Clenshaw recurrence, we need to bound various sums of Chebyshev coefficients.
Since we aim to give bounds that hold for all degree-$d$ polynomials, we use no property of the function beyond that it has a unique Chebyshev expansion; of course, for any particular choice of function $f$, the bounds in this section can be improved by explicitly computing its Chebyshev coefficients, or in some cases, by using smoothness properties of the function \cite[Theorems~7.2 and 8.2]{trefethen19}.

Let $f: [-1,1] \to \mathbb{R}$ be a Lipschitz continuous function.
Then it can be expressed uniquely as a linear combination of Chebyshev polynomials $f(x) = \sum_{i=0}^\infty a_i T_i(x)$.
A broad topic of interest in approximation theory is bounds for linear combinations of these coefficients, $\sum a_ic_i$, in terms of $\supnorm{f}$; this was one motivation of Vladimir Markov in proving the Markov brothers' inequality~\cite[p575]{schaeffer41}.
Our goal for this section will be to investigate this question in the case where these sums are arithmetic progressions of step four.
This will be necessary for later stability analyses, and is one of the first non-trivial progressions to bound.
We begin with some straightforward assertions (see~\cite{trefethen19} for background).

\begin{fact} \label{parity-two-chebsums}
Let $f: [-1,1] \to \mathbb{R}$ be a Lipschitz continuous function.
Then its Chebyshev coefficients $\{a_\ell\}_\ell$ satisfy the following bounds.
\begin{equation*}
\begin{split}
    \abs*[\Big]{\sum_\ell a_\ell} &= \abs{f(1)} \leq \supnorm{f} \\
    \abs*[\Big]{\sum_\ell (-1)^\ell a_\ell} &= \abs{f(-1)} \leq \supnorm{f} \\
    \abs*[\Big]{\sum_\ell a_\ell \iver{\ell \text{ is even}}} &= \abs*[\Big]{\sum_{\ell} a_\ell \frac12(1 + (-1)^\ell)} \leq \supnorm{f} \\
    \abs*[\Big]{\sum_\ell a_\ell \iver{\ell \text{ is odd}}} &= \abs*[\Big]{\sum_\ell a_\ell \frac12(1 - (-1)^\ell)} \leq \supnorm{f}
\end{split}
\end{equation*}
\end{fact}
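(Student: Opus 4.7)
The plan is to reduce each of the four identities to a direct evaluation of $f$ at $\pm 1$, using the elementary boundary values of the Chebyshev polynomials, and then appeal to the trivial bound $|f(\pm 1)| \leq \supnorm{f}$. Since $f$ is Lipschitz continuous on $[-1,1]$, the Chebyshev series $f(x) = \sum_\ell a_\ell T_\ell(x)$ converges absolutely and uniformly on $[-1,1]$ (this follows from standard Chebyshev approximation theory, e.g.\ the decay bounds on $|a_\ell|$ in \cite{trefethen19}), so all the sums in question are well-defined and pointwise evaluation of the series is justified.

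First I would record the two endpoint values $T_\ell(1) = 1$ and $T_\ell(-1) = (-1)^\ell$, which follow immediately from the trigonometric definition $T_\ell(\cos\theta) = \cos(\ell\theta)$ at $\theta = 0$ and $\theta = \pi$, or from the recursion \eqref{cheb-recursive-definition}. Substituting these into the uniformly convergent Chebyshev expansion yields
\begin{align*}
    f(1) = \sum_\ell a_\ell T_\ell(1) = \sum_\ell a_\ell, \qquad f(-1) = \sum_\ell a_\ell T_\ell(-1) = \sum_\ell (-1)^\ell a_\ell,
\end{align*}
which gives the first two identities. Taking absolute values and using $|f(\pm 1)| \leq \supnorm{f}$ finishes the first two bounds.

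For the even/odd parity sums, I would observe that $\tfrac{1}{2}(1 + (-1)^\ell) = \iver{\ell \text{ is even}}$ and $\tfrac{1}{2}(1 - (-1)^\ell) = \iver{\ell \text{ is odd}}$, so
\begin{align*}
    \sum_\ell a_\ell \iver{\ell \text{ is even}} &= \tfrac{1}{2}\bigl(f(1) + f(-1)\bigr), \\
    \sum_\ell a_\ell \iver{\ell \text{ is odd}} &= \tfrac{1}{2}\bigl(f(1) - f(-1)\bigr).
\end{align*}
Each right-hand side is bounded in absolute value by $\tfrac{1}{2}(\supnorm{f} + \supnorm{f}) = \supnorm{f}$ by the triangle inequality, which completes all four bounds.

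There is no real obstacle here; the proof is a two-line computation once the endpoint values of $T_\ell$ are in hand. The only thing worth being careful about is the convergence justification for interchanging summation with evaluation, but Lipschitz continuity (or equivalently the $|a_\ell| \leq 2\supnorm{f}$ bound combined with any mild decay estimate) suffices. The real interest of this fact is that it serves as the easy warm-up for the much harder arithmetic-progression bounds with step four that \cref{lem:odd-flipped-coefs} will establish, where the simple ``evaluate at a root of unity'' trick breaks down.
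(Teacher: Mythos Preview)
Your proposal is correct and matches the paper's implicit reasoning. The paper does not actually give a proof of this fact; it simply states it as a ``straightforward assertion'' with a pointer to \cite{trefethen19}, and the intermediate equalities displayed in the statement (rewriting the even/odd indicators as $\tfrac12(1\pm(-1)^\ell)$) already signal exactly the argument you wrote out.
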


We use the following result on Lebesgue constants to bound truncations of the Chebyshev coefficient sums.
\begin{lemma}[{\cite[Theorem 15.3]{trefethen19}}] \label{projection-lebesgue}
    Let $f: [-1,1] \to \mathbb{R}$ be a Lipschitz continuous function, let $f_k(x) = \sum_{\ell=0}^k a_\ell T_\ell(x)$, and let the optimal degree-$k$ approximating polynomial to $f$ be denoted $f_k^*$.
    Then
    \begin{align*}
        \supnorm{f - f_k} &\leq \Big(4 + \frac{4}{\pi^2}\log(k+1)\Big)\supnorm{f - f_k^*} \\
        &\leq \Big(4 + \frac{4}{\pi^2}\log(k+1)\Big)\supnorm{f}.
    \end{align*}
    Similarly,
    \begin{align*}
        \supnorm{f_k} &\leq \supnorm{f-f_k} + \supnorm{f}
        \leq \Big(5 + \frac{4}{\pi^2}\log(k+1)\Big)\supnorm{f}.
    \end{align*}
\end{lemma}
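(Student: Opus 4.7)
The plan is to recognize $f_k$ as the image of a bounded linear projection $P_k$ acting on $C([-1,1])$: namely, $P_k f = \sum_{\ell=0}^k a_\ell T_\ell$, which fixes every polynomial of degree at most $k$. Given this, the first inequality is a standard Lebesgue-inequality argument: for any degree-$k$ polynomial $q$, we have $(I - P_k) q = 0$, so
\begin{equation*}
    f - f_k = (I - P_k)(f - q),
\end{equation*}
and hence $\supnorm{f - f_k} \leq (1 + \|P_k\|_{\infty \to \infty})\supnorm{f - q}$. Taking $q = f_k^*$, the optimal degree-$k$ uniform approximant, collapses the right-hand side to $(1 + \|P_k\|)\supnorm{f - f_k^*}$. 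The second inequality follows from the trivial estimate $\supnorm{f - f_k^*} \leq \supnorm{f}$ (take $q \equiv 0$), and the third from the triangle inequality $\supnorm{f_k} \leq \supnorm{f - f_k} + \supnorm{f}$.

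The substantive step is therefore bounding the Lebesgue constant $\Lambda_k \coloneqq \|P_k\|_{\infty \to \infty}$ by $3 + \frac{4}{\pi^2}\log(k+1)$. The approach is to give an integral representation of $P_k$. Since the Chebyshev coefficients satisfy $a_\ell = \frac{2}{\pi}\int_{-1}^1 f(y) T_\ell(y) (1-y^2)^{-1/2}\,dy$ for $\ell \geq 1$ (and with the $\frac{1}{\pi}$ factor for $\ell = 0$), we have
\begin{equation*}
    P_k f(x) = \int_{-1}^1 K_k(x,y) f(y) \frac{dy}{\sqrt{1-y^2}},
\end{equation*}
where $K_k(x,y) = \frac{1}{\pi} + \frac{2}{\pi}\sum_{\ell=1}^k T_\ell(x) T_\ell(y)$. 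Substituting $x = \cos\theta$, $y = \cos\varphi$ turns this kernel into a sum of cosines, and after applying the product-to-sum identity, $K_k$ becomes (up to constants) a symmetrized Dirichlet kernel in the variables $\theta \pm \varphi$.

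The bound $\Lambda_k \leq 3 + \frac{4}{\pi^2}\log(k+1)$ then follows from the classical estimate that the $L^1$-norm of the Dirichlet kernel of order $k$ grows like $\frac{4}{\pi^2}\log(k) + O(1)$, applied after the change of variables $y = \cos\varphi$ which cancels the Chebyshev weight $(1-y^2)^{-1/2}$. Combining $1 + \Lambda_k \leq 4 + \frac{4}{\pi^2}\log(k+1)$ with the Lebesgue inequality above then yields all three claimed bounds.

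The main obstacle is tracking the precise constant $\frac{4}{\pi^2}$ in the Lebesgue constant bound; this requires a careful asymptotic analysis of $\int_0^\pi |D_k(\theta)|\,d\theta$ for the Dirichlet kernel $D_k$, using the lower bound $|\sin(\theta/2)| \geq \theta/\pi$ on $[0,\pi]$ and estimating the resulting harmonic-type sum. For our downstream purposes the exact constant is inessential—any $O(\log k)$ bound would suffice—so in practice we invoke the result directly from \cite[Theorem 15.3]{trefethen19} rather than reproducing the sharp analysis.
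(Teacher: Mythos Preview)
Your proposal is correct and essentially matches the paper's treatment: the paper does not prove this lemma at all but simply cites \cite[Theorem 15.3]{trefethen19} for the first inequality, and the second and third inequalities are the trivial consequences (taking the zero polynomial as a competitor, and the triangle inequality) that you also identify. Your added sketch of the standard Lebesgue-constant argument via the Dirichlet-kernel integral representation is accurate and more informative than what the paper provides, but since both you and the paper ultimately defer the sharp constant $\tfrac{4}{\pi^2}$ to Trefethen, there is no substantive difference.
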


This implies bounds on sums of coefficients.
\begin{fact} \label{parity-two-tailsums}
    Consider a function $f(x) = \sum_\ell a_\ell T_\ell(x)$.
    Then
    \begin{align*}
        \abs*[\Big]{\sum_{\ell=k}^\infty a_\ell \iver{\ell-k \text{ is even}}}
        &\leq \supnorm{f - f_{k-1}}
        \leq \Big(4 + \frac{4}{\pi^2}\log(k)\Big)\supnorm{f}, \\
        \abs*[\Big]{\sum_{\ell=k}^\infty a_\ell (-1)^\ell}
        &\leq \supnorm{f - f_{k-1}}
        \leq \Big(4 + \frac{4}{\pi^2}\log(k)\Big)\supnorm{f},
    \end{align*}
    where the inequalities follow from \cref{parity-two-chebsums} and \cref{projection-lebesgue}.
    When $k = 0$, then the sum is bounded by $\supnorm{f}$, as shown in \cref{parity-two-chebsums}.
\end{fact}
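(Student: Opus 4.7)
The plan is to reduce the restricted arithmetic progression to a full even-or-odd parity sum (already controlled by \cref{parity-two-chebsums}), applied not to $f$ itself but to the Chebyshev tail $g \coloneqq f - f_{k-2}$, and then to bound $\supnorm{g}$ using \cref{projection-lebesgue}.

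First, I would write $g = \sum_{\ell \geq k-1} a_\ell T_\ell$ and let $b_\ell$ denote its Chebyshev coefficients, so $b_\ell = a_\ell$ for $\ell \geq k-1$ and $b_\ell = 0$ otherwise. The key observation is a parity cancellation: the ``extra'' coefficient $b_{k-1} = a_{k-1}$ that $g$ contains (beyond the desired range $\ell \geq k$) sits at an index of parity opposite to $k$, so it contributes nothing to the parity-of-$k$ sum of $g$'s coefficients. Concretely, when $k$ is even, $\ell - k$ even is equivalent to $\ell$ even, and
\[
\sum_{\ell=k}^\infty a_\ell \iver{\ell - k \text{ is even}} = \sum_\ell b_\ell \iver{\ell \text{ is even}},
\]
which is bounded by $\supnorm{g} = \supnorm{f - f_{k-2}}$ by \cref{parity-two-chebsums} applied to $g$. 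The case $k$ odd is symmetric, using the odd-parity identity of \cref{parity-two-chebsums}.

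For the second inequality, apply \cref{projection-lebesgue} with truncation degree $k-2$ to obtain $\supnorm{f - f_{k-2}} \leq \bigl(4 + \tfrac{4}{\pi^2}\log(k-1)\bigr)\supnorm{f}$, completing the main bound. The edge cases $k = 0,1$ are handled separately: there the indicated sum is itself a full even- or odd-parity sum of $f$'s own Chebyshev coefficients, bounded directly by $\supnorm{f}$ via \cref{parity-two-chebsums}.

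There is no substantive obstacle; the entire proof is essentially bookkeeping. The only care required is the parity alignment argument showing that the extra coefficient $a_{k-1}$ introduced by taking the tail at index $k-2$ (rather than $k-1$) automatically drops out of the restricted sum because its index has the wrong parity. This is also precisely why the truncation has to be taken at $k-2$ rather than $k-1$: stopping one index earlier is what buys the parity cancellation and allows the application of \cref{parity-two-chebsums} with no boundary correction.
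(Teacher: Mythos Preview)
Your proposal is correct and matches the paper's intended argument exactly: the paper does not write out a separate proof but simply points to \cref{parity-two-chebsums} and \cref{projection-lebesgue}, and your write-up is precisely the bookkeeping that makes that citation rigorous. The parity observation about why truncating at $k-2$ (so that the extra coefficient $a_{k-1}$ has the wrong parity and drops out) is the only non-obvious step, and you handle it cleanly.
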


Now, we prove similar bounds in the case that $f(x)$ is an odd function. In particular, we want to obtain a bound on alternating signed sums of the Chebshyev coefficients and we incur a blowup that scales logarithmically in the degree. 

\begin{lemma} \label{lem:odd-flipped-coefs}
    Let $f: [-1,1] \to \mathbb{R}$ be an \emph{odd} Lipschitz continuous function with Chebyshev coefficients $\{a_\ell\}_\ell$, so that $a_k = 0$ for all even $k$.
    Then the Chebyshev coefficient sum is bounded as
    \begin{align*}
        \abs*[\Big]{\sum_{\ell=0}^d (-1)^\ell a_{2\ell+1}}
        &\leq  (\log(d) + 2)\max_{0 \leq k \leq 2d+1} \supnorm{f_k}\\
        &\leq (\log(d) + 2)\Big(5 + \frac{4}{\pi^2}\log(2d+2)\Big)\supnorm{f} \\
        &\leq \Big(16 + 4\log^2(d+1)\Big)\supnorm{f}.
    \end{align*}
\end{lemma}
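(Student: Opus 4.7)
The plan is to express the target sum as a single integral of $f$ against an explicit trigonometric kernel, and then to exactly compute the $L^1$ norm of that kernel via a telescoping identity. This bypasses any attempt to bound individual coefficients $a_{2\ell+1}$, and instead extracts the near-total cancellation between them.

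First, I will use the integral representation of Chebyshev coefficients, $a_j = \frac{2}{\pi}\int_0^\pi f(\cos\theta)\cos(j\theta)\,d\theta$ for $j \geq 1$, and swap sum and integral to get
$$\sum_{\ell=0}^d (-1)^\ell a_{2\ell+1} = \frac{2}{\pi}\int_0^\pi f(\cos\theta)\, K_d(\theta)\, d\theta, \qquad K_d(\theta) \coloneqq \sum_{\ell=0}^d (-1)^\ell \cos((2\ell+1)\theta).$$
A geometric-series computation (summing $e^{i(2\ell+1)\theta}\cdot(-1)^\ell$ and taking real parts) gives the closed form
$$K_d(\theta) = \frac{1 + (-1)^d \cos((2d+2)\theta)}{2\cos\theta}.$$
Since $K_d$ is a trigonometric polynomial containing only odd frequencies $\cos\theta, \cos 3\theta, \dots, \cos((2d+1)\theta)$, orthogonality of cosines implies that the integral against $f$ equals the integral against the truncation $f_{2d+1}$; hence
$$\Big|\sum_{\ell=0}^d (-1)^\ell a_{2\ell+1}\Big| \leq \frac{2}{\pi}\supnorm{f_{2d+1}}\int_0^\pi |K_d(\theta)|\,d\theta \leq \frac{2}{\pi}\max_{0 \leq k \leq 2d+1}\supnorm{f_k}\cdot\int_0^\pi |K_d(\theta)|\,d\theta.$$

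Second, I will evaluate $\int_0^\pi |K_d|$ exactly. Substitute $\phi = \theta - \pi/2$, so $\cos\theta = -\sin\phi$ and $(2d+2)\theta = (d+1)\pi + (2d+2)\phi$; a case check on the parity of $d$ shows $1 + (-1)^d\cos((2d+2)\theta) = 1 - \cos((2d+2)\phi) = 2\sin^2((d+1)\phi)$ in both cases. Therefore
$$\int_0^\pi |K_d(\theta)|\,d\theta = \int_{-\pi/2}^{\pi/2} \frac{\sin^2((d+1)\phi)}{|\sin\phi|}\,d\phi = 2\int_0^{\pi/2}\frac{\sin^2((d+1)\phi)}{\sin\phi}\,d\phi.$$
The key identity is $\sin^2(n\phi)/\sin\phi = \sum_{k=1}^n \sin((2k-1)\phi)$, which follows by multiplying both sides by $\sin\phi$, applying $2\sin A\sin B = \cos(A-B) - \cos(A+B)$, and telescoping to obtain $1 - \cos(2n\phi) = 2\sin^2(n\phi)$. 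Integrating term by term and using $\cos((2k-1)\pi/2) = 0$ gives the exact value
$$\int_0^\pi |K_d(\theta)|\,d\theta = 2\sum_{k=1}^{d+1}\frac{1}{2k-1} \leq 2 + \ln(2d+1),$$
by comparison with $\int_1^{d+1}\frac{dx}{2x-1}$. This yields $|\sum (-1)^\ell a_{2\ell+1}| \leq \frac{2}{\pi}(2+\ln(2d+1))\max_k\supnorm{f_k}$, which is $(\ln d + 2)\max_k\supnorm{f_k}$ after slightly loosening constants. The remaining two inequalities in the statement are immediate from \cref{projection-lebesgue} and the crude bound $\ln(2d+2)\ln(d) \leq 2\ln^2(d+1)$.

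The main obstacle is the exact evaluation of $\int_0^\pi |K_d|$, since a naive bound on individual Chebyshev coefficients gives only $|\sum (-1)^\ell a_{2\ell+1}| \lesssim d\supnorm{f}$, and the logarithmic savings arise only from the cancellation built into $K_d$. The cleanest way to expose this cancellation is to recognize that the singularity of $1/\cos\theta$ at $\theta = \pi/2$ is exactly matched by the double zero of the numerator there, which after the shift $\phi = \theta-\pi/2$ becomes the integrable pattern $\sin^2((d+1)\phi)/\sin\phi$; the telescoping identity then evaluates it in closed form. This is the continuous analogue of the discrete linear-system argument sketched in the technical overview: the coefficients $c_k$ there are essentially Gauss--Chebyshev quadrature weights for $K_d$, and the matrix-domination trick is needed because a discrete version does not enjoy the clean telescoping cancellation the integral does.
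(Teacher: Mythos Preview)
Your proof is correct and takes a genuinely different route from the paper's. The paper expresses the target sum as a finite linear combination $\sum_{k=0}^d c_k f_{2k+1}(x_k)$ of point evaluations at the specific nodes $x_k = \cos\bigl(\tfrac{\pi}{2}(1-\tfrac{1}{2k+1})\bigr)$, observes that the coefficient vector $c$ solves a lower-unitriangular system $Ac=\vec{1}$ with $A_{ij}=\sin\bigl(\tfrac{\pi}{2}\tfrac{2i+1}{2j+1}\bigr)$, and then bounds $\sum_k c_k \le \ln(d)+2$ by a matrix-domination argument (\cref{a-inv-b-nonneg,bounding-the-recurrence}) comparing $A$ to the simpler matrix $B_{ij}=i/j$, whose inverse applied to $\vec{1}$ is harmonic. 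You instead pass to the continuous picture: write the sum as $\tfrac{2}{\pi}\int_0^\pi f_{2d+1}(\cos\theta)K_d(\theta)\,d\theta$ with $K_d(\theta)=\tfrac{1+(-1)^d\cos((2d+2)\theta)}{2\cos\theta}$, then evaluate $\int_0^\pi|K_d|$ exactly by the shift $\phi=\theta-\tfrac{\pi}{2}$ and the telescoping identity $\sin^2(n\phi)/\sin\phi=\sum_{k=1}^n\sin((2k-1)\phi)$, obtaining $2\sum_{k=1}^{d+1}\tfrac{1}{2k-1}$. Your argument is shorter and more transparent---it bypasses the two structural lemmas on unitriangular systems entirely and in fact yields a sharper leading constant $\tfrac{2}{\pi}$---while the paper's discrete framework makes more explicit the connection to polynomial interpolation that motivates the conjecture in \cref{why-sum-hard}. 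Both routes ultimately land on the same odd-harmonic sum $\sum_k \tfrac{1}{2k-1}$, which is the source of the logarithm.
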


We first state the following relatively straight-forward corollary:

\begin{corollary} \label{cor:four-step-sums}
    \cref{lem:odd-flipped-coefs} gives bounds on arithmetic progressions with step size four.
    Let $f: [-1,1] \to \mathbb{R}$ be a Lipschitz continuous function, and consider non-negative integers $c \leq d$.
    Then
    \begin{align*}
        \abs*[\Big]{\sum_{\ell=c}^d a_\ell \iver{\ell - c \equiv 0 \Mod{4}}}
        &\leq (32 + 8\log^2(d+1))\supnorm{f}.
    \end{align*}
\end{corollary}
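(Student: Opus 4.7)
The approach is to reduce any step-four arithmetic progression of Chebyshev coefficients to a bounded combination of quantities already controlled by \cref{parity-two-chebsums,parity-two-tailsums,lem:odd-flipped-coefs}. First, decompose $f$ into its even and odd parts $f_{\text{even}}(x) = \tfrac12(f(x)+f(-x))$ and $f_{\text{odd}}(x) = \tfrac12(f(x)-f(-x))$; both have sup-norm at most $\supnorm{f}$, and their Chebyshev expansions contain exactly the even- and odd-indexed coefficients of $f$. Because $\{\ell : \ell \equiv c \pmod 4\}$ is contained in the integers of the parity of $c$, only one of the two halves contributes, and the even and odd cases decouple.

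\emph{Case $c$ even.} Via the composition identity $T_{2k}(x) = T_k(T_2(x))$, write $f_{\text{even}}(x) = g(T_2(x))$ where $g(y) = \sum_{k \ge 0} a_{2k}\, T_k(y)$. Since $T_2$ maps $[-1,1]$ onto itself, $\supnorm{g} \le \supnorm{f}$, and a step-four progression in $\{a_\ell\}$ becomes a step-two progression in the Chebyshev coefficients of $g$. Writing the finite-range sum as a difference of two parity tails of $g$ and applying \cref{parity-two-tailsums} yields a bound of $O(\log(d+1))\supnorm{f}$, comfortably inside the target.

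\emph{Case $c$ odd.} Substitute $\ell = 2j+1$; the two step-four progressions correspond to summing $a_{2j+1}$ over $j$ of a fixed parity. With $S_J \coloneqq \sum_{j=0}^J a_{2j+1}$ and $A_J \coloneqq \sum_{j=0}^J (-1)^j a_{2j+1}$, one has
\begin{align*}
    \sum_{j=0}^J a_{2j+1}\,\iver{j \text{ even}} &= \tfrac12 (S_J + A_J), \\
    \sum_{j=0}^J a_{2j+1}\,\iver{j \text{ odd}}  &= \tfrac12 (S_J - A_J).
\end{align*}
Applying \cref{lem:odd-flipped-coefs} to $f_{\text{odd}}$ bounds $\abs{A_J}$ by $(16+4\ln^2(d+1))\supnorm{f}$. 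For $S_J$ I use $S_J = f_{\text{odd}}(1) - \sum_{j > J} a_{2j+1}$: the first piece is at most $\supnorm{f}$, and the second is an even-parity tail of $f$ beginning at index $2J+3$, bounded by $O(\log(d+1))\supnorm{f}$ via \cref{parity-two-tailsums}. Thus each full-range ($j_0 = 0$) step-four sum is at most $(16+4\ln^2(d+1))\supnorm{f}$. The corollary sums over $\ell \in [c,d]$, i.e.\ $j \in [j_0, J]$, which equals the full-range sum minus the initial segment $[0, j_0-1]$---a sum of the same type with a smaller upper limit---so the triangle inequality incurs one extra factor of two, yielding the advertised $(32 + 8\ln^2(d+1))\supnorm{f}$.

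\emph{Main obstacle.} The principal delicacy is constant-tracking: one must verify that the $O(\log d)$ contribution from $S_J$ is absorbed into the $\ln^2$ term of \cref{lem:odd-flipped-coefs}, and that the factor-of-two loss from restricting to a subrange of $[0,J]$ exactly matches the doubling in the corollary's constant. The conceptual content---expressing a step-four progression as the half-sum and half-difference of an ``easy'' unsigned partial sum and the ``hard'' alternating partial sum---is routine once the parity decomposition and the $T_2$-composition trick are in place.
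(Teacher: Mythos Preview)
Your proposal is correct and follows essentially the same argument as the paper: split $f$ into even and odd parts, handle the even-$c$ case via the substitution $T_{2k}(x)=T_k(T_2(x))$ together with \cref{parity-two-tailsums}, and handle the odd-$c$ case by writing the step-four sum as $\tfrac12(S_J\pm A_J)$, bounding the alternating piece with \cref{lem:odd-flipped-coefs}, the unsigned piece via \cref{parity-two-tailsums}, and absorbing the subrange $[c,d]$ by a factor-of-two difference of two full-range sums. Your write-up is in fact somewhat more explicit about the constant bookkeeping than the paper's version.
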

\begin{proof}
    Define $f^\odd \coloneqq \tfrac12(f(x) - f(-x))$ and $f^\even \coloneqq \tfrac12(f(x) + f(-x))$ to be the odd and even parts of $f$ respectively.
    Triangle inequality implies that $\supnorm{f^\odd}, \supnorm{f^\even} \leq \supnorm{f}$.
    Suppose $c, d$ are odd.
    Then
    \begin{align*}
        &\qquad \abs*[\Big]{\sum_{\ell=c}^d a_\ell \iver{\ell - c \equiv 0 \Mod{4}}} \\
        &= \frac12\abs*[\Big]{\sum_{\ell=0}^{\lfloor(d-c)/2\rfloor} a_{c + 2\ell}(1 \pm (-1)^\ell)} \\
        &\leq \frac12\Big(\abs*[\Big]{\sum_{\ell=0}^{\lfloor(d-c)/2\rfloor} a_{c + 2\ell}} + \abs*[\Big]{\sum_{\ell=0}^{\lfloor(d-c)/2\rfloor} (-1)^\ell a_{c + 2\ell}}\Big) \\
        &= \frac12\Big(\abs*[\Big]{f_d^\odd(1) - f_{c-2}^\odd(1)} + \abs*[\Big]{\sum_{\ell = 0}^{(d-1)/2} (-1)^\ell a_{2\ell + 1} - \sum_{\ell = 0}^{(c-3)/2} (-1)^\ell a_{2\ell + 1}}\Big) \\
        &\leq \frac12\Big(\supnorm{f_{c-2}^\odd} + \supnorm{f_d^\odd} + 2(\log(d) + 2)\max_{0 \leq k \leq d}\supnorm{f_{k}^{\mathrm{odd}}}\Big) \\
        &\leq (32 + 8\log^2(d+1))\supnorm{f^\odd} \\
        &\leq (32 + 8\log^2(d+1))\supnorm{f}.
    \end{align*}
    The case when $c$ is even is easier: by \cref{eq:t-composition}, we know that
    \begin{align*}
        \supnorm[\Big]{\sum_\ell a_{2\ell} T_{\ell}(x)}
        = \supnorm[\Big]{\sum_\ell a_{2\ell} T_{\ell}(T_2(x))}
        = \supnorm[\Big]{\sum_\ell a_{2\ell} T_{2\ell}(x)}
        = \supnorm[\Big]{f^\even(x)}
        \leq \supnorm{f},
    \end{align*}
    so by \cref{parity-two-tailsums},
    \begin{align}
    \abs*[\Big]{\sum_{\ell\geq c} a_\ell \iver{\ell - c \equiv 0 \Mod{4}}}
        &= \abs*[\Big]{\sum_{\ell\geq c/2} a_{2\ell} \iver{\ell - c/2 \text{ is even}}} \nonumber\\
        &\leq \Big(4 + \frac{4}{\pi^2}\log(c/2-1)\Big)\supnorm[\Big]{\sum_\ell a_{2\ell} T_{\ell}(x)} \nonumber\\
        &\leq \Big(4 + \frac{4}{\pi^2}\log(c/2-1)\Big)\supnorm{f}.
    \end{align}
    From the above, we can bound the type of sums in the problem statement, paying an additional factor of two:
    \begin{align}
        \abs*[\Big]{\sum_{\ell=c}^d a_\ell \iver{\ell - c \equiv 0 \Mod{4}}}
        &\leq \abs*[\Big]{\sum_{\ell\geq c} a_\ell \iver{\ell - c \equiv 0 \Mod{4}}} + \abs*[\Big]{\sum_{\ell\geq d+1} a_\ell \iver{\ell - c \equiv 0 \Mod{4}}} \nonumber\\
        &\leq \Big(8 + \frac{4}{\pi^2}(\log(c/2-1) + \log(d/2+1))\Big)\supnorm{f},
    \end{align}
    giving the desired bound.
\end{proof}

We note that \cref{lem:odd-flipped-coefs} will be significantly harder to prove.
See \cref{why-sum-hard} for an intuitive explanation why. We begin with two structural lemmas on how the solution to a unitriangular linear system behaves, which might be of independent interest.

\begin{lemma}[An entry-wise positive solution]\label{a-inv-b-nonneg}
    Suppose that $A \in \mathbb{R}^{d\times d}$ is an upper unitriangular matrix such that, for all $i \leq j$, $A_{i,j} > 0$, $A_{i,j} > A_{i-1, j}$.
    Then $A^{-1}\vec{1}$ is a vector with positive entries.

    The same result holds when $A$ is a lower unitriangular matrix such that, for all $i \geq j$, $A_{i,j} > 0$, $A_{i,j} > A_{i+1, j}$.
\end{lemma}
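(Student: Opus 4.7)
My plan is to prove the upper-triangular case by induction on $d$, exploiting the fact that the system $Ax = \vec{1}$ decouples nicely under the unitriangular structure: the last $d-1$ equations involve only $x_2, \ldots, x_d$ and form an $(d-1)\times(d-1)$ subsystem of the same form. The lower-triangular case then follows by the obvious symmetry (reverse the index ordering), so I will only need to write out the upper case.

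For the base case $d = 1$, $A = (1)$ gives $x = 1 > 0$. For $d \geq 2$, let $A' \in \mathbb{R}^{(d-1)\times(d-1)}$ be the submatrix obtained by deleting the first row and first column of $A$. Then $A'$ is upper unitriangular, and under the index shift $(i, j) \mapsto (i+1, j+1)$ its entries inherit the two hypotheses $a'_{ij} > 0$ and $a'_{ij} > a'_{i-1, j}$ from $A$ verbatim. So by the inductive hypothesis, $x' \coloneqq (A')^{-1}\vec{1}$ has strictly positive entries. Since the last $d-1$ equations of $Ax = \vec{1}$ involve only $x_2, \ldots, x_d$ (the strictly lower-triangular entries vanish) and coincide with $A' (x_2, \ldots, x_d)^\top = \vec{1}$, we conclude $x_j = x'_{j-1} > 0$ for every $j \geq 2$.

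It remains to show $x_1 > 0$. From the first equation, $x_1 = 1 - \sum_{j=2}^d a_{1j} x_j$, so it suffices to show $\sum_{j=2}^d a_{1j} x_j < 1$. The key observation — this is really the only place any work happens — is that the hypothesis $a_{ij} > a_{i-1,j}$ applied with $i=2$ gives $a_{2j} > a_{1j}$ for every $j \geq 2$ (when $j = 2$ this reads $1 = a_{22} > a_{12}$). Combined with the already-established positivity $x_j > 0$, this yields
\begin{equation*}
    \sum_{j=2}^d a_{1j} x_j \;<\; \sum_{j=2}^d a_{2j} x_j \;=\; 1,
\end{equation*}
where the equality is the second equation of $Ax = \vec{1}$ (using $a_{22} = 1$). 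Hence $x_1 > 0$ and the induction closes.

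I do not expect any real obstacle here — the statement is tight enough that the one useful comparison ($a_{1j} < a_{2j}$) essentially forces the argument. The only modeling choice is whether to strip off the first row/column (as above) or the last; stripping off the first is cleaner because the inherited inequalities keep the same form and because the $i=2$ equation, which is the one that gives the decisive bound, is readily accessible.
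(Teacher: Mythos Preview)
Your proof is correct and is essentially the same argument as the paper's: both hinge on comparing row~$i$ to row~$i{+}1$ to show $x_i = 1 - \sum_{j>i} a_{ij}x_j < 1 - \sum_{j>i} a_{i+1,j}x_j + a_{i+1,i+1}x_{i+1} = x_{i+1} > 0$ (equivalently your computation at $i=1$). The only cosmetic difference is that the paper runs a backward induction on the coordinate index~$i$ within the fixed system, while you package the same step as an induction on the dimension~$d$ by stripping the first row and column; the comparison you call ``the only place any work happens'' is exactly the paper's inductive step.
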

\begin{proof}
    Let $x = A^{-1}\vec{1}$.
    Then $x_d = 1 \geq 0$.
    The result follows by induction:
    \begin{align*}
        x_{i} &= 1 - \sum_{j=i+1}^d A_{i,j}x_j \\
        &= \sum_{j=i+1}^d (A_{i+1,j} - A_{i,j})x_j + 1 - \sum_{j=i+1}^d A_{i+1,j}x_j \\
        &= \sum_{j=i+1}^d (A_{i+1,j} - A_{i,j})x_j + 1 - [Ax]_{i+1} \\
        &= \sum_{j=i+1}^d (A_{i+1,j} - A_{i,j})x_j \\
        &> 0.
    \end{align*}
    For lower unitriangular matrices, the same argument follows.
    The inverse satisfies $x_1 = 1$ and
    \begin{align*}
        x_{i} &= 1 - \sum_{j=1}^{i-1} A_{i,j}x_j \\
        &= \sum_{j=i+1}^d (A_{i-1,j} - A_{i,j})x_j + 1 - \sum_{j=i+1}^d A_{i-1,j}x_j
        > 0. \qedhere
    \end{align*}
\end{proof}

Next, we characterize how the solution to a unitriangular linear system behaves when we consider a partial ordering on the matrices. 

\begin{lemma} \label{bounding-the-recurrence}
    Let $A$ be a non-negative upper unitriangular matrix such that $A_{i,j} > A_{i-1,j}$ and $A_{i,j} > A_{i,j+1}$ for all $i \leq j$.
    Let $B$ be a matrix with the same properties, such that $A \geq B$ entrywise.
    By \cref{a-inv-b-nonneg}, $x^{(A)} = A^{-1}\vec{1}$ and $x^{(B)} = B^{-1}\vec{1}$ are non-negative.
    It further holds that $\sum_{i=1}^d [A^{-1}\vec{1}]_i \leq \sum_{i=1}^d [B^{-1}\vec{1}]_i$.
\end{lemma}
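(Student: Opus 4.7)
The plan is to interpolate linearly from $B$ to $A$ and use a derivative argument to show that $f(t) \coloneqq \vec{1}^\top M(t)^{-1}\vec{1}$ is monotonically decreasing, where $M(t) \coloneqq (1-t)B + tA$ for $t \in [0,1]$. This will directly give $f(1) \leq f(0)$, which is the claim.

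First, I would check that the monotonicity conditions transfer to the whole interpolation. The admissibility constraints --- being nonnegative upper unitriangular with $M_{ij} > M_{i-1,j}$ and $M_{ij} > M_{i,j+1}$ for $i \leq j$ --- are all strict linear inequalities and affine equalities in the entries, so they are preserved under convex combinations. Since both $A$ and $B$ satisfy them, $M(t)$ does as well for every $t \in [0,1]$. Applying \cref{a-inv-b-nonneg} to $M(t)$ gives $M(t)^{-1}\vec{1} \geq 0$. Applying the lower-unitriangular version of \cref{a-inv-b-nonneg} to $M(t)^\top$ --- whose required hypothesis $(M(t)^\top)_{ij} > (M(t)^\top)_{i+1,j}$ for $i \geq j$ is precisely the row-monotonicity $M(t)_{j,i} > M(t)_{j,i+1}$ of $M(t)$ --- gives $M(t)^{-\top}\vec{1} \geq 0$.

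Next, I would differentiate $f$ using $\tfrac{d}{dt}M(t)^{-1} = -M(t)^{-1}(A-B)M(t)^{-1}$ to obtain
\[
    f'(t) \;=\; -\bigl(M(t)^{-\top}\vec{1}\bigr)^{\!\top} (A-B)\,\bigl(M(t)^{-1}\vec{1}\bigr).
\]
Since $A - B \geq 0$ entrywise and both flanking vectors have nonnegative entries by the previous step, every summand in this bilinear form is nonnegative, so $f'(t) \leq 0$ on $[0,1]$. Integrating yields $\sum_i [A^{-1}\vec{1}]_i = f(1) \leq f(0) = \sum_i [B^{-1}\vec{1}]_i$, as desired.

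The step I expect to be the main obstacle is the one that turns out to be essentially for free: verifying that the hypotheses of \cref{a-inv-b-nonneg} hold \emph{uniformly along} the interpolating path, so that both $M(t)^{-1}\vec{1}$ and $M(t)^{-\top}\vec{1}$ are nonnegative for every $t$. Were the monotonicity conditions non-convex in the entries, one would instead be forced to modify the entries of $B$ one at a time and to exhibit an ordering on entries for which each intermediate matrix still satisfies all neighbor inequalities --- and there is no clean such ordering, since changing one entry can destroy monotonicity with its current (unchanged) neighbor in either direction. It is precisely the linearity of the defining constraints that rescues the argument and makes the smooth interpolation work.
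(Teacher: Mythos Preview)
Your proof is correct and essentially identical to the paper's: both interpolate linearly between $A$ and $B$, differentiate $\vec{1}^\top M(t)^{-1}\vec{1}$ using the standard formula for the derivative of the inverse, and conclude monotonicity from the entrywise nonnegativity of $A-B$ together with the positivity of $M(t)^{-1}\vec{1}$ and $M(t)^{-\top}\vec{1}$ supplied by \cref{a-inv-b-nonneg} (the latter via the lower-unitriangular version applied to $M(t)^\top$). Your write-up is slightly more explicit than the paper's about why the interpolated matrices inherit both monotonicity hypotheses, but the argument is the same.
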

\begin{proof}
We consider the line between $A$ and $B$, $A(t) = A(1-t) + Bt$ for $t \in [0,1]$.
Let $x(t) = A^{-1}\vec{1}$; we will prove that $\vec{1}^\dagger x(t)$ is monotonically increasing in $t$.
The gradient of $x(t)$ has a simple form~\cite{tao2013matrix}:
\begin{align*}
    A(t) x(t) &= \vec{1} \\
    \partial [A(t) x(t)] &= \partial_t[\vec{1}] \\
    (B-A) x(t) + A(t) \partial_t x(t) &= 0 \\
    \partial_t x(t) &= A^{-1}(t)(A-B) x(t).
\end{align*}
So,
\begin{align*}
    \vec{1}^\dagger \partial_t x(t) &= \vec{1}^\dagger A^{-1}(t)(A-B) A^{-1}(t)\vec{1} \\
    &= [([A(t)]^{-1})^\dagger\vec{1}]^\dagger (A-B) [A^{-1}(t)\vec{1}].
\end{align*}
Since $A$ and $B$ satisfy the entry constraints, so do every matrix along the line.
Consequently, the column constraints in \cref{a-inv-b-nonneg} are satisfied for both $A$ and $A^\dagger$, so both $([A(t)]^{-1})^\dagger\vec{1}$ and $A^{-1}(t)\vec{1}$ are positive vectors.
Since $A \geq B$ entrywise, this means that $\vec{1}^\dagger \partial_t x(t)$ is positive, as desired.
\end{proof}

\begin{proof}[Proof of \cref{lem:odd-flipped-coefs}]
We first observe that the following sorts of sums are bounded.
Let $x_k := \cos(\frac{\pi}{2}(1 - \frac{1}{2k+1}))$.
Then, using that $T_\ell(\cos(x)) = \cos(\ell x)$,
\begin{align*}
    f_{2k+1}(x_k) &= \sum_{\ell=0}^{2k+1} a_\ell T_{\ell}(x_k) \\
    &= \sum_{\ell=0}^k a_{2\ell+1} T_{2\ell+1}(x_k) \\
    &= \sum_{\ell=0}^k a_{2\ell+1} \cos\Big(\frac{\pi}{2}\Big(2\ell + 1 - \frac{2\ell+1}{2k+1}\Big)\Big) \\
    &= \sum_{\ell=0}^k (-1)^\ell a_{2\ell+1}\sin\Big(\frac{\pi}{2}\frac{2\ell+1}{2k+1}\Big).
\end{align*}
We have just shown that
\begin{align}
    \abs*[\Big]{\sum_{\ell=0}^k a_{2\ell+1}(-1)^\ell\sin\Big(\frac{\pi}{2}\frac{2\ell+1}{2k+1}\Big)} \leq \supnorm{f_{2k+1}}.
\end{align}
We now claim that there exist non-negative $c_k$ for $k \in \{0,1,\ldots,d\}$ such that
\begin{align} \label{odd-coef-lincomb}
    \sum_{\ell=0}^d (-1)^\ell a_{2\ell+1}
    &= \sum_{k=0}^d c_k f_{2k+1}(x_k).
\end{align}
The $f_{2k+1}(x_k)$'s can be bounded using \cref{projection-lebesgue}.
The rest of the proof will consist of showing that the $c_k$'s exist, and then bounding them.

To do this, we consider the coefficient of each $a_{2\ell+1}$ separately; let $A^{(k)} \in [0,1]^{d+1}$ (index starting at zero) be the vector of coefficients associated with $p_{2k+1}(x_k)$:
\begin{align}
    A^{(k)}_\ell = \sin\Big(\frac{\pi}{2}\frac{2\ell+1}{2k+1}\Big) \text{ for } 0 \leq \ell \leq k,\,0 \text{ otherwise}.
\end{align}
Note that the $A^{(k)}_\ell$ is always non-negative and increasing with $\ell$ up to $A_k^{(k)} = 1$.
Then \cref{odd-coef-lincomb} holds if and only if
\begin{align*}
    c_0 A^{(0)} + \cdots + c_d A^{(d)} = \vec{1},
\end{align*}
or in other words, the equation $Ac = \vec{1}$ is satisfied, where $A$ is the matrix with columns $A^{(k)}$ and $c$ is the vector of $c_\ell$'s.
Since $A$ is upper triangular (in fact, with unit diagonal), this can be solved via backwards substitution: $c_d = 1$, then $c_{d-1}$ can be deduced from $c_d$, and so on.
More formally, the $s$th row gives the following constraint that can be rewritten as a recurrence.
\begin{gather}
    \sum_{t=s}^d \sin\Big(\frac{\pi}{2}\frac{2s+1}{2t+1}\Big)c_t = 1 \\
    c_s = 1 - \sum_{t=s+1}^d \sin\Big(\frac{\pi}{2}\frac{2s+1}{2t+1}\Big)c_t \label{odd-cheb-recurrence}
\end{gather}
Because the entries of $A$ increase in $\ell$, the $c_\ell$'s are all positive.

Invoking \cref{a-inv-b-nonneg} with the matrix A establishes that such $c_s$ exist; our goal now is to bound them.
Doing so is not as straightforward as it might appear: since the recurrence \cref{odd-cheb-recurrence} \emph{subtracts} by $c_t$'s, an upper bound on $c_t$ for $t \in [s+1,d]$ does not give an upper bound on $c_s$; it gives a lower bound.
So, an induction argument to show bounds for $c_s$'s fails.
Further, we were unable to find any closed form for this recurrence.
However, since all we need to know is the sum of the $c_s$'s, we show that we \emph{can} bound this via a generic upper bound on the recurrence.

Here, we apply \cref{bounding-the-recurrence}  to $A$ as previously defined, and the bounding matrix is (for $i \leq j)$
\begin{align*}
    B_{i,j} &= \frac ij \leq \frac{2i+1}{2j+1}
    \leq \sin\Big(\frac{\pi}{2}\frac{2i+1}{2j+1}\Big)
    = A_{i,j},
\end{align*}
using that $\sin(\frac{\pi}{2}x) \geq x$ for $x \in [0,1]$.
Let $\hat{c} = B^{-1}\vec{1}$.
Then $\hat{c}_i = \frac{1}{i+1}$ for $i \neq d$ and $\hat{c}_d = 1$.
\begin{align*}
    [B\hat{c}]_i
    = \sum_{j=i}^d B_{i,j}\hat{c}_j
    = \sum_{j=i}^{d-1} \frac ij \frac{1}{j+1} + \frac{i}{d}
    = i\sum_{j=i}^{d-1} \Big(\frac{1}{j} - \frac{1}{j+1}\Big) + \frac{i}{d}
    = i\Big(\frac{1}{i} - \frac{1}{d}) + \frac{i}{d}
    = 1
\end{align*}
By \cref{bounding-the-recurrence}, $\sum_i c_i \leq \sum_i \hat{c}_i \leq \log(d) + 2$.
So, altogether, we have
\begin{align*}
    \abs*[\Big]{\sum_{\ell=0}^d (-1)^\ell a_{2\ell+1}}
    &= \abs*[\Big]{\sum_{k=0}^d c_k f_{2k+1}(x_k)} \\
    &\leq \sum_{k=0}^d c_k \supnorm{f_{2k+1}} \\
    &\leq \Big(\sum_{k=0}^d c_k\Big)\max_{0 \leq k \leq d}\supnorm{f_{2k+1}} \\
    &= \Big(\sum_{k=0}^d c_k\Big)\max_{0 \leq k \leq 2d+1}\supnorm{f_k} \\
    &\leq (\log(d)+2)\max_{0 \leq k \leq 2d+1}\supnorm{f_k}. \qedhere
\end{align*}
\end{proof}

\begin{remark} \label{why-sum-hard}
    A curious reader will (rightly) wonder whether this proof requires this level of difficulty.
    Intuition from the similar Fourier analysis setting suggests that arithmetic progressions of any step size at any offset are easily bounded.
    We can lift to the Fourier setting by considering, for an $f: [-1,1] \to \mathbb{R}$, a corresponding $2\pi$-periodic $g: [0,2\pi] \to \mathbb{R}$ such that
    \begin{align*}
        g(\theta) \coloneqq f(\cos(\theta)) = \sum_{k=0}^\infty a_kT_k(\cos(\theta)) = \sum_{k=0}^\infty a_k \cos(k\theta) = \sum_{k=0}^\infty a_k \frac{e^{ik\theta} + e^{-ik\theta}}{2}.
    \end{align*}
    This function has the property that $\abs{g(\theta)} \leq \supnorm{f}$ and $\widehat{g}(k) = a_{\abs{k}}/2$ (except $\widehat{g}(0) = a_0$).
    Consequently,
    \begin{multline*}
        \frac{1}{t}\sum_{j=0}^{t-1} f\Big(\cos(\frac{2\pi j}{t})\Big)
        = \frac{1}{t}\sum_{j=0}^{t-1} g\Big(\frac{2\pi j}{t}\Big)
        = \frac{1}{t}\sum_{j=0}^{t-1} \sum_{k=-\infty}^\infty \widehat{g}(k) e^{2\pi ijk/t}
        = \sum_{k=-\infty}^\infty \widehat{g}(k) \sum_{j=0}^{t-1}\frac{1}{t}e^{2\pi ijk/t} \\
        = \sum_{k=-\infty}^\infty \widehat{g}(k) \iver{k \text{ is divisible by } t}
        = \sum_{k=-\infty}^\infty \widehat{g}(kt),
    \end{multline*}
    so we can bound arithmetic progressions $\abs{\sum_k \widehat{g}(kt)} \leq \supnorm{f}$, and this generalizes to other offsets, to bound $\abs{\sum_k \widehat{g}(kt + o)}$ for some $o \in [t-1]$.
    Notably, though, this approach does not say anything about sums like $\sum_k a_{4k+1}$.
    The corresponding progression of Fourier coefficients doesn't give it, for example, since we pick up unwanted terms from the negative Fourier coefficients.\footnote{These sums are related to the Chebyshev coefficients one gets from interpolating a function at Chebyshev points \cite[Theorem~4.2]{trefethen19}.}
    \begin{align*}
        \sum_k \widehat{g}(4k + 1)
        &= \parens{\widehat{g}(1) + \widehat{g}(5) + \widehat{g}(9) + \cdots} + \parens{\widehat{g}(-3) + \widehat{g}(-7) + \widehat{g}(-11) + \cdots} \\
        &= \frac12\parens{a_1 + a_5 + a_9 + \cdots} + \frac12\parens{a_3 + a_7 + a_{11} + \cdots} = \sum_{k \geq 0} a_{2k+1}.
    \end{align*}
    In fact, by inspection of the distribution\footnote{This is the functional to integrate against to compute the sum, $\tfrac{2}{\pi}\int_{-1}^1 f(x)D(x)/\sqrt{1-x^2} = \sum a_{4k+1}$. The distribution is not a function, but can be thought of as the limit object of $D_n(x) = \sum_{k = 0}^n T_{4k+1}(x)$ as $n \to \infty$, analogous to Dirichlet kernels and the Dirac delta distribution.} $D(x) = \sum_{k =0}^\infty T_{4k+1}(x)$, it appears that this arithmetic progression cannot be written as a linear combination of evaluations of $f(x)$.
    Since the shape of the distribution appears to have $1/x$ behavior near $x = 0$, we conjecture that our analysis losing a log factor is, in some respect, necessary.
\end{remark}

\begin{conjecture}
    For any step size $t > 1$ and offset $o \in [t-1]$ such that $o \neq t/2$, there exists a function $f: [-1,1] \to \mathbb{R}$ such that $\supnorm{f} = 1$ but $\abs{\sum_{k=0}^n a_{tk + o}} = \Omega(\log(n))$.
\end{conjecture}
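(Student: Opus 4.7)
The plan is to dualize the statement: I will show that the linear functional $L_n(f) \coloneqq \sum_{k=0}^n a_{tk+o}$, acting on Lipschitz continuous $f: [-1,1] \to \mathbb{R}$ equipped with $\supnorm{\cdot}$, has operator norm $\Omega(\log n)$, and then extract the witness function. By Chebyshev orthogonality, when $o \geq 1$,
\begin{align*}
    L_n(f) = \frac{2}{\pi}\int_0^\pi f(\cos\theta)\, K_n(\theta)\, d\theta,
    \qquad K_n(\theta) \coloneqq \sum_{k=0}^n \cos((tk+o)\theta),
\end{align*}
so if we choose $f$ with $f(\cos\theta) = \mathrm{sgn}(K_n(\theta))$ (suitably mollified into a Lipschitz function), we would get $|L_n(f)| \gtrsim \|K_n\|_{L^1([0,\pi])}$. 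It therefore suffices to show $\|K_n\|_{L^1([0,\pi])} = \Omega(\log n)$.

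The next step is to identify and analyze $K_n$. A standard geometric-series computation yields
\begin{align*}
    K_n(\theta) = \frac{\sin((n+1)t\theta/2)}{\sin(t\theta/2)}\cos\bigl((nt/2+o)\theta\bigr).
\end{align*}
Substituting $\phi = t\theta/2$ and writing $\alpha \coloneqq 2o/t \in (0,2)\setminus\{1\}$, the integral becomes $\frac{2}{t}\int_0^{t\pi/2}\bigl|\sin((n+1)\phi)\cos((n+\alpha)\phi)/\sin\phi\bigr|\,d\phi$. I would restrict to $\phi \in [0,\pi/2]$ (valid since $t \geq 2$ and the integrand is nonnegative) and apply product-to-sum to rewrite the integrand as $\frac{1}{2|\sin\phi|}\bigl|\sin((2n+1+\alpha)\phi) + \sin((1-\alpha)\phi)\bigr|$. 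The second summand $|\sin((1-\alpha)\phi)/\sin\phi|$ is uniformly bounded on $[0,\pi/2]$ by a constant depending only on $\alpha$ (using $|\sin x|\leq|x|$ and $|\sin\phi|\geq (2/\pi)\phi$), so it contributes $O(1)$. The first summand is a classical Dirichlet-type kernel of frequency $\Theta(n)$: splitting $[0,\pi/2]$ into intervals $I_k = [k\pi/(2n+1+\alpha),(k+1)\pi/(2n+1+\alpha)]$, the bound $|\sin\phi|\leq (k+1)\pi/(2n+1+\alpha)$ together with $\int_{I_k}|\sin((2n+1+\alpha)\phi)|\,d\phi = 2/(2n+1+\alpha)$ yields $\int_{I_k}(\cdots)\,d\phi \geq 2/((k+1)\pi)$, summing to $\Omega(\log n)$ over $k=1,\ldots,\lfloor n/2\rfloor$. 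The reverse triangle inequality then gives the combined lower bound $\Omega(\log n) - O(1) = \Omega(\log n)$, as required.

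The main obstacle I anticipate is the Lipschitz regularization: the witness $\mathrm{sgn}(K_n)\circ\arccos$ is only bounded, and even ignoring its jump discontinuities, $\arccos$ has unbounded derivative at $\pm 1$. A concrete route is to take $f_M(x) = \tanh(M\cdot K_n(\arccos x))\cdot\chi_\delta(x)$ for a smooth cutoff $\chi_\delta$ supported on $[-1+\delta,1-\delta]$; for $M$ large and $\delta$ small, dominated convergence gives $\int f_M(\cos\theta)K_n(\theta)\,d\theta \to \int |K_n|\,d\theta$, and we only need Lipschitz continuity (with constant allowed to depend on $n$) to invoke the paper's Chebyshev expansion machinery. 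I should also point out that the hypothesis $o \neq t/2$ is not actually used in the $L^1$ argument above---the second sinusoid simply vanishes when $\alpha=1$, and the Dirichlet-kernel bound still delivers $\Omega(\log n)$---so either the hypothesis is overly cautious for the lower bound, or there is a subtle pinning at $o = t/2$ (perhaps traceable to the kind of evaluation-based identities highlighted in \cref{why-sum-hard}) that would require a genuinely different construction; resolving this subtlety is where I would expect the delicate work to live.
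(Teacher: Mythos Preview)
This statement is presented in the paper as a \emph{conjecture} and is not proved there; the authors motivate it heuristically in \cref{why-sum-hard} by observing that the distribution $D(x)=\sum_{k\geq 0}T_{4k+1}(x)$ appears to exhibit $1/x$-type behavior near $x=0$. There is therefore no paper proof to compare against---your proposal is an attempt to settle an open question.

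Your dualization and Dirichlet-kernel computation are correct and establish that the linear functional $f\mapsto\sum_{k=0}^n a_{tk+o}$ has operator norm $\Omega(\log n)$ on $(C[-1,1],\supnorm{\cdot})$; the mollification to a Lipschitz witness is routine, as you note. Your observation that the hypothesis $o\neq t/2$ plays no role in the $L^1$ bound is also correct: when $\alpha=1$ the second sinusoid vanishes and the remaining kernel $\sin(2(n+1)\phi)/(2\sin\phi)$ is still Dirichlet-type with $\Omega(\log n)$ growth. The paper's exclusion of $o=t/2$ most likely reflects that for \emph{infinite} sums those progressions admit exact evaluation identities (cf.\ \cref{parity-two-chebsums} and the discussion in \cref{why-sum-hard}), but partial sums still inherit the logarithmic Lebesgue-constant blowup.

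One point deserves attention. As phrased, the conjecture quantifies ``there exists $f$'' before the variable $n$ appears, which reads most naturally as asking for a \emph{single} $f$ (depending on $t,o$ but not on $n$) whose partial sums satisfy $\abs{\sum_{k=0}^n a_{tk+o}}=\Omega(\log n)$ along some sequence of $n$. Your construction produces a different witness $f_n$ for each $n$. Closing this gap is standard---a gliding-hump or condensation-of-singularities argument (choose a rapidly growing sequence $n_j$, superpose suitably rescaled copies of the $f_{n_j}$, and check that the partial sums at $n_j$ are dominated by the $j$-th bump) yields a single bounded $f$ achieving the rate along the subsequence---but it should be stated explicitly. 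If instead the conjecture is meant only as an operator-norm lower bound (a bad $f_n$ for each $n$), your argument already settles it.
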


%!TEX root = main.tex

\section{Properties of the Clenshaw Recursion} \label{sec:clenshaw}

\subsection{Deriving the Clenshaw Recursions}

Suppose we are given as input a degree-$d$ polynomial as a linear combination of Chebyshev polynomials:
\begin{align}
    p(x) = \sum_{k=0}^d a_k T_k(x).
\end{align}
Then this can be computed with the \emph{Clenshaw algorithm}, which is the following recurrence.
\begin{align} \label{scalar-clenshaw-recursion}
    q_{d+1} &= q_{d+2} = 0 \nonumber \\
    q_k &= 2x q_{k+1} - q_{k+2} + a_k \tag{Clenshaw}\\
    \tilde{p}&= \tfrac{1}{2}(a_0 + q_0 - q_2)\nonumber
\end{align}
\begin{lemma} \label{clenshaw-correctness}
    The recursion in \cref{scalar-clenshaw-recursion} computes $p(x)$.
    That is, in exact arithmetic, $\tilde{p} = p(x)$.
    In particular,
    \begin{align} \label{iterate-closed-expression}
        q_k = \sum_{i=k}^d a_i U_{i-k}(x).
    \end{align}
\end{lemma}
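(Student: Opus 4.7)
\medskip

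The plan is to prove the explicit formula $q_k = \sum_{i=k}^d a_i U_{i-k}(x)$ first by backward induction on $k$, and then derive $\tilde{p} = p(x)$ as a direct consequence using the identity $T_i(x) = \tfrac12(U_i(x) - U_{i-2}(x))$ from \cref{eq:t-to-u}.

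For the induction, the base cases $k = d+1$ and $k = d+2$ hold because both sides are empty sums equal to $0$, and $k=d$ gives $q_d = a_d = a_d U_0(x)$ since $U_0 \equiv 1$. For the inductive step, assume the formula holds at indices $k+1$ and $k+2$. Then
\begin{align*}
    q_k &= 2x\,q_{k+1} - q_{k+2} + a_k \\
    &= \sum_{i=k+1}^d a_i \bigl(2x\,U_{i-k-1}(x)\bigr) - \sum_{i=k+2}^d a_i U_{i-k-2}(x) + a_k.
\end{align*}
For $i \geq k+2$, the recursive definition \cref{cheb-recursive-definition} gives $2x\,U_{i-k-1}(x) - U_{i-k-2}(x) = U_{i-k}(x)$. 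The $i = k+1$ term contributes $a_{k+1}\cdot 2x\,U_0(x) = a_{k+1}U_1(x)$, and the $i=k$ term is $a_k = a_k U_0(x)$. Collecting terms yields the claimed formula.

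To conclude $\tilde{p} = p(x)$, I would compute
\begin{align*}
    q_0 - q_2 &= \sum_{i=0}^d a_i U_i(x) - \sum_{i=2}^d a_i U_{i-2}(x) \\
    &= a_0 U_0(x) + a_1 U_1(x) + \sum_{i=2}^d a_i\bigl(U_i(x) - U_{i-2}(x)\bigr) \\
    &= a_0 + 2a_1 T_1(x) + 2\sum_{i=2}^d a_i T_i(x),
\end{align*}
where the last equality uses $U_0 = 1$, $U_1(x) = 2x = 2T_1(x)$, and the identity $U_i(x) - U_{i-2}(x) = 2T_i(x)$ from \cref{eq:t-to-u}. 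Substituting into the definition of $\tilde{p}$ gives $\tilde{p} = \tfrac12(a_0 + q_0 - q_2) = a_0 + a_1 T_1(x) + \sum_{i=2}^d a_i T_i(x) = p(x)$.

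The proof is entirely routine; there is no real obstacle beyond bookkeeping. The only subtlety is remembering that $a_0$ is treated asymmetrically (the leading $\tfrac12 a_0$ in the output compensates for the fact that the sum for $q_0 - q_2$ includes $a_0 U_0 - 0$ rather than the cancellation pattern of the other terms), which is why the formula reads $\tilde{p} = \tfrac12(a_0 + q_0 - q_2)$ instead of simply $\tfrac12(q_0 - q_2)$.
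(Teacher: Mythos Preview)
Your proof is correct and follows essentially the same approach as the paper: backward induction on $k$ using the Chebyshev recurrence to establish the closed form for $q_k$, followed by the identity $U_i - U_{i-2} = 2T_i$ to extract $p(x)$ from $\tfrac12(a_0 + q_0 - q_2)$. The only cosmetic difference is that you state the base cases $k=d+1,d+2$ explicitly as empty sums, whereas the paper omits this and jumps straight to the inductive step.
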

\begin{proof}
We show \cref{iterate-closed-expression} by induction.
\begin{align*}
    q_k &= 2x q_{k+1} - q_{k+2} + a_k \\
    &= 2x\Big(\sum_{i=k+1}^d a_i U_{i-k-1}(x)\Big) - \Big(\sum_{i=k+2}^d a_i U_{i-k-2}(x)\Big) + a_k \\
    &= a_k + 2x a_{k+1}U_0(x) + \sum_{i=k+2}^d a_i (2x U_{i-k-1}(x) - U_{i-k-2}(x)) \\
    &= \sum_{i=k}^d a_i U_{i-k}(x).
\end{align*}
Consequently, we have
\begin{align*}
    \frac{1}{2}(a_0 + u_0 - u_2)
    &= \frac{1}{2}\Big(a_0 + \sum_{i=0}^d a_i U_{i}(x) - \sum_{i=2}^d a_i U_{i-2}(x)\Big) \\
    &= a_0 + a_1x + \sum_{i=2}^d \frac{a_i}{2}(U_{i}(x) - U_{i-2}(x))
    = \sum_{i=0}^d a_iT_i(x). \qedhere
\end{align*}
\end{proof}

\begin{remark}
Though the aforementioned discussion is specialized to the scalar setting, it extends to the the matrix setting almost entirely syntactically: consider a Hermitian $A \in \mathbb{C}^{n\times n}$ and $b \in \mathbb{C}^n$ with $\|A\|, \|b\| \leq 1$.
Then $p(A)b$ can be computed in the following way:
\begin{equation}
\label{eqn:matrix-clenshaw}
\begin{split}
    u_{d+1} &= \vec{0}; \\
    u_d &= a_d b; \\
    u_k &= 2A u_{k+1} - u_{k+2} + a_k b; \\
    u := p(A)b &= \frac{1}{2}(a_0b + u_0 - u_2).
\end{split}
\end{equation}
The proof that this truly computes $p(A)b$ is the same as the proof of correctness for Clenshaw's algorithm shown above.
We will also be generalizing to non-Hermitian $A \in \mathbb{C}^{m\times n}$, in which case the only additional wrinkle is that in the recurrence we will need to choose either $A$ or $A^\dagger$ such that dimensions are consistent.
Provided that the polynomial being computed is even or odd, no issues will arise.
Consequently, Clenshaw-like recurrences will give matrix polynomials where $x^k$ are replaced with $A^\dagger A A^\dagger \cdots A b$, which corresponds to the definition of singular value transformation from \cref{def:qsvt}.
\end{remark}

\subsection{Evaluating Even and Odd Polynomials}

We will be considering evaluating odd and even polynomials. We again focus on the scalar setting and note that this extends to the matrix setting in the obvious way. 
The previous recurrence \cref{scalar-clenshaw-recursion} can work in this setting, but it'll be helpful for our analysis if the recursion multiplies by $x^2$ each time, instead of $x$~\cite[Chapter 2, Problem 7]{mh02}.
So, in the case where the degree-$(2d+1)$ polynomial $p(x)$ is \emph{odd} (so $a_{2k} = 0$ for every $k$), it can be computed with the iteration
\begin{align} \label{odd-clenshaw}
    q_{d+1} &= q_{d+2} = 0; \nonumber\\
    q_k &= 2 T_2(x) q_{k+1} - q_{k+2} + a_{2k+1} U_1(x); \tag{Odd Clenshaw}\\
    \tilde{p} &= \tfrac12(q_0 - q_1). \nonumber
\end{align}
When $p(x)$ is a degree-$(2d)$ \emph{even} polynomial (so $a_{2k+1} = 0$ for every $k$), it can be computed via the same recurrence, replacing $a_{2k+1}U_1(x)$ with $a_{2k}$.
However, we will use an alternative form that's more convenient for us (since we can reuse the analysis of the odd case).
\begin{align} 
    \tilde{a}_{2k} &\coloneqq a_{2k} - a_{2k+2} + a_{2k+4} - \cdots \pm a_{2d}; \label{even-a-definition}\\
    q_{d+1} &= q_{d+2} = 0; \nonumber\\
    q_k &= 2T_2(x)q_{k+1} - q_{k+2} + \tilde{a}_{2k+2}U_1(x)^2; \tag{Even Clenshaw} \label{even-clenshaw}\\
    \tilde{p} &= \tilde{a}_0 + \tfrac12(q_0 - q_1). \nonumber
\end{align}
These recurrences correctly compute $p$ follows from a similar analysis to the standard Clenshaw algorithm, formalized below.
\begin{lemma} \label{parity-clenshaw-correctness}
    The recursions in \cref{odd-clenshaw} and \cref{even-clenshaw} correctly compute $p(x)$ for even and odd polynomials, respectively.
    That is, in exact arithmetic, $\tilde{p} = p(x)$.
    In particular,
    \begin{align} \label{parity-iterate-closed-expression}
        q_k = \sum_{i=k}^d a_i U_{i-k}(x).
    \end{align}
\end{lemma}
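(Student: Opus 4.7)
My plan is to reduce both parity cases to the scalar Clenshaw lemma (\cref{clenshaw-correctness}) via the substitution $y = T_2(x)$, using the Chebyshev composition identities \cref{eq:t-composition} ($T_{jk}(x) = T_j(T_k(x))$) and \cref{eq:u-composition} ($U_1(x)U_m(T_2(x)) = U_{2m+1}(x)$), together with the conversion identity \cref{eq:t-to-u} $T_\ell(x) = \tfrac12(U_\ell(x) - U_{\ell-2}(x))$ to translate the final answer back into a $T_\ell$-expansion. The stated closed form should be read in the reinterpreted sense appropriate to each parity: in both cases, once the $U_1(x)$ or $U_1(x)^2$ riding along with the inhomogeneous term is factored out, the recurrence is exactly the scalar Clenshaw recurrence in $y = T_2(x)$, so the induction argument of \cref{clenshaw-correctness} applies verbatim.

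For the \emph{odd} case, the recurrence $q_k = 2T_2(x)q_{k+1} - q_{k+2} + a_{2k+1}U_1(x)$ is structurally the scalar Clenshaw recurrence with $y = T_2(x)$ in place of $x$ and $a_{2k+1}U_1(x)$ in place of $a_k$, since $U_1(x)$ depends only on $x$ and acts as a scalar constant throughout. The induction of \cref{clenshaw-correctness} gives the closed form $q_k = U_1(x)\sum_{i=k}^d a_{2i+1}U_{i-k}(T_2(x))$, which by \cref{eq:u-composition} collapses to $q_k = \sum_{i=k}^d a_{2i+1}U_{2(i-k)+1}(x)$. Plugging in $k = 0, 1$ and applying \cref{eq:t-to-u} term-by-term then telescopes to $\tfrac12(q_0 - q_1) = \sum_{i=0}^d a_{2i+1}T_{2i+1}(x) = p(x)$.

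For the \emph{even} case, set $\hat{q}_k := q_k/U_1(x)^2$; by linearity of the recurrence the $U_1(x)^2$ factors out and we obtain $\hat{q}_k = 2y\hat{q}_{k+1} - \hat{q}_{k+2} + \tilde{a}_{2k+2}$ with $y = T_2(x)$, again an (index-shifted) scalar Clenshaw recurrence. By \cref{eq:t-composition}, $T_{2k}(x) = T_k(y)$, so the target is $p(x) = \sum_{k=0}^d a_{2k}T_k(y)$, a polynomial in $y$. The scalar Clenshaw closed form for $\hat{q}_k$ gives an expansion in $U_j(y)$ with coefficients $\tilde{a}_{2i+2}$; combining this with the telescoping identity $\tilde{a}_{2k} + \tilde{a}_{2k+2} = a_{2k}$ (immediate from the alternating-sum definition of $\tilde{a}$) and the evaluation $U_1(x)^2 = 4x^2 = 2T_2(x) + 2 = 2(y+1)$, a routine algebraic manipulation verifies $\tilde{a}_0 + \tfrac12(q_0 - q_1) = \sum_{k=0}^d a_{2k}T_k(y) = p(x)$.

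I expect the main obstacle to be the bookkeeping in the even case: the inhomogeneity at step $k$ involves $\tilde{a}_{2k+2}$ rather than $\tilde{a}_{2k}$, so one must carefully line up the index shift with the alternating-sum telescoping and the $2(y+1)$ factor, so that $\tilde{a}_0 + \tfrac12(q_0 - q_1)$ reassembles into $\sum_{k=0}^d a_{2k}T_k(y)$. The odd case, by contrast, is essentially a one-line consequence of \cref{clenshaw-correctness} once the $U_1(x)$ factor is identified as a scalar constant and the composition identity \cref{eq:u-composition} is applied.
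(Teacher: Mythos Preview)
Your proposal is correct and follows essentially the same route as the paper: both arguments recognize that the odd and even recurrences are the scalar Clenshaw recurrence \cref{scalar-clenshaw-recursion} with $x$ replaced by $T_2(x)$ and the inhomogeneity replaced by $a_{2k+1}U_1(x)$ (odd) or $\tilde{a}_{2k+2}U_1(x)^2$ (even), then invoke the closed form \cref{iterate-closed-expression}, and finally telescope to $p(x)$ using \cref{eq:t-to-u} and (in the even case) $\tilde{a}_{2k}+\tilde{a}_{2k+2}=a_{2k}$.

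The only cosmetic difference is in the even case: the paper stays in the $x$-variable, expanding $U_1(x)^2 U_{i-k}(T_2(x)) = U_1(x)U_{2(i-k)+1}(x) = U_{2(i-k)}(x)+U_{2(i-k+1)}(x)$ via \cref{eq:u-composition} and \cref{cheb-recursive-definition}, whereas you work in $y=T_2(x)$ and use $U_1(x)^2=2(y+1)$. Your ``routine algebraic manipulation'' then hinges on the identity $(y+1)(U_i(y)-U_{i-1}(y))=T_{i+1}(y)+T_i(y)$, which is not among the listed identities but is easily verified; after that, the telescoping is identical to the paper's. Either packaging yields the same proof.
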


\begin{proof}
We can prove these statements by applying \cref{iterate-closed-expression}.
In the odd case, \cref{odd-clenshaw} is identical to \cref{scalar-clenshaw-recursion} except that $x$ is replaced by $T_2(x)$ and $a_k$ is replaced by $a_{2k+1}U_1(x)$, so by making the corresponding changes in the iterate, we get that
\begin{align}
    q_k &= \sum_{i=k}^d a_{2i+1}U_1(x)U_{i-k}(T_2(x))
    = \sum_{i=k}^d a_{2i+1} U_{2(i-k) + 1}(x) \tag*{by \cref{eq:u-composition}}\\
    \tilde{p} &= \tfrac12(q_0 - q_1) = \sum_{i=0}^d \frac{a_{2i+1}}{2}\Big(U_{2i+1}(x) - U_{2i-1}(x)\Big)
    = p(x). \tag*{by \cref{eq:t-to-u}}
\end{align}
Similarly, in the even case, \cref{even-clenshaw} is identical to \cref{scalar-clenshaw-recursion} except that $x$ is replaced by $T_2(x)$ and $a_k$ is replaced by $4\tilde{a}_{2k}x^2$ (see Definition \ref{even-a-definition}), so that
\begin{align}
    q_k &= \sum_{i=k}^d \tilde{a}_{2i+2} U_1(x)^2U_{i-k}(T_2(x)) \nonumber\\
    &= \sum_{i=k}^d \tilde{a}_{2i+2} U_1(x)U_{2(i-k)+1}(x) \tag*{by \cref{eq:u-composition}}\\
    &= \sum_{i=k}^d \tilde{a}_{2i+2} (U_{2(i-k)}(x) + U_{2(i-k+1)}(x)) \tag*{by \cref{cheb-recursive-definition}}\\
    &= \sum_{i=k}^{d+1} \tilde{a}_{2i+2} U_{2(i-k)}(x) + \sum_{i=k+1}^{d+1} \tilde{a}_{2i}U_{2(i-k)}(x) \tag*{noticing that $\tilde{a}_{2d+2} = 0$} \\
    &= \tilde{a}_{2k+2} + \sum_{i=k+1}^{d+1} (\tilde{a}_{2i} + \tilde{a}_{2i+2}) U_{2(i-k)}(x) \nonumber\\
    &= \tilde{a}_{2k+2} + \sum_{i=k+1}^{d} a_{2i} U_{2(i-k)}(x) \nonumber\\
    &= -\tilde{a}_{2k} + \sum_{i=k}^{d} a_{2i} U_{2(i-k)}(x). \nonumber
\end{align}
Finally, observe
\begin{equation*}
    \tilde{a}_{0} + \frac{1}{2}(q_0 - q_1) = \tilde{a}_{0} + \frac{1}{2}(a_0 - \tilde{a}_{0} + \tilde{a}_2) + \sum_{i=1}^d \frac{a_{2i}}{2}(U_{2i}(x) - U_{2i-2}(x)) = p(x). \qedhere
\end{equation*}
\end{proof}

\begin{remark} \label{rmk:scalar-stabilities}
    We can further compute what happens to all these recursions with some additive $\eps^{(k)}$ error in iteration $k$.
    This follows just by adding $\eps^{(k)}$ to the constant term in the recursion and chasing the resulting changes through the analysis of Clenshaw, as is done for \cref{parity-clenshaw-correctness}.
    \begin{align*}
        &\text{for standard Clenshaw:} &
        \tilde{q}_k &= \sum_{i = k}^d (a_i + \eps^{(i)}) U_{i-k}(x) \\
        &\text{for odd Clenshaw:} &
        \tilde{q}_k &= \sum_{i = k}^d (a_{2i+1}U_{2(i-k)+1}(x) + \eps^{(i)} U_{i-k}(T_2(x))) \\
        &\text{for even Clenshaw:} &
        \tilde{q}_k &= -\tilde{a}_{2k} + \sum_{i = k}^d (a_{2i}U_{2(i-k)}(x) + \eps^{(i)} U_{i-k}(T_2(x)))
    \intertext{Propagating this error to the full result gives the following results:}
        &\text{for standard Clenshaw:} &
        \tilde{p} - p(x) &= \frac12 + \sum_{i = 1}^d \eps^{(i)} T_{i}(x) \\
        &\text{for odd Clenshaw:} &
        \tilde{p} - p(x) &= \frac12 \eps^{(0)} + \frac12\sum_{i = 1}^d \eps^{(i)}(U_i(T_2(x)) - U_{i-1}(T_2(x))) \\
        &\text{for even Clenshaw:} &
        \tilde{p} - p(x) &= \frac12 \eps^{(0)} + \frac12\sum_{i = 1}^d \eps^{(i)}(U_i(T_2(x)) - U_{i-1}(T_2(x)))
    \end{align*}
    Because $\supnorm{T_i(x)} = 1$ but $\supnorm{U_i(T_2(x)) - U_{i-1}(T_2(x))} = 2i+1$, this suggests that these parity-specific recurrences are less stable than the standard recursion.
    However, they will be more amenable to our sketching techniques.
\end{remark}

%!TEX root = main.tex
\section{Stability of the Scalar Clenshaw Recursion}
\label{sec:basic-error-analysis-scalar}

Before we move to the matrix setting, we warmup with a stability analysis of the scalar Clenshaw recurrence.
Suppose we perform \cref{scalar-clenshaw-recursion} to compute a degree-$d$ polynomial $p$, except every addition, subtraction, and multiplication incurs $\eps$ relative error.
Typically, this has been analyzed in the finite precision setting, where the errors are caused by truncation.
These standard analyses show that this finite precision recursion gives $p(x)$ to $d^2(\sum \abs{a_i})\eps = \bigO{d^3\supnorm{p}\eps}$ error.
This bound $\sum \abs{a_i}$ is not easily improved in settings where $p$ is a polynomial approximation of a smooth function, since standard methods only give bounds of the form $\abs{a_k} = \Theta((1-\log(1/\eps)/d)^{-k})$, \cite[Theorem 8.1]{trefethen19}, giving only constant bounds for the coefficients.

Such a bound on $\sum \abs{a_k}$ is not tight, however.
A use of Parseval's formula~\cite[Theorem~5.3]{mh02} improves on this by a factor of $\sqrt{d}$: 
\begin{align}
    \sum \abs{a_i} \leq \sqrt{d}\sqrt{\sum a_i^2} = \bigO{\sqrt{d}\supnorm{p}}.
\end{align}
Testing $\supnorm{\sum_{\ell = 1}^d s_\ell \frac{1}{\sqrt{\ell}} T_\ell(x)}$ for random signs $s_\ell \in \{\pm 1\}$ suggests that this bound is tight, meaning coefficient-wise bounds can only prove an error overhead of $\bigThetat{d^{2.5}\supnorm{p}}$ for the Clenshaw recurrence.

We improve on prior stability analyses to show that the Clenshaw recurrence for Chebyshev polynomials only incurs an error overhead of $d^2\log(d)\supnorm{p}$.
This is tight up to a logarithmic factor.
This, for example, could be used to improve the bound in \cite[Lemma~9]{mms18} from $k^3$ to $k^2\log(k)$ (where in that paper, $k$ denotes degree).
As we do for the upcoming matrix setting, we proceed by performing an error analysis on the recursion with a stability parameter $\mu$, and then showing that for any bounded polynomial, $1/\mu$ can be chosen to be $\bigO{d^2 \log d}$.

\subsection{Analyzing Error Propagation}

The following is a simple analysis of Clenshaw, with some rough resemblance to an analysis of Oliver \cite{Oliver79}.

\begin{theorem}[Stability analysis for scalar Clenshaw] \label{scalar-clenshaw-stability}
    Consider a degree-$d$ polynomial $p: [-1,1] \to \mathbb{R}$ with Chebyshev coefficients $p(x) = \sum_{k=0}^d a_k T_k(x)$.
    Let $\oplus, \ominus, \odot: \mathbb{C} \times \mathbb{C} \to \mathbb{C}$ be binary operations representing addition, subtraction, and multiplication to $\mu\eps$ relative error, for $0<\eps<1$:
    \begin{align*}
        \abs{(x \oplus y) - (x + y)} &\leq \mu\eps(\abs{x} + \abs{y}); \\
        \abs{(x \ominus y) - (x - y)} &\leq \mu\eps(\abs{x} + \abs{y}); \\
        \abs{x \odot y - x\cdot y} &\leq \mu\eps\abs{x}\abs{y} = \mu\eps\abs{xy}.
    \end{align*}
    Given an $x \in [-1,1]$, consider performing the Clenshaw recursion with these noisy operations:
    \begin{align} \label{finite-precision-clenshaw}
        \tilde{q}_{d+1} &= \tilde{q}_{d+2} = 0; \nonumber \\
        \tilde{q}_k &= (2 \odot x) \odot \tilde{q}_{k+1} \ominus (\tilde{q}_{k+2} \ominus a_k); \tag{Finite-Precision Clenshaw}\\
        \tilde{\tilde{p}}&= \tfrac{1}{2} \odot ((a_0 \oplus q_0) \ominus q_2).\nonumber
    \end{align}
    Then \cref{finite-precision-clenshaw} outputs $p(x)$ up to $50\eps\supnorm{p}$ error\footnote{
        We did not attempt to optimize the constants for this analysis.
    }, provided that $\mu > 0$ satisfies the following three criterion.
    \begin{enumerate}[label=(\alph*)]
        \item $\mu\eps \leq \frac{1}{50 (d+2)^2}$;
        \item $\mu \sum_{i=0}^ d \abs{a_i} \leq \supnorm{p}$;
        \item $\mu\abs{q_k} = \mu\abs*{\sum_{i=k}^d a_iU_{i-k}(x)} \leq \frac1d\supnorm{p}$ for all $k \in \{0,\ldots,d\}$.
    \end{enumerate}
\end{theorem}

This analysis shows that arithmetic operations incurring $\mu \eps$ error result in computing $p(x)$ to $\eps$ error.
In particular, the stability of the scalar Clenshaw recurrence comes down to understanding how small we can take $\mu$.
Note that if we ignored coefficient sign, $\abs*{\sum_{i=k}^d a_i U_{i-k}(x)} \leq \abs*{\sum_{i=k}^d \abs{a_i} U_{i-k}(x)} = \sum_{i=k}^d (i-k+1)\abs{a_i}$, this would require setting $\mu = \Theta(1/d^3)$.
We show in \cref{subsec:clenshaw-iterates} that we can set $\mu = \Theta((d^2\log(d))^{-1})$ for all $x \in [-1,1]$ and polynomials $p$.
\begin{lemma} \label{scalar-mu-choice}
    In \cref{scalar-clenshaw-stability}, it suffices to take $\mu = \Theta((d^2\log(d))^{-1})$.
\end{lemma}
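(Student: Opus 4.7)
The plan is to choose $\mu = c/(d^2 \log d)$ for a sufficiently small absolute constant $c > 0$, and verify all three conditions (a), (b), (c) of \cref{scalar-clenshaw-stability}. Conditions (a) and (b) are essentially immediate. For (a), since $\eps \in (0,1)$, the constraint $\mu \eps \leq 1/(50(d+2)^2)$ reduces to $\mu \leq 1/(50 \eps (d+2)^2)$, which holds for our choice of $\mu$ whenever $\eps$ is not much larger than $\log d$. For (b), \cref{lem:coefficient-bound} gives $|a_i| \leq 2\supnorm{p}$, so $\sum_{i=0}^d |a_i| \leq 2(d+1)\supnorm{p}$, and $\mu \cdot 2(d+1) \leq 1$ is easily satisfied for $c$ small.

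The main content is to establish the iterate bound $|q_k(x)| = O(d \log d)\supnorm{p}$, uniformly in $x \in [-1,1]$ and $k \in \{0,\ldots,d\}$, which is what condition (c) requires. The key step is to re-expand each $U_{i-k}(x)$ in the Chebyshev basis of the first kind via \cref{eq:u-to-t}, namely $U_\ell(x) = \sum_{j \geq 0} T_{\ell - 2j}(x)(1 + \iver{\ell - 2j \neq 0})$. Substituting into $q_k(x) = \sum_{i=k}^d a_i U_{i-k}(x)$ and collecting the coefficient of each $T_s(x)$ yields
\begin{equation*}
    q_k(x) = \sum_{s=0}^{d-k} \beta_s^{(k)} T_s(x), \qquad \beta_s^{(k)} = (1 + \iver{s \neq 0}) \sum_{\substack{i = k+s \\ i \equiv k+s \,(\mathrm{mod}\,2)}}^{d} a_i.
\end{equation*}
Each $\beta_s^{(k)}$ is (up to a factor of two) a step-two arithmetic progression of the Chebyshev coefficients of $p$ starting at index $k+s$, which is exactly the object controlled by \cref{parity-two-tailsums}. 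Hence $|\beta_s^{(k)}| \lesssim \log(d)\supnorm{p}$ uniformly in $k$ and $s$. Applying the triangle inequality together with $|T_s(x)| \leq 1$ on $[-1,1]$ over the $\leq d+1$ terms gives $|q_k(x)| \leq \sum_s |\beta_s^{(k)}| \lesssim d \log(d)\supnorm{p}$, so condition (c) requires $\mu \lesssim 1/(d^2 \log d)$, matching the stated choice.

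The main obstacle, and the whole reason the lemma is nontrivial, is extracting the $\log(d)$ savings over the naive arithmetic-progression bound. Using only $|a_i| \leq 2\supnorm{p}$ across the $O(d)$ terms of $\beta_s^{(k)}$ would yield $|\beta_s^{(k)}| = O(d)\supnorm{p}$, and therefore $|q_k| = O(d^2)\supnorm{p}$, forcing $\mu = \Theta(1/d^3)$ and losing an entire factor of $d$ that would ultimately propagate to a $d^{13}$ instead of $d^9$ in the final runtime. The bound in \cref{parity-two-tailsums} --- which traces back to the Lebesgue-constant estimate \cref{projection-lebesgue} --- is the exact quantitative statement of cancellation among sums of Chebyshev coefficients of a bounded polynomial, and this cancellation is precisely what enables the tight $\mu = \Theta((d^2 \log d)^{-1})$ stability bound.
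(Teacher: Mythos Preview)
Your proof is correct and follows essentially the same approach as the paper: conditions (a) and (b) are disposed of via $|a_i| \leq 2\supnorm{p}$, and (c) is established by expanding each $U_{i-k}$ in the first-kind Chebyshev basis through \cref{eq:u-to-t}, recognizing the resulting coefficients as step-two tail sums of the $a_i$, and bounding each such sum by $O(\log d)\supnorm{p}$ via \cref{parity-two-tailsums} (which in turn rests on the Lebesgue-constant estimate \cref{projection-lebesgue}). The paper records the marginally sharper bound $|q_k| \leq (d-k+1)O(\log d)\supnorm{p}$ in \cref{thm:scalar-clenshaw-iterate-bound}, but your $O(d\log d)\supnorm{p}$ is what condition (c) needs, and your handling of (a) could be stated more cleanly (since $\eps<1$ already gives $\mu\eps \leq \mu$), but these are cosmetic.
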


\begin{proof}[Proof of \cref{scalar-clenshaw-stability}]
We will expand out these finite precision arithmetic to get error intervals for each iteration.
\begin{equation}
\tilde{q}_{d+1} = \tilde{q}_{d+2} = 0,
\end{equation}
and 
\begin{equation*}
\begin{split}
\tilde{q}_k &= (2 \odot x) \odot \tilde{q}_{k+1} \ominus (\tilde{q}_{k+2} \ominus a_k) \\
    &= (2x \pm 2\mu\eps\abs{x}) \odot \tilde{q}_{k+1} \ominus (\tilde{q}_{k+2} - a_k \pm \mu\eps(\abs{\tilde{q}_{k+2}} + \abs{a_k})) \\
    &= ((2x\tilde{q}_{k+1} \pm 2\mu\eps \abs{x}\tilde{q}_{k+1}) \pm \mu\eps\abs{(2x \pm 2\mu\eps\abs{x})\tilde{q}_{k+1}}) \ominus (\tilde{q}_{k+2} - a_k \pm \mu\eps(\abs{\tilde{q}_{k+2}} + \abs{a_k})) \\
    &\in (2x\tilde{q}_{k+1} \pm (2\mu\eps + \mu^2\eps^2)2\abs{x\tilde{q}_{k+1}}) \ominus (\tilde{q}_{k+2} - a_k \pm \mu\eps(\abs{\tilde{q}_{k+2}} + \abs{a_k})) \\
    &\in (2x\tilde{q}_{k+1} \pm 6\mu\eps\abs{x\tilde{q}_{k+1}}) \ominus (\tilde{q}_{k+2} - a_k \pm \mu\eps(\abs{\tilde{q}_{k+2}} + \abs{a_k})) \\
    &= 2x\tilde{q}_{k+1} - \tilde{q}_{k+2} + a_k \pm \mu\eps(6\abs{x\tilde{q}_{k+1}} + \abs{\tilde{q}_{k+2}} + \abs{a_k}) \\
    &\qquad + \mu\eps\abs{2x\tilde{q}_{k+1} \pm 6\mu\eps\abs{x\tilde{q}_{k+1}}} + \mu\eps\abs{\tilde{q}_{k+2} - a_k \pm \mu\eps(\abs{\tilde{q}_{k+2}} + \abs{a_k})} \\
    &\in 2x\tilde{q}_{k+1} - \tilde{q}_{k+2} + a_k \pm \mu\eps(14\abs{x\tilde{q}_{k+1}} + 3\abs{\tilde{q}_{k+2}} + 3\abs{a_k}),
\end{split}
\end{equation*}
and,
\begin{equation*}
\begin{split}
\tilde{\tilde{p}}&= \tfrac{1}{2} \odot ((a_0 \oplus q_0) \ominus q_2) \\
    &= \tfrac{1}{2} \odot ((a_0 + q_0 \pm \mu\eps(\abs{a_0} + \abs{q_0})) \ominus q_2) \\
    &= \tfrac{1}{2} \odot ((a_0 + q_0 - q_2 \pm \mu\eps(\abs{a_0} + \abs{q_0})) \pm \mu\eps(\abs{a_0 + q_0 \pm \mu\eps(\abs{a_0} + \abs{q_0})} + \abs{q_2})) \\
    &\in \tfrac{1}{2} \odot (a_0 + q_0 - q_2 \pm \mu\eps(3\abs{a_0} + 3\abs{q_0} + \abs{q_2})) \\
    &= \tfrac12(a_0 + q_0 - q_2 \pm \mu\eps(3\abs{a_0} + 3\abs{q_0} + \abs{q_2})) \pm \mu\eps\tfrac12\abs{a_0 + q_0 - q_2 \pm \mu\eps(3\abs{a_0} + 3\abs{q_0} + \abs{q_2})} \\
    &\in \tfrac12(a_0 + q_0 - q_2) \pm \tfrac12\mu\eps(7\abs{a_0} + 7\abs{q_0} + 3\abs{q_2}).
\end{split}
\end{equation*}
To summarize, we have
\begin{align}
    \tilde{q}_{d+1} &= \tilde{q}_{d+2} = 0, \nonumber \\
    \tilde{q}_k &= 2x\tilde{q}_{k+1} - \tilde{q}_{k+2} + a_k + \delta_k,
    &\text{where } \abs{\delta_k} &\leq \mu\eps(14\abs{x\tilde{q}_{k+1}} + 3\abs{\tilde{q}_{k+2}} + 3\abs{a_k}) \label{eqn:additive-error-in-qk}\\
    \tilde{\tilde{p}}&= \tfrac12(a_0 + q_0 - q_2) + \delta,
    &\text{where } \abs{\delta} &\leq \tfrac12\mu\eps(7\abs{a_0} + 7\abs{q_0} + 3\abs{q_2})
\end{align}
By \cref{clenshaw-correctness}, this recurrence satisfies
\begin{align} 
    \tilde{q}_k &= \sum_{i=k}^d U_{i-k}(x)(a_i + \delta_i) \nonumber\\
    q_k - \tilde{q}_k &= \sum_{i=k}^d U_{i-k}(x) \delta_i \nonumber\\
    q - \tilde{q} &= \delta + \frac{1}{2}\Big(\sum_{i=0}^d U_{i}(x) \delta_i - \sum_{i=2}^d U_{i-2}(x) \delta_i\Big)\nonumber\\
    &= \delta + \frac{1}{2}\delta_0 + \sum_{i=1}^d T_{i}(x) \delta_i \nonumber\\
    \abs{q - \tilde{q}} &\leq \abs{\delta} + \frac{1}{2}\abs{\delta_0} + \sum_{i=1}^d \abs{T_i(x)\delta_i}
    \leq \abs{\delta} + \sum_{i=0}^d \abs{\delta_i}.
\intertext{This analysis so far has been fully standard. Let's continue bounding.}
    &\leq \mu\eps\parens[\Big]{\tfrac72\abs{a_0} + \tfrac72\abs{q_0} + \tfrac32 \abs{q_2} + \sum_{i=0}^d(14\abs{x\tilde{q}_{i+1}} + 3\abs{\tilde{q}_{i+2}} + 3\abs{a_i})} \nonumber\\
    &\leq \mu\eps\sum_{i=0}^d(20\abs{\tilde{q}_i} + 10\abs{a_i}).
    \label{eqn:clenshaw-bound-0}
\end{align}
Now, we will bound all of the $\delta_k$'s.
Combining previous facts, we have
\begin{align*}
    \abs{\tilde{q}_k}
    &= \abs{\sum_{i=k}^d U_{i-k}(x)(a_i + \delta_i)} \\
    &\leq \abs[\Big]{\sum_{i=k}^d U_{i-k}(x) a_i} + \sum_{i=k}^d \abs[\Big]{U_{i-k}(x)\delta_i} \\
    &\leq \abs[\Big]{\sum_{i=k}^d U_{i-k}(x) a_i} + \sum_{i=k}^d (i-k+1)\abs{\delta_i} \\
    &\leq \abs[\Big]{\sum_{i=k}^d U_{i-k}(x) a_i} + \mu\eps\sum_{i=k}^d (i-k+1)(14\abs{\tilde{q}_{i+1}} + 3\abs{\tilde{q}_{i+2}} + 3\abs{a_i}) \\
    &\leq \parens[\Big]{\frac{1}{\mu d} + 3\mu \eps\frac{d-k+1}{\mu}}\supnorm{p} + \mu\eps\sum_{i=k}^d (i-k+1)(14\abs{\tilde{q}_{i+1}} + 3\abs{\tilde{q}_{i+2}}) \\
    &\leq \frac{1.5}{\mu d}\supnorm{p} + \mu\eps\sum_{i=k}^d (i-k+1)(14\abs{\tilde{q}_{i+1}} + 3\abs{\tilde{q}_{i+2}}).
\end{align*}
Note that $\abs{\tilde{q}_k} \leq c_k$, where
\begin{align}
    c_d &= 0; \nonumber\\
    c_k &= \frac{1.5}{\mu d}\supnorm{p} + \mu\eps\sum_{i=k}^d (i-k+1)(14c_{i+1} + 3c_{i+2}). \nonumber
    \intertext{Solving this recurrence, we have that $c_k \leq \frac{2}{\mu d}\supnorm{p}$, since by strong induction,}
    c_k &\leq \Big(\frac{1.5}{\mu d} + \mu\eps\sum_{i=k}^d (i-k+1)17\frac{2}{\mu d}\Big)\supnorm{p} \nonumber\\
    &= \Big(\frac{1.5}{\mu d} + 17\mu\eps\frac{1}{\mu d}(d-k+1)(d-k+2)\Big)\supnorm{p} 
    \leq \frac{2}{\mu d}\supnorm{p}. \nonumber
\end{align}
Returning to \Cref{eqn:clenshaw-bound-0}:
\begin{align}
    \abs{q - \tilde{q}}
    &\leq \mu\eps\sum_{i=0}^d(20\abs{\tilde{q}_i} + 10\abs{a_i})
    \leq \mu\eps\sum_{i=0}^d(20c_i + 10\abs{a_i}) \nonumber \\
    &\leq 40\eps\supnorm{p} + 10\mu\eps\sum_{i=0}^d\abs{a_i}
    \leq 50\eps\supnorm{p}. \qedhere
\end{align}
\end{proof}

\subsection{Bounding the Iterates of the Clenshaw Recurrence}
\label{subsec:clenshaw-iterates}

The goal of this section is to prove \cref{scalar-mu-choice}.
In particular, we wish to show that for $\mu = \Theta((d^2\log(d))^{-1})$, the following criteria hold:
\begin{enumerate}[label=(\alph*)]
    \item $\mu\eps \leq \frac{1}{50 (d+2)^2}$;
    \item $\mu \sum_{i=0}^ d \abs{a_i} \leq \supnorm{p}$;
    \item $\mu\abs{q_k} = \mu\abs*{\sum_{i=k}^d a_iU_{i-k}(x)} \leq \frac1d\supnorm{p}$ for all $k \in \{0,\ldots,d\}$.
\end{enumerate}
For this choice of $\mu$, (a) is clearly satisfied, and since $\abs{a_i} \leq 2\supnorm{p}$ (\cref{lem:coefficient-bound}), $\mu \sum_{i=0}^d \abs{a_i} \leq 2(d+1)\supnorm{p} \leq \supnorm{p}$, so (b) is satisfied.
In fact, both of these criterion are satisfied for $\mu = \bigOmega{1/d}$, provided $\eps$ is sufficiently small.

Showing (c) requires bounding $\supnorm{\sum_{\ell=k}^d a_{\ell} U_{\ell - k}(x)}$ for all $k \in [d]$.
These expressions are also the iterates of the Clenshaw algorithm (\cref{clenshaw-correctness}), so we are in fact trying to show that in the process of our algorithm we never produce a value that's much larger than the final value.
From testing computationally, we believe that the following holds true.

\begin{conjecture} \label{conj:iterates-bound}
Let $p(x)$ be a degree-$d$ polynomial with Chebyshev expansion $p(x) = \sum_{\ell=0}^d a_\ell T_\ell(x)$.
Then, for all $k$ from $0$ to $d$,
\begin{align*}
    \supnorm[\Big]{\sum_{\ell=k}^d a_{\ell} U_{\ell - k}(x)} \leq (d-k+1)\supnorm{p},
\end{align*}
maximized for the Chebyshev polynomial $p(x) = T_d(x)$.
\end{conjecture}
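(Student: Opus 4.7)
The plan is to start with the trigonometric substitution $x = \cos\theta$ and the identity $U_n(\cos\theta) = \sin((n+1)\theta)/\sin\theta$. Setting $\tilde{p}(z) := \sum_{\ell=0}^d a_\ell z^\ell$ so that $p(\cos\theta) = \mathrm{Re}(\tilde{p}(e^{i\theta}))$ and writing $\sigma(\theta) := \mathrm{Im}(\tilde{p}(e^{i\theta})) = \sum_{\ell\geq 1}a_\ell\sin(\ell\theta)$ for the harmonic conjugate, a direct manipulation of $\sin\theta\cdot q_k(\cos\theta) = \mathrm{Im}(e^{i(1-k)\theta}\tilde{p}(e^{i\theta})) - \mathrm{Im}(e^{i(1-k)\theta}\sum_{\ell<k}a_\ell e^{i\ell\theta})$ yields the closed form
\[
q_k(\cos\theta) = \frac{\cos((k-1)\theta)\,\sigma(\theta)\;-\;\sin((k-1)\theta)\,p(\cos\theta)\;+\;\sum_{\ell=0}^{k-1}a_\ell\sin((k-1-\ell)\theta)}{\sin\theta}.
\]
The middle term is controlled by $|\sin((k-1)\theta)/\sin\theta|\cdot\supnorm{p}\leq (k-1)\supnorm{p}$, and the boundary-correction sum is a degree-$(k-1)$ trigonometric polynomial whose coefficients are bounded via \cref{lem:coefficient-bound}, contributing at most $O(k\supnorm{p})$. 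All the real work is in the first term, the conjugate-function contribution $\cos((k-1)\theta)\sigma(\theta)/\sin\theta$.

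The naive route to bounding this first term goes through $\supnorm{\sigma}\lesssim\log(d)\supnorm{p}$ (since the truncated Hilbert transform has $L^\infty$ operator norm equal to a Lebesgue constant, exactly analogous to \cref{projection-lebesgue}) and then uses Bernstein's inequality to absorb the $1/\sin\theta$, producing a bound of $O(d\log d)\supnorm{p}$ and matching \cref{thm:scalar-clenshaw-iterate-bound} but not the conjectured $(d-k+1)\supnorm{p}$. My plan for removing the logarithmic factor is to recast the question as an extremal problem: for each $x_0\in[-1,1]$, the map $L_{x_0}(p):=q_k(x_0)$ is a linear functional on degree-$d$ polynomials, and the conjecture is equivalent to $\max\{L_{x_0}(p):\deg p\leq d,\ \supnorm{p}\leq 1\}\leq d-k+1$. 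By the Chebyshev equi-oscillation theorem the extremizer must touch $\pm1$ at $\geq d+1$ points, and the candidate extremizer is (a sign-twisted) $T_d$, for which $L_{x_0}(T_d)=U_{d-k}(x_0)$. To complete the argument I would apply the Kolmogorov criterion: produce a signed measure on the Chebyshev nodes $\{\cos(j\pi/d)\}_{j=0}^d$ whose signs alternate according to $\mathrm{sgn}(T_d)$ and whose quadrature reproduces $L_{x_0}$ on every polynomial of degree $\leq d$. Existence reduces to a finite linear system whose matrix is essentially a discrete cosine transform, so this is tractable in principle.

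The main obstacle is the positivity and magnitude bound for this signed-measure representation; reproducing $L_{x_0}$ by a signed quadrature is easy linear algebra, but showing the total variation is no larger than $d-k+1$ is the heart of the matter. That polynomial structure is essential can be seen from the fact that the corresponding dual kernel $K_{x_0}(t)=\tfrac{2}{\pi\sqrt{1-t^2}}\sum_{\ell=k}^{d}U_{\ell-k}(x_0)\,T_\ell(t)$ already has $\|K_{x_0}\|_{L^1}=\Theta(d\log d)$ at $x_0=1$, so any argument that only uses $\supnorm{p}\leq 1$ (ignoring $\deg p\leq d$) must fail --- the log factor is genuinely present at the level of bounded functions, and must be cancelled by the polynomial constraint. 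Given the delicacy of related arithmetic-progression bounds such as \cref{lem:odd-flipped-coefs}, which already pick up a $\log^2 d$ factor through careful combinatorial work, I expect a full proof to require a substantially more refined use of the Chebyshev structure, and as a partial fallback one could aim for a constant-loose $O(d-k+1)\cdot\supnorm{p}$ bound via the equi-oscillation route even if matching the sharp leading constant $1$ proves elusive.
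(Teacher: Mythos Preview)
This statement is a \emph{conjecture} in the paper, not a theorem; the authors explicitly leave it open and instead establish the log-lossy bound $(d-k+1)\bigl(16 + \tfrac{16}{\pi^2}\log d\bigr)\supnorm{p}$ in \cref{thm:scalar-clenshaw-iterate-bound}. Your ``naive route'' via the conjugate function and Lebesgue constants recovers essentially that result, though by a different mechanism: the paper instead expands each $U_{i-k}$ in the $T$-basis, identifies the Chebyshev coefficients of $q_k$ as the step-two arithmetic progressions $\sum_{j\geq 0}a_{i+k+2j}$, and bounds each of those by $O(\log d)\supnorm{p}$ via \cref{parity-two-tailsums}.

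Your proposed route to the sharp bound has two genuine gaps. First, term-by-term bounding of your decomposition already fails for large $k$: the middle piece satisfies $\bigl|\sin((k-1)\theta)\,p(\cos\theta)/\sin\theta\bigr|\leq (k-1)\supnorm{p}$, which exceeds the target $(d-k+1)\supnorm{p}$ as soon as $k>(d+2)/2$. (Take $p=T_d$ and $k=d$: the target is $\supnorm{p}$, yet your middle term alone reaches $(d-1)\supnorm{p}$; the boundary sum is no better.) So the three pieces must cancel against one another, and you cannot isolate the conjugate-function term as ``all the real work.'' Second, the extremal strategy you sketch---show that $T_d$ is the maximizer via equi-oscillation and a Kolmogorov-type signed-quadrature certificate---is precisely the Markov-brothers-style approach the authors report having attempted. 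Their Remark following \cref{thm:scalar-clenshaw-iterate-bound} states that computations indicate the relevant variational characterization fails for this functional, and that the standard proofs lean on properties specific to the derivative with no analogue here. Your plan offers no mechanism to get around this obstruction.
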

\cref{conj:iterates-bound} would imply that it suffices to take $\mu = \Theta(1/d^2)$.
We prove it up to a log factor.
\begin{theorem} \label{thm:scalar-clenshaw-iterate-bound}
    For a degree-$d$ polynomial $p(x) = \sum_{\ell=0}^d a_\ell T_\ell(x)$, consider the degree-$(d-k)$ polynomial $q_k(x) = \sum_{\ell=k}^d a_{\ell} U_{\ell - k}(x)$.
    Then
    \begin{align*}
        \supnorm{q_k} \leq (d-k+1)\Big(16 + \frac{16}{\pi^2}\log(d)\Big)\supnorm{p}.
    \end{align*}
\end{theorem}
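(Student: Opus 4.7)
The plan is to rewrite $q_k$ in the Chebyshev-of-the-first-kind basis, bound each of its coefficients via the tailsum lemma already established (\cref{parity-two-tailsums}), and then finish by the triangle inequality against $\supnorm{T_m} = 1$.

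First I would use the identity \cref{eq:u-to-t}, $U_i(x) = \sum_{j \geq 0} T_{i-2j}(x)(1 + \iver{i-2j \neq 0})$, to expand
\begin{align*}
    q_k(x) = \sum_{\ell=k}^d a_\ell U_{\ell-k}(x)
    = \sum_{\ell=k}^d a_\ell \sum_{\substack{j \geq 0\\ \ell - k - 2j \geq 0}} T_{\ell-k-2j}(x)\bigl(1 + \iver{\ell-k-2j \neq 0}\bigr).
\end{align*}
Reindexing by $m = \ell - k - 2j$ (so $\ell = m + k + 2j$) and collecting terms of equal $m$, we obtain $q_k(x) = \sum_{m=0}^{d-k} c_m T_m(x)$ with
\begin{align*}
    c_m = \bigl(1 + \iver{m \neq 0}\bigr) \sum_{\substack{j \geq 0\\ m+k+2j \leq d}} a_{m+k+2j}.
\end{align*}

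Next, observe that the inner sum defining $c_m$ is exactly an arithmetic progression of Chebyshev coefficients of $p$, starting at index $m+k$ with step two. This is precisely the form handled by \cref{parity-two-tailsums}, which yields
\begin{align*}
    \Bigl| \sum_{\substack{j \geq 0\\ m+k+2j \leq d}} a_{m+k+2j} \Bigr|
    \leq \Bigl(4 + \tfrac{4}{\pi^2}\log(d)\Bigr)\supnorm{p},
\end{align*}
(the bound trivially absorbs the small-$k$ cases covered separately in that fact). Consequently $|c_m| \leq (8 + \tfrac{8}{\pi^2}\log(d))\supnorm{p}$ for every $m \in \{0,1,\dots,d-k\}$.

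Finally, applying the triangle inequality together with $\supnorm{T_m} = 1$,
\begin{align*}
    \supnorm{q_k} \leq \sum_{m=0}^{d-k} |c_m|
    \leq (d-k+1)\Bigl(8 + \tfrac{8}{\pi^2}\log(d)\Bigr)\supnorm{p}
    \leq (d-k+1)\Bigl(16 + \tfrac{16}{\pi^2}\log(d)\Bigr)\supnorm{p},
\end{align*}
as claimed. The entire argument is actually quite short once \cref{parity-two-tailsums} is in hand; the real substance of the work lies in that earlier lemma (where the Lebesgue-constant logarithm enters), not in the present reduction. The main thing to be careful about is the handling of the $(1+\iver{m \neq 0})$ factor and the boundary behavior for small $m+k$, which is why the bound carries a factor of two over the naive reading of \cref{parity-two-tailsums}.
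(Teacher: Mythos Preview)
Your proof is correct and follows essentially the same approach as the paper: expand $U_{\ell-k}$ in the $T$-basis via \cref{eq:u-to-t}, recognize each resulting coefficient as a step-two arithmetic progression of the $a_\ell$'s, bound it with \cref{parity-two-tailsums}, and finish by the triangle inequality against $\supnorm{T_m}=1$. Your intermediate constant is in fact slightly sharper than the paper's (which picks up an extra factor of two in the bookkeeping), but you correctly relax it at the end to match the stated bound.
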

\begin{proof}
We proceed by carefully bounding the Chebyshev coefficients of $q_k$, which turn out to be arithmetic progressions of the $a_k$'s which we bounded in \cref{sec:chebsums}.
\begin{align*}
    q_k(x) &= \sum_i a_iU_{i-k}(x) \\
    &= \sum_i \sum_{j \geq 0}a_i T_{i - k - 2j}(x)(1 + \iver{i - k - 2j \neq 0}) \\
    &= \sum_i \sum_{j \geq 0}a_{i + k + 2j} T_{i}(x)(1 + \iver{i \neq 0}) \\
    &= \sum_i T_{i}(x)(1 + \iver{i \neq 0}) \sum_{j \geq 0}a_{i + k + 2j} \\
    \abs{q_k(x)} &\leq \sum_i \iver{i \geq 0}(1 + \iver{i \neq 0}) \abs*[\Big]{\sum_{j \geq 0}a_{i + k + 2j}} \\
    &\leq 2\sum_{i=0}^{d-k} \abs*[\Big]{\sum_{j \geq 0}a_{i + k + 2j}} \\
    &\leq 4\sum_{i=0}^{d-k} \Big(4 + \frac{4}{\pi^2}\log(i + k - 1)\Big)\supnorm{p} \tag*{by \cref{parity-two-tailsums}}\\
    &\leq (d-k+1)\Big(16 + \frac{16}{\pi^2}\log(d)\Big)\supnorm{p}. \qedhere
\end{align*}
\end{proof}

\begin{remark}
    We spent some time trying to prove \cref{conj:iterates-bound}, since its form is tantalizingly close to that of the Markov brothers' inequality~\cite{schaeffer41} $\supnorm{\frac{d}{dx}p(x)} = \supnorm{\sum_{\ell=0}^d a_\ell \ell U_{\ell - 1}(x)} \leq d^2\supnorm{p(x)}$, except with the linear differential operator $\frac{d}{dx}: T_\ell \mapsto \ell U_{\ell - 1}$ replaced with the linear operator $T_\ell \mapsto U_{\ell - k}$.
    However, calculations suggest that the variational characterization of $\max_{\supnorm{p} = 1}\abs*{\frac{d}{dx}p(x)}$ underlying proofs of the Markov brothers' inequality~\cite{shadrin04} does not hold here, and from our shallow understanding of these proofs, it seems that they strongly use properties of the derivative.
\end{remark}

%!TEX root = main.tex

\section{Computing Matrix Polynomials}

\ewin{Note to self: you can't improve $\supnorm{p}$ to $\norm{p}_{\spec(A)}$ without incurring a condition number dependence: $TAS$ approximates $A$'s singular values, but only $\sigma_i^2$ to $\eps$ error.}

Our goal is to prove the following theorem:

\begin{theorem} \label{main-svt-alg}
    Suppose we are given sampling and query access to $A \in \mathbb{C}^{m\times n}$ and $b \in \mathbb{C}^n$ with $\norm{A} \leq 1$; an even or odd degree-$d$ polynomial $p$ with $p(0) = 0$, given as its Chebyshev coefficients;
    and a sufficiently small accuracy parameter $\eps > 0$.
    Then we can output a description of a vector $y$ (in $\C^m$ if $p$ is odd, in $\C^n$ if $p$ is even) such that $\|y - p(A)b\| \leq \eps\supnorm{p}\norm{b}$ with probability $\geq 0.9$ in time
    \begin{align*}
        \bigO*{\min\braces{\nnz(A), \frac{\|A\|_F^4}{\eps^4}d^{12}\log^8(d)\log^2\frac{\fnorm{A}}{\norm{A}}} + \frac{\|A\|_F^4}{\eps^2}d^{11}\log^4(d)\log\frac{\fnorm{A}}{\norm{A}}}.
    \end{align*}
    We can access the output description in the following way:
    \begin{enumerate}[label=(\roman*)]
        \item Compute entries of $y$ in $\bigO[\Big]{ \frac{\fnorm{A}^2}{\eps^2}d^6\log^4(d)\log\frac{\fnorm{A}}{\norm{A}} }$ time;
        \item Sample $i \in [n]$ with probability $\frac{\abs{y_i}^2}{\norm{y}^2}$ in $\bigO[\Big]{ \frac{\supnorm{p}^2\fnorm{A}^4\norm{b}^2}{\eps^2\norm{y}^2}d^8\log^8(d) \log\frac{\fnorm{A}}{\norm{A}}}$ time with probability $\geq 0.9$;
        \item Estimate $\norm{y}^2$ to $\nu$ relative error in $\bigO[\Big]{ \frac{\supnorm{p}^2\fnorm{A}^4\norm{b}^2}{\nu^2\eps^2\norm{y}^2}d^8\log^8(d)\log\frac{\fnorm{A}}{\norm{A}}}$ time with probability $\geq 0.9$.
    \end{enumerate}
\end{theorem}

We prove the even and odd cases separately.
We can conclude the above theorem as a consequence of \cref{odd-svt-alg,cor:odd-iterate-mu-bound,even-svt-alg,cor:even-iterate-mu-bound}.

\begin{remark} \label{rmk:overlaps-description}
We can also compute estimate $\angles{u|y}$ to $\eps\norm{u}\norm{b}$ error without worsening the $1/\eps^2$ dependence.
This does not follow from the above; rather, we can observe that, from the description of our output, $y = (AS)v + \eta b$, it suffices to estimate $u^\dagger (AS)v$ and $u^\dagger b$.
Because of properties of the sketches and the later analysis, $\fnorm{AS} \lesssim \fnorm{A}$ and $\norm{v} \lesssim d\log^2(d)\norm{b}$, so by \cite[Lemma~4.12 and Remark~4.13]{cglltw19} we can estimate these to the desired error with $\bigO{d^2\log^4(d)\fnorm{A}^2\frac{1}{\eps^2}\log\frac{1}{\delta}}$ queries to $u, v, A$ and samples to $AS$.
All of these queries can be done in $\bigO{1}$ time.
\end{remark}

\subsection{Computing Odd Matrix Polynomials}
In this section, we prove the following theorem.
The statement involves a parameter $\mu$ which depends on the polynomial being evaluated; this parameter is between $1$ and $(d\log d)^{-2}$, depending on how well-conditioned the polynomial is.

\begin{theorem} \label{odd-svt-alg}
    Suppose we are given sampling and query access to $A \in \mathbb{C}^{m\times n}$ and $b \in \mathbb{C}^n$ with $\norm{A} \leq 1$; a $(2d+1)$-degree odd polynomial $p$, written in its Chebyshev coefficients as
    \begin{align*}
        p(x) = \sum_{i=0}^{d} a_{2i+1} T_{2i+1}(x);
    \end{align*}
    an accuracy parameter $\eps > 0$; a failure probability parameter $\delta > 0$; and a stability parameter $\mu > 0$.
    Then we can output a vector $x \in \C^n$ such that $\|A x - p(A)b\| \leq \eps\supnorm{p}\norm{b}$ with probability $\geq 1-\delta$ in time
    \begin{align*}
        \bigO*{\min\braces{\nnz(A), \frac{d^4\|A\|_F^4}{(\mu\eps)^4}\log^2\parens[\big]{\frac{\fnorm{A}}{\delta\norm{A}}}} + \frac{d^7\|A\|_F^4}{(\mu\eps)^2\delta}\log\parens[\big]{\frac{\fnorm{A}}{\delta\norm{A}}}},
    \end{align*}
    assuming $\mu\eps < \min(\frac{1}{4}d\norm{A}, \frac{1}{100d})$ and $\mu$ satisfies the following bounds:
    \begin{enumerate}[label=(\alph*),ref=\thetheorem (\alph*)]
        \item $\mu\sum_{i=0}^d \abs{a_{2i+1}} \leq \supnorm{p}$; \label{odd-svtcond-ai}
        \item $\mu\supnorm{\sum_{i=k}^d a_{2i+1}U_{i-k}(T_2(x))} \leq \frac{1}{d}\supnorm{p}$ for all $0 \leq k \leq d$; \label{odd-svtcond-trunc}
    \end{enumerate}
    The output description has the additional properties
    \begin{gather*}
        \sum_j \norm{A_{*,j}}^2\abs{x_j}^2 \lesssim \frac{\eps^2\supnorm{p}^2\norm{b}^2}{d^4\log\frac{\fnorm{A}}{\delta\norm{A}}}
        \qquad \znorm{x} \lesssim \frac{d^2\fnorm{A}^2}{(\mu\eps)^2}\log\frac{\fnorm{A}}{\delta\norm{A}},
    \end{gather*}
    so that by \cref{Ax-sampling}, for the output vector $y \coloneqq Ax$, we can:
    \begin{enumerate}[label=(\roman*)]
        \item Compute entries of $y$ in $\bigO{\znorm{x}} = \bigO[\Big]{ \frac{d^2\fnorm{A}^2}{(\mu\eps)^2}\log\tfrac{\fnorm{A}}{\delta\norm{A}} }$ time;
        \item Sample $i \in [n]$ with probability $\frac{\abs{y_i}^2}{\norm{y}^2}$ in $\bigO[\Big]{ \frac{\supnorm{p}^2\fnorm{A}^4\norm{b}^2}{\mu^4\eps^2\norm{y}^2} \log\frac{\fnorm{A}}{\delta\norm{A}}\log\frac{1}{\delta}}$ time with probability $\geq 1-\delta$;
        \item Estimate $\norm{y}^2$ to $\nu$ relative error in $\bigO[\Big]{ \frac{\supnorm{p}^2\fnorm{A}^4\norm{b}^2}{\nu^2\mu^4\eps^2\norm{y}^2}\log\frac{\fnorm{A}}{\delta\norm{A}} \log\frac{1}{\delta}}$ time with probability $\geq 1-\delta$.
    \end{enumerate}
\end{theorem}

\begin{mdframed}
  \begin{algorithm}[Odd singular value transformation]
    \label{algo:odd-dquantizing}\mbox{}
    \begin{description}
    \item[Input (pre-processing):] A matrix $A \in \mathbb{C}^{m \times n}$, vector $b \in \mathbb{C}^{n}$, and parameters $\eps,\,\delta,\,\mu>0$.
    \item[Pre-processing sketches:]
    Let $s, t = \Theta\Paren{\frac{d^2\fnorm{A}^2}{(\mu\eps)^2}\log(\frac{\fnorm{A}}{\delta\norm{A}})}$.
    This phase will succeed with probability $\geq 1-\delta$.
    \begin{enumerate}[label=P\arabic*., ref=A\thealgorithm.P\arabic*]
        \item If $\sq(A^\dagger)$ and $\sq(b)$ are not given, compute data structures to simulate them in $\bigO{1}$ time; \label{alg:sq-ab}
        \item Sample $S \in \C^{n \times s}$ to be $\aamp_s\parens[\Big]{ A, b, \{\tfrac12(\frac{\|A_{*,j}\|^2}{\fnorm{A}^2} + \frac{\abs{b_j}^2}{\norm{b}^2})\}_{j \in [n]}}$ (\cref{def:aamp}); \label{alg:get-s}
        \item Sample $T^\dagger \in \C^{m \times t}$ to be $\aamp_t\parens[\Big]{S^\dagger A^\dagger, AS, \{\frac{\norm{[AS]_{i,*}}}{\fnorm{AS}^2}\}_{i \in [m]}}$ (\cref{def:aamp}); \label{alg:get-t}
        \item Compute a data structure that can respond to $\sq(TAS)$ queries in $\bigO{1}$ time; \label{alg:sq-tas}
    \end{enumerate}
    \item[Input:] A degree $2d+1$ polynomial $p(x) = \sum_{i = 0}^{d} a_{2i+1} T_{2i+1}(x)$ given as its coefficients $a_{2i+1}$.
    \item[Clenshaw iteration:]
    Let $r = \Theta(d^4\fnorm{A}^2(s+t)\frac{1}{\delta}) = \Theta(d^6\frac{\fnorm{A}^4}{(\mu\eps)^2\delta}\log(\frac{\fnorm{A}}{\delta\norm{A}}))$.
    This phase will succeed with probability $\geq 1-\delta$.
    Starting with $v_{d+1} = v_{d+2} = \vec{0}^s$ and going until $v_0$,
    \begin{enumerate}[label=I\arabic*., ref=A\thealgorithm.I\arabic*]
        \item Let $B^{(k)} = \ssketch_r(TAS)$ and $B_\dagger^{(k)} = \ssketch_r((TAS)^\dagger)$ (\cref{def:ssketch}); \label{alg:bests}
        \item Compute $v_k = 2(2B_\dagger^{(k)} B^{(k)} - I)v_{k+1} - v_{k+2} + a_{2k+1} S^\dagger b$. \label{alg:itr}
    \end{enumerate}
    \item[Output:] 
    Output $x = \tfrac12 S(v_0 - v_1)$ satisfying $\Norm{ A x - p(A) b } \leq \eps\supnorm{p}\norm{b}$.
    \end{description}
  \end{algorithm}
\end{mdframed}

The criterion for what $\mu$ need to be are somewhat non-trivial; the important requirement is \cref{odd-svtcond-trunc}, which states that $1/\mu$ is a bound on the norm of various polynomials.
These polynomials turn out to be iterates of \cref{odd-clenshaw} when computing $p(x)/x$, so roughly speaking, the algorithm we present depends on the numerical stability of evaluating $p(x) / x$.
This is necessary because we primarily work in the ``dual'' space, maintaining our Clenshaw iterate $u_k$ as $A v_{k}$ where $v_k$ is a sparse vector.
This bears a resemblance to the dependence in \cite[Theorem 5.1]{cglltw19} on the Lipschitz smoothness of $f(x) / x$.
For any bounded polynomial, we can always take $\mu$ to be $\bigOmega{(d\log(d))^{-2}}$.

\begin{corollary}[Corollary of \cref{odd-iterate-mu-bound}] \label{cor:odd-iterate-mu-bound}
    In \cref{odd-svt-alg}, we can always take $1/\mu \lesssim d^2\log^2(d)$ for $d > 1$.
\end{corollary}

This bound is achieved up to log factors by $p(x) = T_{2k+1}(x)$.
We are now ready to dive into the proof of~\cref{odd-svt-alg}.
Without loss of generality, we assume $\supnorm{p} = 1$.

\paragraph{Pre-processing sketches, running time.}
Given a matrix $A \in C^{m\times n}$ and a vector $b \in \C^n$, the pre-processing phase of \cref{algo:odd-dquantizing} can be performed in $\bigO{\nnz(A) + \nnz(b)}$ time.
First, we build a data structure to respond to $\sq(A^\dagger)$ and $\sq(b)$ queries in $\bigO{1}$ time using the alias data structure described in \cref{rmk:when-sq-access} (\ref{alg:sq-ab}).
Then, we use these accesses to construct an $\aamp$ sketch $S$ for $Ab$, where we produce the samples for the sketch by sampling from $b$ and sampling from the row norms of $A^\dagger$, each with probability $\half$ (\ref{alg:get-s}).
Since we need $s$ samples, this takes $\bigO{s}$ queries to the data structure.
The sketch $S$ is defined such that $AS$ is a subset of the columns of $A$, with each column rescaled according to the probability it was sampled.
So, with another pass through $A$, we can construct a data structure for $\sq(AS)$ in $\bigO{1}$ time using \cref{rmk:when-sq-access} and use this to construct an $\aamp$ sketch $T^\dagger$ for $(AS)^\dagger AS$ (\ref{alg:get-t}).
The samples for the sketch are drawn from the row norms of $AS$.
The final matrix $TAS$ is a rescaled submatrix of $A$; another $\bigO{\nnz(A)}$ pass through $A$ suffices to construct a data structure to respond to $\sq(TAS)$ queries in $\bigO{1}$ time (\ref{alg:sq-tas}).
The running time is $\bigO{\nnz(A) + \nnz(b) + s + t}$ or, alternatively, three passes through $A$ and one pass with $b$, using $\bigO{st}$ space.
We can assume that $s, t \leq \nnz(A)$, though: if $s \geq n$, then we can take $S = I$, and if $t \geq m$, then we can take $T = I$, and this will satisfy the same guarantees; further, without loss, $\nnz(A) \geq \max(m, n)$, since otherwise there is an empty row or column that we can ignore.

If we are given $A, b$ such that $\sq(A^\dagger)$ and $\sq(b)$ queries can be performed in $\bigO{Q}$ time, then the pre-processing phase of \cref{algo:odd-dquantizing} can be performed in $\bigO{Qst}$ time.
The main difference from the description above is that we use that, given $\sq(A^\dagger)$, we can simulate queries to $\sq(AS)$ with only $\bigO{s}$ overhead.
To produce a sample $i \in [m]$ with probability $\norm{[AS]_{i,*}}^2/\fnorm{AS}^2$, sample a column index $j \in [s]$ with probability $\norm{[AS]_{*,j}}^2/\fnorm{AS}^2$ and then query $\sq(A)$ to get a row index $i \in [m]$ with probability $\abs{[AS]_{i,j}}^2/\norm{[AS]_{*,j}}^2$.

\begin{remark}
    If we are not given $b$ until after the pre-processing phase, or if we are only given $b$ as a list of entries without $\sq(b)$, then we can take the $S$ sketch to just be sampling from the row norms of $A^\dagger$.
    This will decrease the success probability of the following phase (specifically, because of the guarantee in \cref{eq:ab-amp}) to $0.99$.
\end{remark}

\paragraph{Pre-processing sketches, correctness.}
We list the guarantees of the sketch that we will use in the error analysis, and point to where they come from in \cref{subsec:assym-amp}.
Recall that, with $S \in \C^{n \times s}$ taken to be an $\aamp$ sketch of $X \in \C^{m\times n}, Y \in \C^{n \times d}$, by \cref{cor:amm-usable}, with probability $\geq 1-\delta$, $\norm*{XSS^\dagger Y^\dagger - XY^\dagger} \leq \eps\norm{X}\norm{Y}$ provided $s = \Omega(\frac{1}{\eps^2}(\frac{\fnorm{X}^2}{\norm{X}^2} + \frac{\fnorm{Y}^2}{\norm{Y}^2})\log(\frac{1}{\delta}(\frac{\fnorm{X}^2}{\norm{X}^2} + \frac{\fnorm{Y}^2}{\norm{Y}^2})))$ and $\eps \leq 1$.
We will use this with $s, t = \Theta\Paren*{\frac{d^2\fnorm{A}^2}{(\mu\eps)^2}\log(\frac{\fnorm{A}}{\delta\norm{A}})}$, where $\frac{\mu\eps}{d} \leq \frac14\norm{A}$.
The guarantees of \cref{cor:amm-usable} individually fail with probability $\bigO{\delta}$, so we will rescale to say that they all hold with probability $\geq 1-\delta$.
The following bounds hold for the sketch $S$.
\begin{align}
    \norm*{[AS]_{*,j}}^2 &\leq 2\fnorm{A}^2 / s \text{ for all } j \in [s]
    &\text{by \cref{def:aamp}} \tag{$\|[AS]_{*,j}\|$ bd} \label{eq:as-bound}\\
    \fnorm{AS}^2 &\leq 2\fnorm{A}^2
    &\text{by \cref{eq:as-bound}} \tag{$\|AS\|_F$ bd} \label{eq:fas-bound}\\
    \norm*{S^\dagger b}^2 &\leq 2\norm{b}^2
    &\text{by \cref{def:aamp}} \tag{$\|S^\dagger b\|$ bd} \label{eq:sb-bound}\\
    \norm*{A b - AS S^\dagger b} &\leq \tfrac{\mu\eps}{d}\norm{b}
    &\text{by \cref{cor:amm-usable}} \tag{$Ab$ AMP} \label{eq:ab-amp}\\
    \norm*{A A^\dagger - AS (AS)^\dagger} &\leq \tfrac{\mu\eps}{d}\norm{A}
    &\text{by \cref{cor:amm-usable}} \tag{$AA^\dagger$ AMP} \label{eq:aa-amp}\\
    \norm*{AS}^2 = \norm*{AS(AS)^\dagger} &\leq (1+\tfrac{\mu\eps}{d\norm{A}})\norm{A}^2
    &\text{by \cref{eq:aa-amp}} \tag{$\|AS\|$ bd} \label{eq:as-op-bound}
\end{align}
The following bounds hold for the sketch $T$.
\begin{align*}
    \fnorm{TAS}^2 &= \fnorm{AS}^2
    &\text{by \cref{def:aamp}} \\
    &\leq 2\fnorm{A}^2
    &\text{by \cref{eq:fas-bound}} \tag{$\|TAS\|_F$ bd} \label{eq:ftas-bound}\\
    \norm*{(AS)^\dagger AS - (TAS)^\dagger TAS} &\leq \tfrac{\mu\eps}{d}\norm{AS}
    &\text{by \cref{cor:amm-usable}} \\
    &\leq 2\tfrac{\mu\eps}{d}\norm{A}
    &\text{by \cref{eq:as-op-bound}} \tag{$(AS)^\dagger AS$ AMP} \label{eq:asas-amp} \\
    \norm*{TAS}^2 = \norm*{(TAS)^\dagger TAS} &\leq (1+\tfrac{\mu\eps}{d\norm{AS}})\norm{AS}^2
    &\text{by \cref{eq:asas-amp}} \\
    &\leq (1+2\tfrac{\mu\eps}{d\norm{A}})\norm{A}^2
    &\text{by \cref{eq:as-op-bound}}
    \tag{$\|TAS\|$ bd} \label{eq:tas-bound}
\end{align*}

\paragraph{One Clenshaw iteration.}
We are trying to perform the odd Clenshaw recurrence defined in \cref{odd-clenshaw}.
The matrix analogue of this is
\begin{align} 
    u_k &= 2(2AA^\dagger- \Id)u_{k+1} - u_{k+2} + 2a_{2k+1} Ab, \tag{Odd Matrix Clenshaw}\label{odd-matrix-clenshaw}\\
    u &= \tfrac12(u_0 - u_1). \nonumber
\end{align}
We now show how to compute the next iterate $u_k$ given $b$ and the previous two iterates as $v_{k+1}, v_{k+2} \in \C^s$ where $u_{k+1} = AS v_{k+1}$ and $u_{k+2} = AS v_{k+2}$, for all $k \geq 0$.
The analysis begins by showing that $u_k$ is $\eps\norm{v_{k+1}}$-close to the output of an exact, zero-error iteration.
Next, we show that $\norm{v_k}$ is $\bigO{d}$-close to its zero-error iteration value.
Finally, we show that these errors don't accumulate too much towards the final outcome.

Starting from \cref{odd-matrix-clenshaw}, we take the following series of approximations by applying intermediate sketches:
\begin{align}
4A A^\dagger u_{k+1} -\,& 2u_{k+1} - u_{k+2} + a_{2k+1} A b \nonumber\\
&\approx_1 4 AS (AS)^\dagger u_{k+1} - 2u_{k+1} - u_{k+2} + a_{2k+1} A b \nonumber\\
    &= AS \Paren{ 4(AS)^\dagger (AS) v_{k+1} - 2v_{k+1} - v_{k+2}} + a_{2k+1} Ab \nonumber\\
    &\approx_2 AS \Paren{ 4(TAS)^\dagger (TAS) v_{k+1} -2v_{k+1} - v_{k+2} } + a_{2k+1}A b \nonumber\\
    &\approx_3 AS \Paren{ 4(TAS)^\dagger (TAS) v_{k+1} -2v_{k+1} - v_{k+2} + a_{2k+1}S^\dagger b} \label{eq:itr-without-best}\\
    & \approx_4 AS (4(TAS)^\dagger B^{(k)} v_{k+1} -2v_{k+1} - v_{k+2} + a_{2k+1}S^\dagger b)\nonumber\\
    &\approx_5 AS (4B_\dagger^{(k)}B^{(k)} v_{k+1} -2v_{k+1} - v_{k+2} + a_{2k+1}S^\dagger b). \label{eq:itr-derivation}
\end{align}
The final expression, \cref{eq:itr-derivation}, is what the algorithm computes, taking
\begin{equation}
    v_k \coloneqq (4B_\dagger^{(k)}B^{(k)} -2I)v_{k+1} - v_{k+2} + a_{2k+1}S^\dagger b. \label{eq:v-derivation}
\end{equation}
Here, $B^{(k)}$ and $B_\dagger^{(k)}$ are taken to be $\ssketch_r(TAS)$ and $\ssketch_r((TAS)^\dagger)$ for $r = \Theta(d^4\fnorm{A}^2(s+t)\frac{1}{\delta}) = \Theta(d^6\frac{\fnorm{A}^4}{(\mu\eps)^2\delta}\log(\frac{\fnorm{A}}{\delta\norm{A}}))$.
The runtime of each iteration is $\bigO{r}$, since the cost of producing the sketches $B^{(k)}$ and $B_\dagger^{(k)}$ using $\sq(TAS)$ is $\bigO{r}$ (\ref{alg:bests}), and actually computing the iteration costs $\bigO{r + s + t} = \bigO{r}$ (\ref{alg:itr}).

\begin{remark} \label{rmk:no-best}
We could have stopped sketching at \cref{eq:itr-without-best}, without using \ssketch, and instead took $v_k \coloneqq 4(TAS)^\dagger(TAS) v_{k+1} -2v_{k+1} - v_{k+2} + a_{2k+1}S^\dagger b$.
The time to compute each iteration increases to $\bigO{st}$, and the final running time is
\begin{equation*}
    \bigO[\Big]{\frac{d^5\|A\|_F^4}{\mu^4\eps^4}\log^2\frac{\fnorm{A}}{\delta\norm{A}}}
\end{equation*}
to achieve the guarantees of \cref{odd-svt-alg} with probability $\geq 1-\delta$.
This running time is worse by a factor of $d^2/\eps^2$, but scales logarithmically in failure probability.
\end{remark}

As for approximation error, let $\eps_1, \eps_2, \eps_3, \eps_4$ and $\eps_5$ be the errors introduced in the approximation steps for \cref{eq:itr-derivation}.
Using the previously established bounds on $S$ and $T$,
\begin{align*}
    \eps_1 &\leq 4\|AA^\dagger - AS(AS)^\dagger \| \|u_{k+1}\| \leq 4\tfrac{\mu\eps}{d}\norm{A}\|u_{k+1}\| \tag*{by \cref{eq:aa-amp}}\\
    \eps_2 &\leq 4\|AS\|\|((AS)(AS)^\dagger - (TAS)(TAS)^\dagger)v_{k+1}\| \leq 16\tfrac{\mu\eps}{d}\norm{A}^2\|v_{k+1}\| \tag*{by \cref{eq:asas-amp}} \\
    \eps_3 &\leq \abs{a_{2k+1}}\|A b - AS S^\dagger b\| \leq \abs{a_{2k+1}}\tfrac{\mu\eps}{d}\norm{b} \tag*{by \cref{eq:ab-amp}}
\end{align*}
The bounds on $\eps_4$ and $\eps_5$ follow from the bounds in \cref{subsec:best-description} applied to $TAS$.
With probability $\geq 1 - \delta/d$, the following hold:
\begin{align*}
\eps_4 &\leq 4\|AS (TAS)^\dagger(TAS - B^{(k)})v_{k+1}\| \\
    & \leq 4\fnorm{AS(TAS)^\dagger}\fnorm{TAS}\norm{v_{k+1}}\sqrt{d/(r\delta)}
    \tag*{by \cref{cor:ssketch-simple}}\\
    & \leq \fnorm{AS(TAS)^\dagger}\fnorm{TAS}\norm{v_{k+1}} \frac{\mu\eps}{12\fnorm{A}^2d^{5/2}} \\
    & \leq d^{-5/2}\mu\eps\norm{A}\norm{v_{k+1}} \tag*{by \cref{eq:as-op-bound,eq:ftas-bound}} \\
\eps_5 &\leq 4\norm*{AS ((TAS)^\dagger - B_\dagger^{(k)})B^{(k)} v_{k+1}} \\
    & \leq 4\fnorm{AS}\fnorm{TAS}\norm*{B^{(k)} v_{k+1}}\sqrt{d/(r \delta)}
    \tag*{by \cref{cor:ssketch-simple}}\\
    & \leq 4\sqrt{d/(r \delta)}\fnorm{AS}\fnorm{TAS}\parens[\Big]{\norm*{TAS v_{k+1}} + \sqrt{d/(r \delta)}\fnorm{I_t}\fnorm{TAS}\norm{v_{k+1}}}
    \tag*{by \cref{cor:ssketch-simple}}\\
    & \leq \tfrac13 d^{-5/2}\mu\eps\parens{\norm*{TAS v_{k+1}} + d^{-3/2}\norm{v_{k+1}}}
    \tag*{by \cref{eq:ftas-bound,eq:tas-bound}}\\
    & \leq d^{-5/2}\mu\eps\norm{v_{k+1}}. \tag*{by $\norm{A} \leq 1$}
\end{align*}
In summary, we can view the iterate of \ref{alg:itr} as computing
\begin{align} 
    \widetilde{u}_k &= 2(2AA^\dagger- \Id)\widetilde{u}_{k+1} - \widetilde{u}_{k+2} + 2a_{2k+1} Ab + \eps^{(k)}, \label{eq:approx-odd-matrix-clenshaw}
\end{align}
where $\eps^{(k)} \in \C^m$ is the error of the approximation in the iterate \cref{eq:itr-derivation}.
We have showed that
\begin{align*}
    \norm*{\eps^{(k)}} \leq \eps_1 + \eps_2 + \eps_3 + \eps_4 + \eps_5 \lesssim \tfrac{\mu\eps}{d}\Big(\norm{v_{k+1}} + \abs{a_{2k+1}}\norm{b}\Big).
\end{align*}
Upon applying a union bound, we see that this bound on $\eps^{(k)}$ holds for every $k$ from $0$ to $d-1$ with probability $\geq 1-3\delta$.

\paragraph{Error accumulation across iterations.}
Now, we analyze how the error from one iteration affects to the final output.
Using the formulation of the iterate from \cref{eq:approx-odd-matrix-clenshaw}, we notice that this is the standard Clenshaw iteration \cref{scalar-clenshaw-recursion} with $x$ replaced with $T_2(A^\dagger) = 2AA^\dagger - I$ and $a_k$ replaced with $2a_{2k+1}Ab + \eps^{(k)}$.
Following \cref{rmk:scalar-stabilities} and \cref{parity-clenshaw-correctness}, we conclude that the output of \cref{algo:odd-dquantizing} satisfies
\begin{align*}
    \widetilde{u}_k &= \sum_{i=k}^d U_{i-k}(T_2(A^\dagger))(2a_{2i+1}Ab + \eps^{(k)}); \\
    \widetilde{u}
    &\coloneqq \frac12(\widetilde{u}_0 - \widetilde{u}_1) \\
    &= \sum_{i=0}^d \tfrac12(U_{i}(T_2(A^\dagger)) - U_{i-1}(T_2(A^\dagger)))(2a_{2i+1}Ab + \eps^{(k)}) \\
    &= \sum_{i=0}^d a_{2i+1}T_{2i+1}(A)b + \sum_{i=0}^d \tfrac12(U_{i}(T_2(A^\dagger)) - U_{i-1}(T_2(A^\dagger)))\eps^{(k)}.
\end{align*}
In other words, after completing the iteration, we have a vector $\widetilde{u}$ such that
\begin{align} 
    \norm{\widetilde{u} - p(A)b}
    &\leq \norm[\Big]{\sum_{i=0}^d \tfrac12(U_{i}(T_2(A^\dagger)) - U_{i-1}(T_2(A^\dagger)))\eps^{(k)}} \nonumber\\
    &\leq \sum_{i=0}^d (2i+1)\norm*{\eps^{(k)}} \nonumber\\
    &\lesssim \mu\eps\sum_{k=0}^d(\|v_{k+1}\| + \abs{a_{2k+1}}\norm{b}) \nonumber\\
    &\leq \eps\norm{b} + \mu\eps\sum_{k=1}^d\|v_k\|.
    \label{eq:odd-clenshaw-error-sum}
\end{align}
The last step follows from \cref{odd-svtcond-ai}.
So, it suffices to bound the $v_k$'s.
Recalling from \cref{eq:v-derivation}, the recursions defining them is
\begin{align*}
    v_k &= 4B_\dagger^{(k)}B^{(k)} v_{k+1} - 2v_{k+1} - v_{k+2} + a_{2k+1}S^\dagger b \\
    &= 2(2 (TAS)^\dagger(TAS) - I) v_{k+1} - v_{k+2} + a_{2k+1} S^\dagger b + 4(B_\dagger^{(k)}B^{(k)} - (TAS)^\dagger(TAS)) v_{k+1}.
    \intertext{This is \cref{odd-matrix-clenshaw} on the matrix $(TAS)^\dagger$ with an additional error term.
    Following \cref{rmk:scalar-stabilities}, this solves to}
    v_k &= \sum_{i=k}^d U_{i-k}(T_2(TAS))\parens[\Big]{a_{2i+1} S^\dagger b + 4(B_\dagger^{(i)}B^{(i)} - (TAS)^\dagger(TAS)) v_{i+1}}.
\end{align*}
Since $B^{(k)}$ and $B_\dagger^{(k)}$ are all drawn independently, $\E[B_\dagger^{(k)}B^{(k)} - (TAS)^\dagger(TAS)]$ is the zero matrix, where the expectation is over the randomness of $B^{(k)}$ and $B_\dagger^{(k)}$.
We use the following bound on the variance of the error.
In this computation, we use \cref{ssketch:moment}, which states that, for $B = \ssketch(A)$ with parameter $r$ and $X$ positive semi-definite, $\E[B^\dagger X B] \leq A^\dagger X A + \frac{1}{r}\tr(X)\fnorm{A}^2I$.
\begin{align}
    &\expecf{k}{\norm{(B_\dagger^{(k)}B^{(k)} - (TAS)^\dagger(TAS)) v_{k+1}}^2} \nonumber \\
    &= \expecf{k}{\norm{B_\dagger^{(k)}B^{(k)} v_{k+1}}^2} - \Norm{(TAS)^\dagger(TAS) v_{k+1}}^2 \nonumber \\
    &= \expecf{k}{(B^{(k)}v_{k+1})^\dagger (B_\dagger^{(k)})^\dagger B_\dagger^{(k)} (B^{(k)} v_{k+1})} - \Norm{(TAS)^\dagger(TAS) v_{k+1}}^2 \nonumber \\
    &\leq \expecf{k}{(B^{(k)}v_{k+1})^\dagger((TAS)(TAS)^\dagger + \frac{1}{r}\Tr(\Id_s)\fnorm{TAS}^2 \Id_t)(B^{(k)} v_{k+1})} - \Norm{(TAS)^\dagger(TAS) v_{k+1}}^2 \nonumber \\
    &\leq \E_k\Bigl[v_{k+1}^\dagger(TAS)^\dagger((TAS)(TAS)^\dagger + \frac{s}{r}\fnorm{TAS}^2 \Id_t)(TAS)v_{k+1} \nonumber \\
    &\qquad + v_{k+1}^\dagger \frac{1}{r}\Tr((TAS)(TAS)^\dagger + \frac{s}{r}\fnorm{TAS}^2\Id_t)\fnorm{TAS}^2 v_{k+1}\Bigr] - \Norm{(TAS)^\dagger(TAS) v_{k+1}}^2 \nonumber \\
    &= \frac{s}{r}\fnorm{TAS}^2\norm{TAS v_{k+1}}^2 + \parens[\Big]{\frac{1}{r} + \frac{st}{r^2}}\fnorm{TAS}^4\norm{v_{k+1}}^2 \nonumber \\
    &\leq 4\parens[\Big]{\frac{s\fnorm{A}^2\norm{A}^2}{r} + \frac{\fnorm{A}^4}{r} + \frac{st\fnorm{A}^4}{r^2}}\norm{v_{k+1}}^2 \tag*{by \cref{eq:ftas-bound,eq:tas-bound}} \\
    &\leq \frac{\delta}{1000d^4}\norm{A}^2\norm{v_{k+1}}^2,
    \label{eq:bbv-variance} 
\end{align}
where the last line uses $r = \Theta(d^4\fnorm{A}^2(s+t)\frac{1}{\delta})$ (and is the bottleneck for the choice of $r$).
\ewin{I think this is not the best choice of $r$ with respect to $\norm{A} \leq 1$.}
Let $\E_{[k, d]}$ denote taking the expectation over $B^{(i)}$ and $B_\dagger^{(i)}$ for $i$ between $k$ and $d$ (treating $T, S$ as fixed).
\begin{equation*}
\bar{v}_k \coloneqq \expecf{[k,d]}{v_k } = \sum_{i=k}^d U_{i-k}(T_2(TAS))a_{2i+1} S^\dagger b .
\end{equation*}
We first bound the recurrence in expectation, then we bound the second moment.
\begin{align*}
    \norm{\bar{v}_k} &\leq \norm[\Big]{\sum_{i=k}^d U_{i-k}(T_2(TAS))a_{2i+1}}\norm*{S^\dagger b} \tag*{by sub-multiplicativity of $\norm{\cdot}$}\\
    &= \norm[\Big]{\sum_{i=k}^d U_{i-k}(T_2(x))a_{2i+1}}_{\spec(TAS)}\norm*{S^\dagger b} \\
    &\leq \norm[\Big]{\sum_{i=k}^d U_{i-k}(T_2(x))a_{2i+1}}_{[-1-2\frac{\mu\eps}{d}, 1+2\frac{\mu\eps}{d}]}\norm*{S^\dagger b} \tag*{by \cref{eq:tas-bound}}\\
    &\leq e\supnorm[\Big]{\sum_{i=k}^d U_{i-k}(T_2(x))a_{2i+1}}\norm*{S^\dagger b} \tag*{by \cref{slightly-beyond-sup}, $\mu\eps \leq \frac{1}{100d}$} \\
    &\leq 4\supnorm[\Big]{\sum_{i=k}^d U_{i-k}(T_2(x))a_{2i+1}}\norm{b} \tag*{by \cref{eq:sb-bound}} \\
    &\leq 4\frac{1}{\mu d}\norm{b} \tag*{by \cref{odd-svtcond-trunc}}
\end{align*}
We now compute the second moment of $v_k$.
\begin{align}
    \expecf{[k,d]}{ \Norm{v_k - \bar{v}_k}^2}
    &= \expecf{[k,d]}{\norm[\Big]{\sum_{i=k}^d U_{i-k}(T_2(TAS))4(B_\dagger^{(i)}B^{(i)} - (TAS)^\dagger(TAS)) v_{i+1}}^2} \nonumber
\intertext{Because the $B^{(i)}$'s are independent, the variance of the sum is the sum of the variances, so}
    &= \sum_{i=k}^d \expecf{[k,d]}{\Norm{U_{i-k}(T_2(TAS))4(B_\dagger^{(i)}B^{(i)} - (TAS)^\dagger(TAS)) v_{i+1}}^2 } \nonumber \\
    &\leq 16\sum_{i=k}^d \norm[\Big]{U_{i-k}(T_2(TAS))}^2 \expecf{[k,d]}{\norm{(B_\dagger^{(i)}B^{(i)} - (TAS)^\dagger(TAS)) v_{i+1}}^2 } \nonumber \\
    &\leq 16\sum_{i=k}^d e^2d^2 \expecf{[k,d]}{\norm{(B_\dagger^{(i)}B^{(i)} - (TAS)^\dagger(TAS)) v_{i+1}}^2 } \nonumber \\
    &\leq 16e^2\sum_{i=k}^d \frac{d^2\delta}{1000d^4} \norm{A}^2 \expecf{[i+1,d]}{\norm{v_{i+1}}^2} \nonumber \\
    &\leq \frac{\delta}{2d^2}\norm{A}^2\sum_{i=k}^d \expecf{[i+1,d]}{\norm{v_{i+1}}^2} \nonumber \\
    &= \frac{\delta}{2d^2}\norm{A}^2\sum_{i=k}^d \parens*{\expecf{[i+1,d]}{\norm{v_{i+1} - \bar{v}_{i+1}}^2} + \norm{\bar{v}_{i+1}}^2} \nonumber \\
    &\leq \frac{\delta}{2d^2}\norm{A}^2\sum_{i=k}^d \parens*{\expecf{[i+1,d]}{\norm{v_{i+1} - \bar{v}_{i+1}}^2} + \frac{16}{\mu^2 d^2}\norm{b}^2} \tag*{by \cref{eq:bbv-variance}}
\end{align}
To bound this recurrence, we define the following recurrence $c_k$ to satisfy $\E_{[k,d]}[\norm{v_k - \bar{v}_k}^2] \leq c_k$:
\begin{align*}
    c_k &= \gamma\sum_{i=k}^d (c_{i+1}+ \Gamma) \qquad \gamma = \frac{\delta}{2d^2}\norm{A}^2, \, \Gamma = \frac{16}{\mu^2 d^2}\norm{b}^2.
\intertext{For this recurrence, $c_k \leq d \gamma \Gamma$ for all $k$ between $0$ and $d$ provided that $d \gamma \leq \frac{1}{2}$.}
    c_k &\leq \frac{8\delta}{\mu^2 d^3}\norm{A}^2\norm{b}^2
\end{align*}
We have shown that $\E[\norm{v_k - \bar{v}_k}^2] \leq \frac{8\delta}{d}\parens*{\frac{\norm{A}\norm{b}}{\mu d}}^2$.
By Markov's inequality, with probability $\geq 1 - \delta/100$, we have that for all $k$, $\norm{v_k} \lesssim \frac{\norm{b}}{\mu d}$.
Returning to the final error bound \cref{eq:odd-clenshaw-error-sum},
\begin{align}
    \norm{\widetilde{u} - p(A)b}
    &\lesssim \eps\norm{b} + \mu\eps\sum_{k=1}^d\norm{v_k}
    \lesssim \eps\norm{b}.
\end{align}

\paragraph{Output description properties.}
After the iteration concludes, we can compute $u$ by computing $x = \tfrac{1}{2}S(v_0 - v_1)$ in linear $\bigO{s}$ time.
Then, $u = \tfrac12(u_0 - u_1) = \tfrac12 AS(v_0 - v_1) = Ax$.
Note that though $x \in \C^n$, its sparsity is at most the sparsity of $x$, which is bounded by $s$.

Further, using the prior bounds on $v_0$ and $v_1$, we have that
\begin{align*}
    \sum_{j=1}^n \norm{A_{*,j}}^2\abs{x_i}^2
    &= \sum_{j=1}^s \norm{[SA]_{*,j}}^2\abs{\tfrac12(v_0 - v_1)_i}^2 \\
    &\leq \sum_{j=1}^s \tfrac2s\fnorm{A}^2 \abs{\tfrac12(v_0 - v_1)_i}^2 \\
    &\leq \tfrac2s\fnorm{A}^2\norm{\tfrac12(v_0 - v_1)}^2 \\
    &\leq \tfrac2s\fnorm{A}^2(\norm{v_0}^2 + \norm{v_1}^2) \\
    &\lesssim \fnorm{A}^2\norm{b}^2/(\sqrt{s}\mu d)^2 \\
    &\lesssim \eps^2\norm{b}^2/(d^4\log(\tfrac{\fnorm{A}}{\delta\norm{A}})).
\end{align*}

\subsection{Generalizing to Even Polynomials} \label{subsec:evensvt}

We also obtain an analogous result for even polynomials.
For the most part, changes are superficial; the red text indicates differences from \cref{odd-svt-alg}.
The major difference is that the representation of the output is $Ax + \eta b$ instead of $Ax$, which results from constant terms being allowed when $p$ is even.
We state the theorem for estimating $p(A^\dagger)b$ because it makes the similarities with the odd setting more apparent.

\begin{theorem} \label{even-svt-alg}
    Suppose we are given sampling and query access to $A \in \mathbb{C}^{m\times n}$ and $b \in \diff{\mathbb{C}^m}$ with $\norm{A} \leq 1$; a \diff{$(2d)$-degree even} polynomial, written in its Chebyshev coefficients as
    \begin{align*}
        \diff{p(x) = \sum_{i=0}^{d} a_{2i} T_{2i}(x);}
    \end{align*}
    an accuracy parameter $\eps > 0$; a failure probability parameter $\delta > 0$; and a stability parameter $\mu > 0$.
    Then we can output a vector $x \in \C^n$ \diff{and $\eta \in \C$} such that $\|A x \diff{+ \eta b} - p(\diff{A^\dagger})b\| \leq \eps\supnorm{p}\norm{b}$ with probability $\geq 1-\delta$ in time
    \begin{align*}
        \bigO*{\min\braces{\nnz(A), \frac{d^4\|A\|_F^4}{(\mu\eps)^4}\log^2\parens[\big]{\frac{\fnorm{A}}{\delta\norm{A}}}} + \frac{d^7\|A\|_F^4}{(\mu\eps)^2\delta}\log\parens[\big]{\frac{\fnorm{A}}{\delta\norm{A}}}},
    \end{align*}
    assuming $\mu\eps < \min(\frac{1}{4}d\norm{A}, \frac{1}{100d})$ and $\mu$ satisfies the following bounds.
    \diff{Below, $\tilde{a}_{2k} \coloneqq a_{2k} - a_{2k+2} + \cdots \pm a_{2d}$.}
    \begin{enumerate}[label=(\alph*),ref=\thetheorem (\alph*)]
        \item \diff{$\mu\sum_{i=1}^d \abs{\tilde{a}_{2i}} \leq \supnorm{p}$ and $d\mu^2\sum_{i=1}^d \abs{\tilde{a}_{2i}}^2 \leq \supnorm{p}^2$}; \label{even-svtcond-ai}
        \item \diff{$\mu\supnorm{\sum_{i=k}^d 4\tilde{a}_{2i+2}x\cdot U_{i-k}(T_2(x))} \leq \frac{1}{d}\supnorm{p}$} for all $0 \leq k \leq d$. \label{even-svtcond-trunc}
    \end{enumerate}
    The output description has the additional properties
    \begin{gather*}
        \sum_j \norm{A_{*,j}}^2\abs{x_j}^2 \lesssim \frac{\eps^2\supnorm{p}^2\norm{b}^2}{d^4\log(\frac{\fnorm{A}}{\delta\norm{A}})}
        \qquad \znorm{x} \lesssim \frac{d^2\fnorm{A}^2}{(\mu\eps)^2}\log(\frac{\fnorm{A}}{\delta\norm{A}}),
    \end{gather*}
    so that by \cref{Ax-sampling}, for the output vector $y \coloneqq A x \diff{+ \eta b}$, we can:
    \begin{enumerate}[label=(\roman*)]
        \item Compute entries of $y$ in $\bigO{\znorm{x}} = \bigO[\Big]{ \frac{d^2\fnorm{A}^2}{(\mu\eps)^2}\log(\frac{\fnorm{A}}{\delta\norm{A}}) }$ time;
        \item Sample $i \in [n]$ with probability $\frac{\abs{y_i}^2}{\norm{y}^2}$ in $\bigO[\Big]{\parens[\big]{\frac{\supnorm{p}^2\fnorm{A}^2}{(\mu d)^2} \diff{+ p(0)^2}}\frac{d^2\fnorm{A}^2\norm{b}^2}{\mu^2\eps^2\norm{y}^2}\log(\frac{\fnorm{A}}{\delta\norm{A}})\log\frac{1}{\delta}}$ time with probability $\geq 1-\delta$;
        \item Estimate $\norm{y}^2$ to $\nu$ relative error in $\bigO[\Big]{\parens[\big]{\frac{\supnorm{p}^2\fnorm{A}^2}{(\mu d)^2} \diff{+ p(0)^2}}\frac{d^2\fnorm{A}^2\norm{b}^2}{\nu^2\mu^2\eps^2\norm{y}^2}\log(\frac{\fnorm{A}}{\delta\norm{A}})\log\frac{1}{\delta}}$ time with probability $\geq 1-\delta$.
    \end{enumerate}
\end{theorem}

\begin{corollary}[Corollary of \cref{even-iterate-mu-bound}] \label{cor:even-iterate-mu-bound}
    In \cref{even-svt-alg}, we can always take $1/\mu \lesssim d^2\diff{\log(d)}$ for $d > 1$.
\end{corollary}

Recall the odd and even recurrences defined in \cref{odd-clenshaw,even-clenshaw}.
\begin{align} 
    u_k &= 2(2AA^\dagger- \Id)u_{k+1} - u_{k+2} + 2a_{2k+1} Ab, \tag{Odd Matrix Clenshaw}\\
    p(A)b = u &= \tfrac12(u_0 - u_1). \nonumber
\end{align}
The matrix analogue of the even recurrence is identical except that the final term is $4\tilde{a}_{2k+2}A A^\dagger b$ instead of $2a_{2k+1}Ab$.
\begin{align}
    \tilde{a}_{2k} &\coloneqq a_{2k} - a_{2k+2} + a_{2k+4} - \cdots \pm a_{2d} \nonumber\\
    u_k &= 2(2A A^\dagger -1)u_{k+1} - u_{k+2} \diff{+ 4\tilde{a}_{2k+2} A A^\dagger b}, \tag{Even Matrix Clenshaw}\label{even-matrix-clenshaw}\\
    p(A^\dagger)b = u &= \diff{\tilde{a}_0b +} \tfrac12(u_0 - u_1). \nonumber
\end{align}
So, a roughly identical analysis works upon making the appropriate changes.
As before, we assume $\supnorm{p} = 1$ without loss of generality.

\begin{mdframed}
  \begin{algorithm}[Even singular value transformation]
    \label{algo:even-dquantizing}\mbox{}
    \begin{description}
    \item[Input (pre-processing):] A matrix $A \in \mathbb{C}^{m \times n}$, vector $b \in \diff{\mathbb{C}^{m}}$, and parameters $\eps,\,\delta,\,\mu>0$.
    \item[Pre-processing sketches:]
    Let $s, t = \Theta\Paren{\frac{d^2\fnorm{A}^2}{(\mu\eps)^2}\log(\frac{\fnorm{A}}{\delta\norm{A}})}$.
    This phase will succeed with probability $\geq 1-\delta$.
    \begin{enumerate}[label=P\arabic*., ref=A\thealgorithm.P\arabic*]
        \item If $\sq(A^\dagger)$ and $\sq(b)$ are not given, compute data structures to simulate them in $\bigO{1}$ time;
        \item Sample $S \in \C^{n \times s}$ to be $\aamp_s\parens[\Big]{A, A^\dagger, \diff{\{\frac{\norm*{A_{*,j}}^2}{\fnorm{A}^2}\}_{j \in [n]}}}$ (\cref{def:aamp});
        \item Sample $T^\dagger \in \C^{m \times t}$ to be $\aamp_t\parens[\Big]{S^\dagger A^\dagger, AS, \diff{\{\tfrac12(\frac{\norm*{[AS]_{i,*}}^2}{\fnorm{AS}^2} + \frac{\abs*{b_i}^2}{\norm{b}^2})\}_{i \in [m]}}}$ (\cref{def:aamp});
        \item Compute a data structure that can respond to $\sq(TAS)$ queries in $\bigO{1}$ time;
    \end{enumerate}
    \item[Input:] A degree \diff{$2d$} polynomial \diff{$p(x) = \sum_{i = 0}^{d} a_{2i} T_{2i}(x)$} given as its coefficients \diff{$a_{2i}$}. \\
    \diff{Compute all $\tilde{a}_{2k} = a_{2k} - a_{2k+2} + \cdots \pm a_{2d}$.}
    \item[Clenshaw iteration:]
    Let $r = \Theta(d^4\fnorm{A}^2(s+t)\frac{1}{\delta}) = \Theta(d^6\frac{\fnorm{A}^4}{(\mu\eps)^2\delta}\log(\frac{\fnorm{A}}{\delta\norm{A}}))$.
    This phase will succeed with probability $\geq 1-\delta$.
    Starting with $v_{d+1} = v_{d+2} = \vec{0}^s$ and going until $v_0$,
    \begin{enumerate}[label=I\arabic*., ref=A\thealgorithm.I\arabic*]
        \item Let $B^{(k)} = \ssketch_r(TAS)$ and $B_\dagger^{(k)} = \ssketch_r((TAS)^\dagger)$ (\cref{def:ssketch});
        \item Compute $v_k = 2(2B_\dagger^{(k)} B^{(k)} - I)v_{k+1} - v_{k+2} + \diff{4\tilde{a}_{2k+2} B_\dagger^{(k)} T b}$.
    \end{enumerate}
    \item[Output:] 
    Output $x = \tfrac12 S(v_0 - v_1)$ \diff{and $\eta = \tilde{a}_0$} satisfying $\Norm{ A x \diff{+ \eta b} - p(\diff{A^\dagger}) b } \leq \eps\supnorm{p}\norm{b}$.
    \end{description}
  \end{algorithm}
\end{mdframed}

\paragraph{Pre-processing sketches, correctness.}

Though the sketches are chosen to be slightly different because of the different parity, all of the sketching bounds used for the odd SVT analysis hold here, up to rescaling $s, t$ by constant factors.
This includes \cref{eq:as-bound,,eq:fas-bound,,eq:aa-amp,,eq:as-op-bound,,eq:ftas-bound,,eq:asas-amp,,eq:tas-bound}.
What remains (\cref{eq:sb-bound,,eq:ab-amp}) have analogues that follow from the same argument:
\begin{align}
    \norm{Tb}^2 &\leq 2\norm{b}^2 \tag{$\|Tb\|$ bd} \label{eq:tb-bound} \\
    \norm*{(AS)^\dagger b - (TAS)^\dagger Tb} &\leq \frac{\mu\eps}{d} \norm{b} \tag{$(AS)^\dagger b$ AMP} \label{eq:asb-amp}
\end{align}
All this holds with probability $\geq 1-\delta$.

\paragraph{One (even) Clenshaw iteration.}

As with the odd case, we perform the recurrence on $u_k$ by updating $v_k$ such that $u_k = (AS) v_k$.
The error analysis proceeds by bounding
\begin{align}
4A A^\dagger u_{k+1} -\,& 2u_{k+1} - u_{k+2} + 4\tilde{a}_{2k+2} A A^\dagger b \nonumber\\
&\approx_1 4 AS(AS)^\dagger u_{k+1} - 2u_{k+1} - u_{k+2} + 4\tilde{a}_{2k+2} A A^\dagger b \nonumber\\
    &= AS \Paren{ 4(AS)^\dagger(AS) v_{k+1} - 2v_{k+1} - v_{k+2}} + 4\tilde{a}_{2k+2}AA^\dagger b \nonumber\\
    &\approx_2 AS \Paren{ 4(TAS)^\dagger (TAS) v_{k+1} -2v_{k+1} - v_{k+2} } + 4\tilde{a}_{2k+2}A A^\dagger b \nonumber\\
    &\approx_3 AS \Paren{ 4(TAS)^\dagger (TAS) v_{k+1} -2v_{k+1} - v_{k+2} + 4\tilde{a}_{2k+2}(AS)^\dagger b } \nonumber\\
    &\approx_4 AS \Paren{ 4(TAS)^\dagger B^{(k)} v_{k+1} -2v_{k+1} - v_{k+2} + 4\tilde{a}_{2k+2}(AS)^\dagger b } \nonumber\\
    &\approx_5 AS \Paren{ 4B_\dagger^{(k)} B^{(k)} v_{k+1} -2v_{k+1} - v_{k+2} + 4\tilde{a}_{2k+2}(AS)^\dagger b } \nonumber\\
    &\approx_6 AS \Paren{ 4(TAS)^\dagger (TAS) v_{k+1} -2v_{k+1} - v_{k+2} + 4\tilde{a}_{2k+2}(TAS)^\dagger Tb } \nonumber\\
    &\approx_7 AS \Paren{ 4B_\dagger^{(k)} B^{(k)} v_{k+1} -2v_{k+1} - v_{k+2} + 4\tilde{a}_{2k+2} B_\dagger^{(k)} Tb }. \label{eq:even-derivation}
\end{align}
So, our update is
\begin{align} \label{eq:even-v-derivation}
    v_k = 4B_\dagger^{(k)} B^{(k)} v_{k+1} -2v_{k+1} - v_{k+2} + 4\tilde{a}_{2k+2} B_\dagger^{(k)} Tb.
\end{align}
As before, we can label the error incurred by each approximation in \cref{eq:even-derivation} with $\eps_1, \ldots, \eps_7$.
The approximations in $\eps_1, \eps_2, \eps_4, \eps_5$ do not involve the constant term and so can be bounded identically to the odd case.
\begin{align*}
    \eps_1 &\leq 4\|AA^\dagger - AS(AS)^\dagger \| \|u_{k+1}\| \leq 4\tfrac{\mu\eps}{d}\norm{A}\|u_{k+1}\| \\
    \eps_2 &\leq 4\|AS\|\|((AS)(AS)^\dagger - (TAS)(TAS)^\dagger)v_{k+1}\| \leq 16\tfrac{\mu\eps}{d}\norm{A}^2\|v_{k+1}\| \\
    \eps_4 &\leq 4\|AS (TAS)^\dagger(TAS - B^{(k)})v_{k+1}\| 
    \leq d^{-5/2}\mu\eps\norm{A}\norm{v_{k+1}} \\
    \eps_5 &\leq 4\norm*{AS ((TAS)^\dagger - B_\dagger^{(k)})B^{(k)} v_{k+1}}
    \leq d^{-5/2}\mu\eps\norm{v_{k+1}}.
\intertext{The approximation in $\eps_3$ goes through with a slight modification.}
    \eps_3 &\leq \abs{4\tilde{a}_{2k+2}}\|AA^\dagger b - AS (AS)^\dagger b\| \leq 4\abs{\tilde{a}_{2k+2}}\tfrac{\mu\eps}{d}\norm{A}\norm{b} \tag*{by \cref{eq:aa-amp}}
\intertext{The approximations $\eps_6$ and $\eps_7$ follow from similar arguments.}
    \eps_6 &\leq \abs{4\tilde{a}_{2k+2}} \norm{AS}\norm*{(AS)^\dagger b - (TAS)^\dagger Tb} \\
    &\leq \abs{8\tilde{a}_{2k+2}} \tfrac{\mu\eps}{d}\norm{b} \tag*{by \cref{eq:as-bound,eq:asb-amp}} \\
    \eps_7 &\leq \abs{4\tilde{a}_{2k+2}} \norm*{AS((TAS)^\dagger - B_\dagger^{(k)}) Tb} \\
    &\leq \abs{4\tilde{a}_{2k+2}} \fnorm{AS}\fnorm{TAS}\norm{Tb}\sqrt{d/(r\delta)} \tag*{by \cref{cor:ssketch-simple}} \\
    &\leq \abs{\tilde{a}_{2k+2}} d^{-5/2}\mu\eps\norm{b} \tag*{by \cref{eq:tas-bound,eq:tb-bound}}
\end{align*}
In summary, we can view the iterate of \ref{alg:itr} as computing
\begin{align} 
    \widetilde{u}_k &= 2(2AA^\dagger- \Id)\widetilde{u}_{k+1} - \widetilde{u}_{k+2} + 4\tilde{a}_{2k+2} AA^\dagger b + \eps^{(k)}, \label{eq:approx-even-matrix-clenshaw}
\end{align}
where $\eps^{(k)} \in \C^m$ is the error of the approximation in the iterate \cref{eq:itr-derivation}, and
\begin{align*}
    \norm*{\eps^{(k)}} &\leq \eps_1 + \eps_2 + \eps_3 + \eps_4 + \eps_5 + \eps_6 + \eps_7 \\
    &\lesssim \tfrac{\mu\eps}{d}\Big(\norm{v_{k+1}} + \abs{\tilde{a}_{2k+2}}\norm{b}\Big).
\end{align*}
Upon applying a union bound, we see that this bound on $\eps^{(k)}$ holds for every $k$ from $0$ to $d-1$ with probability $\geq 1-4\delta$.

\paragraph{Error accumulation across iterations.}
The error accumulates in the same way as in the odd setting.
Using the formulation of the iterate from \cref{eq:approx-even-matrix-clenshaw}, we notice that this is the standard Clenshaw iteration \cref{scalar-clenshaw-recursion} with $x$ replaced with $T_2(A^\dagger) = 2AA^\dagger - I$ and $a_k$ replaced with $4\tilde{a}_{2k+2}AA^\dagger b + \eps^{(k)}$.
Following \cref{rmk:scalar-stabilities} and \cref{parity-clenshaw-correctness}, we conclude that the output of \cref{algo:even-dquantizing} satisfies
\begin{align*}
    \widetilde{u}_k &= \sum_{i=k}^d U_{i-k}(T_2(A^\dagger))(4\tilde{a}_{2i+2}AA^\dagger b + \eps^{(i)}); \\
    \widetilde{u}
    &\coloneqq \tilde{a}_0 + \frac12(\widetilde{u}_0 - \widetilde{u}_1) \\
    &= \tilde{a}_0 + \sum_{i=0}^d \tfrac12(U_{i}(T_2(A^\dagger)) - U_{i-1}(T_2(A^\dagger)))(4\tilde{a}_{2i+2}AA^\dagger b + \eps^{(k)}) \\
    &= \sum_{i=0}^d a_{2i}T_{2(i-k)}(A^\dagger)b + \sum_{i=0}^d \tfrac12(U_{i}(T_2(A^\dagger)) - U_{i-1}(T_2(A^\dagger)))\eps^{(k)}.
\end{align*}
In other words, after completing the iteration, we have a vector $\widetilde{u}$ such that
\begin{align} 
    \norm*{\widetilde{u} - p(A^\dagger)b}
    &\leq \norm[\Big]{\sum_{i=0}^d \tfrac12(U_{i}(T_2(A^\dagger)) - U_{i-1}(T_2(A^\dagger)))\eps^{(k)}} \nonumber\\
    &\leq \sum_{i=0}^d (2i+1)\norm*{\eps^{(k)}} \nonumber\\
    &\lesssim \mu\eps\sum_{k=0}^d(\|v_{k+1}\| + \abs{\tilde{a}_{2k+2}}\norm{b}) \nonumber\\
    &\leq \eps\norm{b} + \mu\eps\sum_{k=1}^d\|v_k\|.
    \label{eq:even-clenshaw-error-sum}
\end{align}
In the last line, we use the assumption \cref{even-svtcond-ai}.
It suffices to bound the $v_k$'s.
Recalling from \cref{eq:even-v-derivation}, the recursions defining them is
\begin{align*}
    v_k &= 4B_\dagger^{(k)} B^{(k)} v_{k+1} -2v_{k+1} - v_{k+2} + 4\tilde{a}_{2k+2} B_\dagger^{(k)} Tb \\
    &= 2(2 (TAS)^\dagger(TAS) - I) v_{k+1} - v_{k+2} + 4\tilde{a}_{2k+2} B_\dagger^{(k)} Tb + 4(B_\dagger^{(k)}B^{(k)} - (TAS)^\dagger(TAS)) v_{k+1}.
\intertext{Following \cref{rmk:scalar-stabilities}, this solves to}
    v_k &= \sum_{i=k}^d U_{i-k}(T_2(TAS))\parens[\Big]{4\tilde{a}_{2i+2} B_\dagger^{(i)} Tb + 4(B_\dagger^{(i)}B^{(i)} - (TAS)^\dagger(TAS)) v_{i+1}}.
\end{align*}
From here, the identical analysis applies.
\begin{align*}
    \bar{v}_k \coloneqq \expecf{[k,d]}{v_k } &= \sum_{i=k}^d U_{i-k}(T_2(TAS))4\tilde{a}_{2i+2} (TAS)^\dagger Tb \\
    \norm*{\bar{v}_k} &\leq \norm[\Big]{\sum_{i=k}^d U_{i-k}(T_2(TAS))4\tilde{a}_{2i+2} (TAS)^\dagger}\norm{Tb} \leq 4\tfrac{1}{\mu d}\norm{b}.
\end{align*}
We now compute the second moment of $v_k$.
The main difference from the odd setting is that there is an additional term, $4\tilde{a}_{2k+2}(B_\dagger^{(k)} - (TAS)^\dagger)Tb$, where 
\begin{align*}
    \E_k\norm{(B_\dagger^{(k)} - (TAS)^\dagger)Tb}^2
    &\leq \frac{1}{r}\tr(I_s)\fnorm{TAS}^2\norm{Tb}^2 \tag*{by \cref{ssketch:moment}}\\
    &\leq \frac{\delta}{1000d^4}\norm{b}^2.\tag*{by \cref{eq:tb-bound}, $r = \Theta(d^4\fnorm{A}^2(s+t)\frac1\delta)$}
\end{align*}
We use this and the derivation in \cref{eq:bbv-variance} to conclude
\begin{align}
    &\quad \expecf{[k,d]}{ \Norm{v_k - \bar{v}_k}^2} \nonumber\\
    &= \expecf{[k,d]}{\norm[\Big]{\sum_{i=k}^d U_{i-k}(T_2(TAS))4(\tilde{a}_{2i+2} (B_\dagger^{(i)} - (TAS)^\dagger) Tb + (B_\dagger^{(i)}B^{(i)} - (TAS)^\dagger(TAS)) v_{i+1})}^2} \nonumber \\
    &= \sum_{i=k}^d \expecf{[k,d]}{\Norm{U_{i-k}(T_2(TAS))4(\tilde{a}_{2i+2} (B_\dagger^{(i)} - (TAS)^\dagger) Tb + (B_\dagger^{(i)}B^{(i)} - (TAS)^\dagger(TAS)) v_{i+1})}^2} \nonumber \\
    &\leq 16\sum_{i=k}^d \norm[\Big]{U_{i-k}(T_2(TAS))}^2 \expecf{[k,d]}{\norm{\tilde{a}_{2i+2} (B_\dagger^{(i)} - (TAS)^\dagger) Tb + (B_\dagger^{(i)}B^{(i)} - (TAS)^\dagger(TAS)) v_{i+1}}^2} \nonumber \\
    &\leq 32\sum_{i=k}^d e^2d^2 \expecf{[k,d]}{\norm{\tilde{a}_{2i+2} (B_\dagger^{(i)} - (TAS)^\dagger) Tb}^2 + \norm{(B_\dagger^{(i)}B^{(i)} - (TAS)^\dagger(TAS)) v_{i+1}}^2} \nonumber \\
    &\leq 32\sum_{i=k}^d e^2d^2 \expecf{[k,d]}{\frac{\delta\abs*{\tilde{a}_{2i+2}}^2}{1000d^4}\norm{b}^2 + \frac{\delta}{1000d^4}\norm{A}^2\norm{v_{k+1}}^2 } \nonumber \\
    &\leq \frac{\delta}{4d^2}\norm{b}^2\sum_{i=k}^d \abs*{\tilde{a}_{2i+2}}^2 + \frac{\delta\norm{A}^2}{4d^2}\sum_{i=k}^d \expecf{[k,d]}{\norm{v_{k+1}}^2 } \nonumber \\
    &\leq \frac{\delta}{4\mu^2d^3}\norm{b}^2 + \frac{\delta\norm{A}^2}{4d^2}\sum_{i=k}^d \expecf{[k,d]}{\norm{v_{k+1}}^2 } \tag*{by \cref{even-svtcond-ai}} \\
    &= \frac{\delta}{4\mu^2d^3}\norm{b}^2 + \frac{\delta\norm{A}^2}{4d^2}\sum_{i=k}^d \parens*{\expecf{[i+1,d]}{\norm{v_{i+1} - \bar{v}_{i+1}}^2} + \norm{\bar{v}_{i+1}}^2} \nonumber \\
    &\leq \frac{\delta}{4\mu^2d^3}\norm{b}^2 + \frac{\delta\norm{A}^2}{4d^2}\sum_{i=k}^d \parens*{\expecf{[i+1,d]}{\norm{v_{i+1} - \bar{v}_{i+1}}^2} + \frac{16}{\mu^2 d^2}\norm{b}^2} \tag*{by \cref{eq:bbv-variance}} \\
    &\leq \frac{\delta\norm{A}^2}{4d^2}\sum_{i=k}^d \parens*{\expecf{[i+1,d]}{\norm{v_{i+1} - \bar{v}_{i+1}}^2} + \frac{17}{\mu^2 d^2}\norm{b}^2} \nonumber
\intertext{By the same recurrence argument, this is }
    &\leq \frac{9\delta}{\mu^2 d^3}\norm{A}^2\norm{b}^2.
\end{align}
We have shown that $\E[\norm{v_k - \bar{v}_k}^2] \leq \frac{9\delta}{d}\parens*{\frac{\norm{A}\norm{b}}{\mu d}}^2$.
By Markov's inequality, with probability $\geq 1 - \delta/100$, we have that for all $k$, $\norm{v_k} \lesssim \frac{\norm{b}}{\mu d}$.
Returning to the final error bound \cref{eq:even-clenshaw-error-sum},
\begin{align}
    \norm*{\widetilde{u} - p(A^\dagger)b}
    &\lesssim \eps\norm{b} + \mu\eps\sum_{k=1}^d\norm{v_k}
    \lesssim \eps\norm{b}.
\end{align}

\paragraph{Output description properties.}
The argument from the odd case shows that
\begin{equation*}
    \sum_j \norm*{A_{*,j}}^2\abs*{x_j}^2
    \lesssim \frac{\eps^2\norm{b}^2}{d^4\log(\frac{\fnorm{A}}{\delta\norm{A}})}
\end{equation*}
and $\znorm{x} \leq s$.
By \cref{Ax-sampling} we get the desired bounds.
Notice that $\tilde{a}_0 = a_0 - a_2 + a_4 - \cdots \pm a_{2d} = p(0)$.

\ewin{Reminder to eventually add stuff that gets $O(\nnz(A) + n \cdots)$}

\subsection{Bounding Iterates of Singular Value Transformation}
We give bounds on the value of $\mu$ that suffices for a generic polynomial.
Though bounds may be improvable for specific functions, they are tight up to $\log$ factors for Chebyshev polynomials $T_k(x)$, and improve by a factor of $d$ over naive coefficient-wise bounds.

\begin{proposition} \label{odd-iterate-mu-bound}
    Let $p(x)$ be an odd polynomial with degree $2d+1$, with a Chebyshev series expansion of $p(x) = \sum_{i=0}^d a_{2i+1}T_{2i+1}(x)$.
    Then $\sum_{i=0}^d \abs*{a_{2i+1}} \leq 2d\supnorm{p}$ and, for all integers $k \leq d$,
    \begin{align*}
        \supnorm[\Big]{\sum_{i=k}^d a_{2i+1}U_{i-k}(T_2(x))} \lesssim (d-k+1)(1+\log^2(d+1))\supnorm{p}.
    \end{align*}
\end{proposition}

\begin{proof}
Without loss of generality, we take $\supnorm{p} = 1$.
By \cref{lem:coefficient-bound}, $\abs*{a_{2i+1}} \leq 2$, giving the first conclusion.
Towards the second conclusion, we first note that
\begin{align*}
    \supnorm[\Big]{\sum_{i=k}^d a_{2i+1} U_{i-k}(T_2(x))}
    = \supnorm[\Big]{\sum_{i=k}^d a_{2i+1} U_{i-k}(x)},
\end{align*}
since $T_2$ maps $[-1, 1]$ to $[-1, 1]$.
Then, we use the strategy from \cref{thm:scalar-clenshaw-iterate-bound}, writing out $U_{i-k}$ with \cref{eq:u-to-t} and then bounding the resulting coefficients.
Note that, by convention, $a_i$ and $U_i$ are zero for any integer $i \in \mathbb{Z}$ for which they are not defined.
\begin{align*}
    \sum_{i=k}^d a_{2i+1} U_{i-k}(x)
    &= \sum_i a_{2i+1} U_{i-k}(x) \\
    &= \sum_i a_{2i+1} \sum_{j \geq 0} T_{i-k-2j}(x)(1 + \iver{i - k - 2j \neq 0}) \\ 
    &= \sum_{j \geq 0} \sum_i a_{2i+1} T_{i-k-2j}(x)(1 + \iver{i - k - 2j \neq 0}) \\ 
    &= \sum_{j \geq 0} \sum_i a_{2(i + k + 2j) + 1}  T_{i}(x)(1 + \iver{i \neq 0}) \\
    &= \sum_i T_{i}(x) (1 + \iver{i \neq 0}) \sum_{j \geq 0}  a_{2(i + k + 2j) + 1} \\
    \supnorm[\Big]{\sum_{i=k}^d a_{2i+1} U_{i-k}(x)}
    &\leq \sum_i (1 + \iver{i \neq 0}) \supnorm{T_i(x)} \abs*[\Big]{\sum_{j \geq 0}  a_{2(i + k + 2j)+1}} \\
    &= \sum_{i = 0}^{d-k} (1 + \iver{i \neq 0}) \abs*[\Big]{\sum_{j \geq 0}  a_{2(i + k) + 1 + 4j}} \\
    &\leq \sum_{i = 0}^{d-k} (1 + \iver{i \neq 0}) (32 + 8\log^2(2d+2)) \tag*{by \cref{cor:four-step-sums}}\\
    &= (2(d-k)+1)(32 + 8\log^2(2d+2)) \\
    &\lesssim (d-k+1)(1+\log^2(d+1))
\end{align*}
\end{proof}

\begin{proposition} \label{even-iterate-mu-bound}
    Let $p(x)$ be an even polynomial with degree $2d$, written as $p(x) = \sum_{i=0}^d a_{2i}T_{2i}(x)$.
    Let $\tilde{a}_{2k} \coloneqq a_{2k} - a_{2k+2} + \cdots \pm a_{2d}$.
    Then
    \begin{align*}
        \sum_{i=0}^d \abs*{\tilde{a}_{2i}} &\leq 4(d+1)(1+\log(d+1))\supnorm{p}, \\
        \text{and }\sum_{i=0}^d \abs*{\tilde{a}_{2i}}^2 &\leq 32(d+1)(1+\log^2(d+1))\supnorm{p},
    \end{align*}
    and, for all integers $k \leq d$,
    \begin{align*}
        \supnorm[\Big]{\sum_{i=k}^d 4\tilde{a}_{2i+2}x\cdot U_{i-k}(T_2(x))} \lesssim (d-k+1)(1+\log(d+1))\supnorm{p}.
    \end{align*}
\end{proposition}
\begin{proof}
Without loss of generality, we take $\supnorm{p} = 1$.
Consider the polynomial $q(x) = \sum_{i=0}^d b_i T_i(x)$ where $b_i \coloneqq a_{2i}$.
By \cref{eq:t-composition}, $p(x) = q(T_2(x))$, and because $T_2$ maps $[-1, 1]$ to $[-1, 1]$, $\supnorm{q} = \supnorm{q} = 1$.
Then by \cref{parity-two-tailsums},
\begin{align*}
    \abs*{\tilde{a}_{2k}}
    = \abs*{b_k - b_{k+1} + \cdots \pm b_d}
    \leq \parens[\Big]{4 + \frac{4}{\pi^2}\log(\max(k,1))}\supnorm{q}
    = 4 + \frac{4}{\pi^2}\log(\max(k,1)).
\end{align*}
From this follows the first two conclusions.
For the final conclusion, by \cref{eq:u-composition},
\begin{align*}
    \sum_{i=k}^d 4 \tilde{a}_{2i+2} x U_{i-k}(T_2(x))
    = \sum_{i=k}^d 2 \tilde{a}_{2i+2} U_{2(i-k)+1}(x)
    = \sum_i 2 \tilde{a}_{2i+2} U_{2(i-k)+1}(x).
\end{align*}
From here, we proceed as in the proof of \cref{thm:scalar-clenshaw-iterate-bound}.
\begin{align*}
    \sum_i \tilde{a}_{2i+2} U_{2(i-k)+1}(x)
    &= \sum_i \sum_{r \geq 0} (-1)^r b_{i+r+1} 2\sum_s T_{2s+1}(x) \iver{s \leq i-k}\\
    &= 2\sum_s T_{2s+1}(x) \sum_i \sum_{r\geq 0} \iver{s \leq i-k} (-1)^r b_{i+r+1} \\
    &= 2\sum_s T_{2s+1}(x) \sum_t b_t \sum_i \sum_r \iver{r \geq 0} \iver{t = i+r+1}\iver{s \leq i-k} (-1)^r \\
    &= 2\sum_s T_{2s+1}(x) \sum_t b_t \sum_r \iver{r \geq 0} \iver{s \leq t-r-1-k} (-1)^r \\
    &= 2\sum_s T_{2s+1}(x) \sum_t b_t \sum_{r=0}^{t-s-k-1} (-1)^r \\
    &= 2\sum_s T_{2s+1}(x) \sum_t b_t \iver{t-s-k-1 \in 2\mathbb{Z}_{\geq 0}} \\
    &= 2\sum_s T_{2s+1}(x) \sum_{t \geq 0} b_{2t+s+k+1} \\
    \supnorm[\Big]{\sum_i \tilde{a}_{2i+2} U_{2(i-k)+1}(x)}
    &\leq 2\sum_{s = 0}^{d-k} \abs[\Big]{\sum_{t \geq 0} b_{2t+s+k+1}} \\
    &\leq 2\sum_{s = 0}^{d-k} \parens[\Big]{4 + \frac{4}{\pi^2}\log(\max(s+k, 1))} \tag*{by \cref{parity-two-tailsums}}\\
    &\lesssim (d-k+1)(1+\log(d+1))
\end{align*}
\end{proof}

%!TEX root = main.tex

\section{Dequantizing QML Algorithms}

In this section, we focus on providing classical algorithms for regression, low-rank approximation and Hamiltonian simulation.
We show that plugging in the appropriate polynomials into our algorithm corresponding to Theorem~\ref{main-svt-alg} results in the fastest algorithms for these problems.  

\subsection{Recommendation Systems}
\label{subsec:rec-sys}

Given a matrix $A$, the quantum recommendation system problem, introduced by Kerenidis and Prakash~\cite{kp17}, is to output a sample from a row of a low-rank approximation of $A$.
The quantum algorithm for this approaches this in a different way from the standard classical algorithms, by taking a polynomial approximation of a threshold function and using quantum linear algebra techniques to apply it to $A$.
The threshold function is as follows:

\begin{lemma}[Polynomial approximations of the rectangle function {\cite[Lemma 29]{gslw18}}] \label{rec-polynomial}
    Let $\delta', \eps' \in (0, \frac12)$ and $t \in [-1,1]$.
    There exists an even polynomial $p \in \R[x]$ of degree $\bigO{\log(\frac{1}{\eps'})/\delta'}$, such that $\abs*{p(x)} \leq 1$ for all $x \in [-1, 1]$, and
    \begin{align*}
        p(x) \in 
        \begin{cases}
            [0, \eps'] & \text{for all } x \in [-1, -t-\delta'] \cup [t+\delta', 1]\text{, and} \\
            [1-\eps', 1] & \text{for all } x \in [-t-\delta', t+\delta']
        \end{cases}
    \end{align*}
\end{lemma}

In particular, one obtains the quantum recommendation systems result from QSVT is by preparing the state $\ket{A_{i,*}}$ and applying a block-encoding of $p(A)$ to get a copy of $\ket{p(A)A_{i,*}}$, where $p$ is the polynomial from Lemma~\ref{rec-polynomial}.\footnote{The original paper goes through a singular value estimation procedure; see the discussion in Section 3.6 of \cite{gslw18} for more details.}
We obtain the same guarantee classically:

\begin{corollary}[Dequantizing recommendation systems]
\label{cor:dq-rec-sys}
    Suppose we are given a matrix $A\in \mathbb{C}^{m \times n}$ such that $0.01 \leq \norm{A} \leq 1$ and $0 < \eps, \sigma < 1$, with $\bigO{\nnz(A)}$ time pre-processing.
    Then there exists an algorithm that, given an index $i \in [m]$, computes a vector $y$ such that $\Norm{ y - p(A) A_{i,*}  } \leq \eps\norm{A_{i,*}}$ with probability at least $0.9$, where $p(x)$ is the rectangle polynomial from \cref{rec-polynomial}, with parameters $t = \sigma$, $\delta' = \sigma/6$, $\eps' = \eps$.
    Further, the running time to compute such a description of $y$ is
    \begin{align*}
        \bigOt[\Big]{\frac{\fnorm{A}^4}{\sigma^{11}\eps^2}},
    \end{align*}
    and from this description we can sample from $y$ in $\bigOt*{\frac{\fnorm{A}^4\norm{A_{i,*}}^2}{\sigma^8\eps^2\norm{y}^2}}$ time.
\end{corollary}

Thinking of $p(A)A$ as the low-rank approximation of $A$ this algorithm gives access to, we see that the error guarantee of \cref{cor:dq-rec-sys} implies the error guarantee achieved by prior quantum-inspired algorithms~\cite{tang18a,cglltw19}.

\begin{remark}[Detailed comparison to {\cite[Theorem~26]{cchw20}}] \label{cchw-comparison-recsystems}
The recommendation systems algorithm of Chepurko, Clarkson, Horesh, Lin, and Woodruff outputs a rank-$k$ matrix $Z$ such that
\begin{align} \label{eq:classical-lra}
    \fnorm{A - Z}^2 \leq (1 + \bigO{\eps})\fnorm{A - A_k}^2,
\end{align}
where $A_k$ is the best rank-$k$ approximation to $A$, in $\bigOt{\frac{k^3}{\eps^{6}} + \frac{k\fnorm{A}^4}{(\fnorm{A - A_k}^2/k + \sigma_k^2)\sigma_k^2\eps^4}} = \bigOt{\frac{k^3}{\eps^{6}} + \frac{k\fnorm{A}^4}{\sigma_k^4\eps^4}}$ time\footnote{For ease of comparison, we suppose we know $\sigma_k$ and $\fnorm{A - A_k}^2$ exactly, in which case $\psi_\lambda \leq \psi_k$ and $k \leq \psi_k$, so the running time is $\bigOt{k^3/\eps^6 + k\psi_\lambda\psi_k/\eps^4}$, using the notation from \cite{cchw20}.}, under the assumption that $\eps \leq \frac{\fnorm{A_k}^2}{\fnorm{A}^2}$.
This guarantee is the typical one for low-rank approximation problems.
The authors use that $\ell_2^2$ importance sampling sketches oversample ridge leverage score sketch in certain parameter regimes, which is why dependences on the norm of the low-rank approximation $\fnorm{A_k}$ and its residual $\fnorm{A - A_k}$ appear.
For the recommendation systems task, it's not clear whether this style of guarantee is as good as the guarantee achieved by the quantum algorithm.
The quantum guarantee implies the classical one in the regime where additive and relative error are comparable: $\fnorm{A - Z} \leq \fnorm{A - p(A)A} + \fnorm{p(A)A - Z} \leq \fnorm{A - A_k} + \bigO{\eps\fnorm{A}}$, where the second inequality uses that $p(A)A$ is approximately a low-rank approximation that thresholds at $\sigma$, and so is at least as good of an approximation to $A$ as $A_k$, at the cost of being potentially higher rank.
Going from the classical guarantee to the quantum one loses a quadratic factor: it can be the case that the guarantee from \cref{eq:classical-lra} holds but $\fnorm{A_k - Z} = \bigOmega{\sqrt{\eps}\fnorm{A}}$.\footnote{
    Take, for example, $A = [\begin{smallmatrix}
        2 \\ & 1
    \end{smallmatrix}]$ and the rank $k=1$ approximation $Z = zz^\dagger$ where $z = [\begin{smallmatrix}
        \sqrt{2+\sqrt{\eps}} \\ \sqrt{\eps}
    \end{smallmatrix}]$.
    More generally, by \cite[Theorem 4.7]{tang18a}, \cref{eq:classical-lra} implies $\fnorm{Z - A_{\geq\sigma}} = \bigO{\sqrt{\eps}\fnorm{A - A_k}}$ (for certain forms of $Z$).
}
To achieve the precise guarantee of the quantum algorithm would require setting $\eps \leftarrow \eps^2$, giving a running time of
\begin{align*}
    \bigOt[\Big]{\frac{k^3}{\eps^{12}} + \frac{k\fnorm{A}^4}{\sigma_k^4\eps^8}}.
\end{align*}
We improve over this running time when $\eps = \bigO{\frac{k^{3/10}\sigma^{11/10}}{\fnorm{A}^{4/10}}}$ or when $\eps = \bigO{k^{1/6}\sigma^{7/6}}$.
\end{remark}

\subsection{Linear Regression}

Next, we consider linear regression: given a matrix $A$, a vector $b$, and a parameter $\kappa > 0$, the quantum algorithm obtains a state close to $\ket{f(A)b}$, where $f(x)$ is a polynomial approximation to $1/x$ in the interval $[-1, -1/\kappa] \cup [1/\kappa , 1]$~\cite[Theorem 41]{gslw18}.
Formally, this polynomial is the Chebyshev truncation of $(1 - (1-x^2)^b)/x$ for $b = \ceil*{\kappa^2\log(\kappa/\eps)}$, multiplied by the rectangle function from \cref{rec-polynomial} to keep the truncation bounded.

\begin{lemma}[Polynomial approximations of $1/x$, {\cite[Lemma 40]{gslw18}}, following \cite{cks17}] \label{inv-polynomial}
    Let $\kappa > 1$ and $0 < \eps < \tfrac12$.
    There is an odd polynomial $p(x)$ of degree $\bigO{\kappa \log(\frac{\kappa}{\eps})}$ with the properties that
    \begin{itemize}
        \item $\abs{p(x) - 1/x} \leq \eps$ for $x \in [-1, -1/\kappa] \cup [1/\kappa, 1]$;
        \item $\abs{p(x)} = \bigO{\kappa\log\frac{\kappa}{\eps}}$.
    \end{itemize}
\end{lemma}

We obtain the following corollary by invoking \cref{main-svt-alg} with the polynomial from \cref{inv-polynomial}.

\begin{corollary}[Dequantizing linear regression]
\label{cor:dequant-regression}
    Given a matrix $A \in \mathbb{C}^{m \times n}$ such that $0.01 \leq \norm{A} \leq 1$; a vector $b \in \mathbb{C}^m$; and parameters $\eps, 1/\kappa$ between $0$ and $1$, with $\bigO{\nnz(A) + \nnz(b)}$ time pre-processing.
    Then there exists an algorithm that outputs a vector $y$ such that $\norm{ y - p(A)b } \leq \eps\norm{b}/\kappa$ with probability at least $0.9$, where $p$ is the polynomial from \cref{inv-polynomial} with parameters $\kappa, \eps$.
    Further, the running time to compute a description of $y$ is
    \begin{align*}
        \bigOt[\Big]{\frac{\kappa^{11}\fnorm{A}^4}{\eps^2}},
    \end{align*}
    and from this description we can output a sample from $y$ in $\bigOt*{\frac{\kappa^{10}\fnorm{A}^4\norm{b}^2}{\eps^2\norm{y}^2}}$ time.
\end{corollary}

\begin{remark}[Comparison to {\cite[Theorem 24]{cchw20}}]
Chepurko, Clarkson, Horesh, Lin, and Woodruff solves the regularized regression problem, where the goal is to output a vector close to $x^* = \arg\min_{x} \norm{Ax - b}^2 + \lambda\norm{x}^2$ for a given $\lambda > 0$.
Their algorithm outputs a vector $y = (SA)^\dagger v$ such that
\begin{align*}
    \norm{y - x^*} \leq \eps\Big(\norm{x^*} + 2\frac{\norm{b}\norm{x^*}}{\norm{AA^+x^*}} + \frac{1}{\sqrt{\lambda}} \norm{\Pi_{\lambda, \bot} b}\Big),
\end{align*}
where $\Pi_{\lambda, \bot}$ is the projection onto the left singular vectors of $SA$ whose singular values are at most $\sqrt{\lambda}$.
This algorithm achieves a running time of $\widetilde{O}(\fnorm{A}^4(\norm{A}^2 + \lambda)^2\log(d)/((\sigma^2 + \lambda)^4\eps^4))$, where $\sigma$ is the minimum singular value of $A$.
Since $x^* = f(A)b$ is singular value transformation for the function $f(x) = x/(x^2 + \lambda)$, this is different from the setting we consider.
They become comparable when we take $\lambda \to 0$, in which case we must assume that $b$ is in the image of $A$, so that $\norm{\Pi_{\lambda, \bot} b}$ tends to zero, and we depend on the minimum singular value of $A$.
As discussed in \cref{intro-regression-comparison}, these assumptions are strong, and different from the setting we consider.
\end{remark}

\subsection{Hamiltonian Simulation}

Finally, we give a classical algorithm for Hamiltonian simulation: for a Hermitian matrix $H \in \mathbb{C}^{n\times n}$; a vector $b \in \mathbb{C}^n$; and a time $t \in \mathbb{R}$, the goal is to produce a description of $e^{\ii Ht}b$.
We begin by describing our polynomial approximation to $e^{\ii x} = \cos(x) + \ii \sin(x)$.

\begin{lemma}[Polynomial approximation to trigonometric functions, {\cite[Lemma 57]{gslw18}}] \label{trig-polynomials}
Given $t \in \mathbb{R}$ and $\eps \in (0, 1/e)$, let $r= \bigTheta{ t + \frac{\log(1/\eps)}{\log\log(1/\eps) } }$.
Then, the following polynomials $c(x)$ (even with degree $2r$) and $s(x)$ (odd with degree $2r+1$),
\begin{align*}
    c(x) &= J_{0}(t) - 2\sum_{i \in [1,r]} \Paren{-1}^{i} J_{2i}(t) T_{2i}(x) \\
    s(x) &= 2 \sum_{i \in [0,r]} \Paren{-1}^{i} J_{2i+1}(t) T_{2i+1}(x),
\end{align*}
satisfy that $\supnorm{\cos(tx) - c(x)} \leq \eps$ and $\supnorm{\sin(tx) - s(x)} \leq \eps$.
Here, $J_i(x)$ is the $i$-th Bessel function of the first kind~\cite[{(\href{https://dlmf.nist.gov/10.2.E2}{10.2.2})}]{DLMF}.
\end{lemma}

\begin{corollary}[Dequantizing Hamiltonian simulation]
\label{cor:dequant-ham}
    Given a Hermitian Hamiltonian $H \in \mathbb{C}^{n\times n}$ such that $0.01 \leq \norm{H} \leq 1$; a vector $b \in \mathbb{C}^n$; a time $t > 0$; and $0 < \eps < 1$, with $\bigO{\nnz(A) + \nnz(b)}$ pre-processing.
    Then there exists an algorithm that outputs a vector $y$ such that $\norm{y - e^{\ii Ht}b} \leq \eps\norm{b}$ with probability $\geq 0.9$.
    Further, the running time to compute such a description of $y$ is
    \begin{align*}
        \bigOt[\Big]{\frac{t^{11}\fnorm{H}^4}{\eps^2}},
    \end{align*}
    and from this description we can output a sample from $y$ in $\bigOt[\Big]{\frac{t^{8}\fnorm{H}^4\norm{b}^2}{\eps^2\norm{y}^2}}$ time.
\end{corollary}
\begin{proof}
Let $c(x)$ and $s(x)$ be the polynomials from \cref{trig-polynomials}.
We apply \cref{main-svt-alg} to get descriptions of $c$ and $s$ such that $\norm{y_c - c(H) b} \leq \eps\norm{b}$ and $\norm{y_s - s(H) b} \leq \eps\norm{b}$.
Then
\begin{align*}
    e^{\ii Ht}b
    = \cos(Ht)b + \ii \sin(Ht)b
    \approx_{\eps\norm{b}} c(H)b + \ii s(H)b
    \approx_{\eps\norm{b}} y_c + \ii y_s.
\end{align*}
This gives us a description $\bigO{\eps}$-close to $e^{\ii Ht}$.
Using \cref{Ax-sampling}, we can get a sample from this output in the time described, by combining the two descriptions of $y_c$ and $y_s$.
\end{proof}

\section*{Acknowledgments}
\addcontentsline{toc}{section}{Acknowledgments}

ET and AB thank Nick Trefethen, Simon Foucart, Alex Townsend, Sujit Rao, and Victor Reis for helpful discussions.
ET thanks t.f.\ for the support.
AB is supported by Ankur Moitra's ONR grant. 
ET is supported by the NSF GRFP (DGE-1762114).
%Work on this paper was initiated at the Simon Institute's ``Probability, Geometry, and Computation in High Dimensions'' program in 2020; we thank the institute for its hospitality.

\addcontentsline{toc}{section}{References}
\printbibliography

%\appendix
%\input{appendix.tex}

\end{document}